\newcommand{\branchvector}[1]{{\color{red}{$(#1)$}}}
\newcommand{\FPT}{\textsf{FPT}}
\newcommand{\WT}{\textsf{W[2]}}
\newcommand{\WO}{\textsf{W[1]}}
\newcommand{\NP}{\textsf{NP}}
\newcommand{\hdel}{{\sc $\mathcal{F}$-Deletion}}
\newcommand{\alphahdel}{{\sc $\alpha$-Simultaneous $\mathcal{F}$-Deletion}}
\newcommand{\nfvs}{{\sc $\OO(n)$-SimFVS}}
\newcommand{\logfvs}{{\sc $\OO(\log n)$-SimFVS}}
\newcommand{\alphafvs}{{\sc $\alpha$-SimFVS}}
\newcommand{\alphafvsfull}{{\sc $\alpha$-Simultaneous Feedback Vertex Set}}
\newcommand{\alphafvssol}{{$\alpha$-simultaneous feedback vertex set}}
\newcommand{\alphafvssolshort}{{$\alpha$-simfvs}}
\newcommand{\fvssolshort}{{simfvs}}
\newcommand{\disalphafvs}{{\sc Disjoint $\alpha$-SimFVS}}
\newcommand{\twofvs}{{\sc $2$-SimFVS}}
\newcommand{\twofvsfull}{{\sc $2$-Simultaneous Feedback Vertex Set}}
\newcommand{\cordate}{cordate}
\newcommand{\onetoalpha}{\{1,2,\dots,\alpha \}}
\newcommand{\onetoalphaNOi}{\{1,2,\dots,\alpha \} \setminus \{i\}}
\newcommand{\degbound}{$\OO(k^2)$}
\newcommand{\degboundT}{$\OO(k^3)$}
\newcommand{\pfd}{{\sc Planar $\cal F$-Deletion}}
\newcommand{\OO}{\mathcal{O}}
\newcommand{\defparproblem}[4]{
 \vspace{3mm}
\noindent\fbox{
  \begin{minipage}{.95\textwidth}
  \begin{tabular*}{\textwidth}{@{\extracolsep{\fill}}lr} \textsc{#1}  & {\bf{Parameter:}} #3 \\ \end{tabular*}
  {\bf{Input:}} #2  \\
  {\bf{Question:}} #4
  \end{minipage}
  }
  \vspace{2mm}
}
\title{Simultaneous Feedback Vertex Set: A Parameterized Perspective}
\author
{
    Akanksha Agrawal\inst{1}
    \and Daniel Lokshtanov\inst{1}
    \and Amer E. Mouawad\inst{1}
    \and Saket Saurabh\inst{1,2}
}
\institute
{
    University of Bergen\\
    Bergen, Norway.\\
    \email{\{akanksha.agrawal|daniel.lokshtanov|a.mouawad\}@uib.no}
    \and
    Institute of Mathematical Sciences\\
    Chennai, India.\\
    \email{saket@imsc.res.in}\\
}
\begin{document}

\maketitle

\begin{abstract}
	Given a family of graphs $\mathcal{F}$, a graph $G$, and a positive integer $k$,
the \hdel\ problem asks whether we can delete at most $k$ vertices from $G$
to obtain a graph in $\mathcal{F}$. \hdel\ generalizes many classical
graph problems such as {\sc Vertex Cover}, {\sc Feedback Vertex Set}, and {\sc Odd Cycle Transversal}.
A graph $G  = (V, \cup_{i=1}^{\alpha} E_{i})$,  where the edge set of $G$ is partitioned into $\alpha$ color classes,
is called an $\alpha$-edge-colored graph.
%A graph $G$ with its edge set $E(G)$ being partitioned into  $\cup_{i=1}^{\alpha} E_{i}$ is called {\em $\alpha$-edge-colored graph}. In other words  the edge set of $G$ is partitioned into $\alpha$ color classes.
% we say $G$ is an
%$\alpha$-edge-colored graph, i.e. the edge set of $G$ can be partitioned into $\alpha$ color classes.
A natural extension of the \hdel\ problem to edge-colored graphs is
the \alphahdel\ problem. In the latter problem, we are given an
$\alpha$-edge-colored graph $G$ and the goal is to find a set $S$ of at most $k$ vertices
such that each graph $G_i \setminus S$, where $G_i = (V, E_i)$ and $1 \leq i \leq \alpha$, is in $\mathcal{F}$.
In this work, we study \alphahdel\ for $\mathcal{F}$ being  the
family of forests. In other words, we focus on the \alphafvsfull\ (\alphafvs) problem.
%Although very tightly related to the classical {\sc Feedback Vertex Set} problem, we show
%that there are subtle and interesting differences and we believe that studying
%\alphahdel\ for other families, e.g. bipartite or
%planar graphs, might constitute a fruitful research direction.
Algorithmically, we show that, like its classical counterpart, \alphafvs\ parameterized
by $k$ is fixed-parameter tractable (\FPT) and admits a polynomial kernel, for any fixed constant $\alpha$.
In particular, we give an algorithm running in $2^ {\OO(\alpha k)}n^{\OO(1)}$ time and a kernel with
$\OO(\alpha k^{3(\alpha + 1)})$ vertices. The running time of our algorithm implies that \alphafvs\  is \FPT\ even when
$\alpha \in o(\log n)$.  We complement this positive result by showing that for $\alpha \in \OO(\log n)$, where
$n$ is the number of vertices in the input graph, \alphafvs\ becomes \WO-hard. Our positive results
answer one of the open problems posed by Cai and Ye (MFCS 2014).
%Then, we consider the dependence between $\alpha$ and both the size of our kernel
%and the running time of our algorithm. We show that
%even for $\alpha \in \OO(\log n)$, where $n$ is the number of vertices
%in the input graph, \alphafvs\ becomes \WO-hard. We prove \WO-hardness via
%a new problem of independent interest. 
\end{abstract}

\section{Introduction}
In graph theory, one can define a general family of problems as follows.
Let $\cal F$ be a collection of graphs. Given an undirected graph $G$ and a
positive integer $k$, is it possible to perform at most $k$ edit operations to $G$ so
that the resulting graph does not contain a graph from $\cal F$? Here one can define
edit operations as either vertex/edge deletions, edge additions, or edge contractions.
Such problems constitute a large fraction of problems considered under the parameterized complexity framework.
When edit operations are restricted to vertex deletions this corresponds to the \hdel\ problem, which
generalizes classical graph problems such as \textsc{Vertex Cover}~\cite{Chen:2010:IUB:1850840.1850901},
\textsc{Feedback Vertex Set}~\cite{Chen20081188,Cygan:2011:SCP:2082752.2082943,Kociumaka2014556},
\textsc{Vertex Planarization}~\cite{planardel}, {\sc Odd Cycle Transversal}~\cite{10.1109/FOCS.2012.46,Lokshtanov:2014:FPA:2685353.2566616},
{\sc Interval Vertex Deletion}~\cite{Cao:2015:IDF:2721890.2629595},
{\sc Chordal Vertex Deletion}~\cite{chordaldel}, and \pfd~\cite{conf:focs:FominLMS12,kimicalp2013}.
The topic of this paper is a generalization of \hdel\ problems to ``edge-colored graphs''.
In particular, we do a case study of an edge-colored version of the
classical \textsc{Feedback Vertex Set} problem~\cite{Garey:1979:CIG:578533}.
%\todo{add citations here}

A graph $G  = (V, \cup_{i=1}^{\alpha} E_{i})$,  where the edge set of $G$ is partitioned into $\alpha$ color classes,
is called an {\em $\alpha$-edge-colored graph}. As stated  by Cai and
Ye~\cite{caiye2014}, ``edge-colored graphs are fundamental in graph theory
and have been extensively studied in the literature, especially for
alternating cycles, monochromatic sub-graphs, heterchromatic subgraphs, and partitions''.
A natural extension of the \hdel\ problem to edge-colored graphs is
the \alphahdel\ problem. In the latter problem, we are given an
$\alpha$-edge-colored graph $G$ and the goal is to find a set $S$ of at most $k$ vertices
such that each graph $G_i \setminus S$, where $G_i = (V, E_i)$ and $1 \leq i \leq \alpha$, is in $\mathcal{F}$.
 %A graph $G$ with its edge set $E(G)$ being partitioned into  $\cup_{i=1}^{\alpha} E_{i}$ is called {\em $\alpha$-edge-colored graph}. In other words  the edge set of $G$ is partitioned into $\alpha$ color classes.
% we say $G$ is an
%$\alpha$-edge-colored graph, i.e. the edge set of $G$ can be partitioned into $\alpha$ color classes.
Cai and Ye~\cite{caiye2014} studied several problems
restricted to $2$-edge-colored graphs, where edges are colored either red or blue.
In particular, they consider the {\sc Dually Connected Induced Subgraph} problem,
i.e. find a set $S$ of $k$ vertices in $G$ such that both induced graphs
$G_{\text{red}}[S]$ and $G_{\text{blue}}[S]$ are connected,
and the {\sc Dual Separator} problem, i.e.
delete a set $S$ of at most $k$ vertices to simultaneously disconnect
the red and blue graphs of $G$.
They show, among other results, that {\sc Dual Separator} is \NP-complete
and {\sc Dually Connected Induced Subgraph} is \WO-hard even
when both $G_{\text{red}}$ and $G_{\text{blue}}$ are trees.
On the positive side, they prove that {\sc Dually Connected Induced Subgraph} is solvable
in time polynomial in the input size when $G$ is a complete graph.
One of the open problems they state is to determine the parameterized complexity
of \alphahdel\ for $\alpha = 2$ and $\mathcal{F}$
the family of forests, bipartite graphs, chordal graphs, or planar graphs.
The focus in this work is on one of those problems, namely \alphafvsfull --- an  interesting, and well-motivated~\cite{app1,caiye2014,app2}, generalization of {\sc Feedback Vertex Set} on edge-colored graphs.

A \emph{feedback vertex set} is a subset $S$ of vertices such that $G \setminus S$ is a forest.
For an $\alpha$-colored graph $G$, an {\em \alphafvssol\ } (or {\em \alphafvssolshort} for short) is a subset $S$ of vertices
such that $G_i \setminus S$ is a forest for each $1 \leq i \leq \alpha$. The \alphafvsfull\ is stated formally as follows.
%We study the $\alpha${\emph-colored feedback vertex set} which is the following.

\defparproblem{\alphafvsfull\  (\alphafvs)  }{$(G,k)$, where $G$ is an undirected $\alpha$-colored graph and $k$ is a positive integer}{$k$}
{Is there a subset $S\subseteq V(G)$ of size at most $k$ such that for $1 \leq i \leq \alpha$, $G_i\setminus S$ is a forest?}

\noindent
%%classical FVS.
Given a graph $G = (V,E)$ and a positive integer $k$, the classical {\sc Feedback Vertex Set (FVS)} problem asks whether
there exists a set $S$ of at most $k$ vertices in $G$ such that the graph induced on $V(G) \setminus S$
is acyclic. In other words, the goal is to find a set of at most $k$ vertices that intersects all cycles in $G$.
{\sc FVS} is a classical \NP-complete~\cite{Garey:1979:CIG:578533} problem with
numerous applications and is by now very well understood
from both the classical and parameterized complexity~\cite{DF97} view points. For instance, the problem
admits a $2$-approximation algorithm~\cite{2-approx-fvs-bafna}, an exact (non-parameterized) algorithm
running in $\OO^\star(1.736^n )$ time~\cite{exactfvs}, a deterministic algorithm
running in $\OO^\star(3.619^k)$ time~\cite{Kociumaka2014556}, a randomized algorithm running
in $\OO^\star(3^k )$ time~\cite{Cygan:2011:SCP:2082752.2082943},
and a kernel on $\OO(k^2)$ vertices~\cite{Thomasse:2010:KKF:1721837.1721848} (see Section~\ref{prelim} for definitions). We use the $\OO^\star$ notation to describe the running times of our algorithms.
Given $f: \mathbb{N} \rightarrow \mathbb{N}$, we define
$\OO^\star(f(n))$ to be $\OO(f(n) \cdot p(n))$, where $p(\cdot)$ is some
polynomial function. That is, the $\OO^\star$ notation suppresses polynomial factors in the running-time expression.
%\todo[inline]{Define $\OO^\star$}

%%new FVS1 and new FVS2
%A natural generalization of the {\sc Feedback Vertex Set} problem
%is the \hdel\ problem.
%Given a family of graphs $\mathcal{H}$, a graph $G$, and a positive integer $k$,
%the {\sc $\mathcal{H}$-Deletion} problem asks whether we can delete at most $k$ vertices from $G$
%to obtain a graph in $\mathcal{H}$. When $\mathcal{H}$ is the set of
%all forests then {\sc $\mathcal{H}$-Deletion} is exactly {\sc Feedback Vertex Set}.
%The {\sc $\mathcal{H}$-Deletion} problem has also received considerable attention in
%the literature and the (parameterized) complexity status has been resolved for many
%graph families~\cite{saurabh-book}.

%In this work, we consider another interesting, and well-motivated~\cite{app1,caiye2014,app2}, generalization of {\sc Feedback Vertex Set} on edge-colored graphs.
%For a graph $G = (V, \cup_{i=1}^{\alpha} E_{i})$, we say $G$ is an
%$\alpha$-edge-colored graph, i.e. the edge set of $G$ can be partitioned into $\alpha$ color classes.
%Given an $\alpha$-edge-colored graph $G$ and an integer $k$,
%the \alphafvsfull\ (\alphafvs) problem asks for a set $S$ of at most $k$ vertices
%such that each graph $G_i \setminus S$, where $G_i = (V, E_i)$ and $1 \leq i \leq \alpha$, is acyclic.
%Alternatively, one can consider the even more general \alphahdel\ problem,
%where given an $\alpha$-edge-colored graph $G$ the goal is to find a set $S$ of at most $k$ vertices
%such that each graph $G_i \setminus S$, $1 \leq i \leq \alpha$, is in $\mathcal{H}$.

\vspace*{2mm}
\noindent
\textbf{Our results and methods.} We show that, like its classical counterpart, \alphafvs\ parameterized
by $k$ is \FPT\  and admits a polynomial kernel, for any fixed constant $\alpha$. In particular, we obtain the
following results.
\begin{itemize}
\item An \FPT\ algorithm running in $\OO^\star(23^{\alpha k})$ time. For the special case of $\alpha=2$, we give a faster
algorithm running in $\OO^\star(81^{k})$ time.
\item For constant $\alpha$, we obtain a kernel with $\OO(\alpha k^{3(\alpha + 1)})$ vertices.
\item The running time of our algorithm implies that \alphafvs\  is \FPT\ even when
$\alpha \in o(\log n)$.  We complement this positive result by showing that for $\alpha \in \OO(\log n)$, where
$n$ is the number of vertices in the input graph, \alphafvs\ becomes \WO-hard.
\end{itemize}

Our algorithms and kernel build on the tools and methods
developed for {\sc FVS}~\cite{saurabh-book}. However, we need to develop both new
branching rules as well as a new reduction rules. The main reason why our results do not follow directly from
earlier work on {\sc FVS} is the following.
Many (if not all) parameterized algorithms, as well as kernelization algorithms, developed
for the {\sc FVS} problem~\cite{saurabh-book} exploit the fact that vertices of degree two or less
in the input graph are, in some sense, irrelevant. In other words, vertices of degree one or zero
cannot participate in any cycle and every cycle containing any
degree-two vertex must contain both of its neighbors.
%the number of edge-disjoint cycles
%any degree-two vertex can participate in is exactly one,
Hence, if this degree-two vertex is part of a feedback vertex set then it can be replaced by either one of its neighbors.
Unfortunately (or fortunately for us), this property does not hold for the
\alphafvs\ problem, even on graphs where edges are bicolored either red or blue.
For instance, if a vertex is incident to two red edges and two blue edges, it might in fact be participating
in two distinct cycles. Hence, it is not possible to neglect (or shortcut) this vertex in neither
$G_{\text{red}}$ nor $G_{\text{blue}}$. As we shall see, most of the new algorithmic techniques that we present
deal with vertices of exactly this type.
Although very tightly related to one another, we show that there are subtle and interesting differences separating
the {\sc FVS} problem from the \alphafvs\ problem, even for $\alpha = 2$.
For this reason, we also believe that studying \alphahdel\ for different families of graphs $\mathcal{F}$,
e.g. bipartite, chordal, or planar graphs, might reveal some new insights about the classical underlying problems.

%The rest of this paper is organized as follows.
In Section~\ref{sec-algo}, we present an algorithm solving the \alphafvs\ problem,
parameterized by solution size $k$, in $\OO^\star(23^{\alpha k})$ time. Our algorithm follows the iterative compression
paradigm introduced by Reed et al.~\cite{Reed2004299} combined with new reduction and branching rules. Our main new
branching rule can be described as follows: Given a maximal degree-two path in some $G_i$, $1 \leq i \leq \alpha$,
we branch depending on whether there is a vertex from this path participating in an \alphafvssol\ or not.
In the branch where we guess that a solution contains a vertex from this path, we construct a color $i$ cycle
which is isolated from the rest of the graph. In the other branch,
we are able to follow known strategies by ``simulating'' the classical {\sc FVS} problem.
Observe that we can never have more than $k$ isolated cycles of the same
color. Hence, by incorporating this fact into our measure we are guaranteed to make ``progress'' in both branches.
For the base case, each $G_i$ is a disjoint union of cycles (though not $G$) and to
find an \alphafvssol\ for $G$ we cast the remaining problem as an instance of
{\sc Hitting Set} parameterized by the size of the family.
For $\alpha=2$, we can instead use an algorithm for finding maximum matchings in an auxiliary graph.
Using this fact we give a faster, $\OO^\star(81^{k})$ time, algorithm for the case $\alpha = 2$.
In Section~\ref{sec-kernel}, we tackle the question of kernelization and present a polynomial
kernel for the problem, for constant $\alpha$. Our kernel has $\OO(\alpha k^{3(\alpha+1)})$ vertices
and requires new insights into the possible structures induced by those special vertices discussed above. In particular, we enumerate all maximal degree-two paths in each $G_i$ after deleting a feedback vertex set in $G_i$ and study how such paths interact with each other.
Using marking techniques, we are able to ``unwind''  long degree-two paths by making a private copy of each
unmarked vertices for each color class. This unwinding leads to ``normal'' degree-two
paths on which classical reduction rules can be applied and hence we obtain the desired kernel.

Finally, we consider the dependence between $\alpha$ and both the size of our kernel
and the running time of our algorithm in Section~\ref{sec-hardness}. We show that
even for $\alpha \in \OO(\log n)$, where $n$ is the number of vertices
in the input graph, \alphafvs\ becomes \WO-hard. We show
hardness via a new problem of independent interest which we denote by {\sc $\alpha$-Partitioned Hitting Set}.
The input to this problem consists of a tuple $(\mathcal{U}, \mathcal{F} = \mathcal{F}_1 \cup \ldots \cup \mathcal{F}_{\alpha}, k)$,
where $\mathcal{F}_{i}$, $1 \leq i \leq \alpha$,
is a collection of subsets of the finite universe $\mathcal{U}$, $k$ is a positive integer, and
all the sets within a family $\mathcal{F}_{i}$, $1 \leq i \leq \alpha$, are pairwise disjoint.
The goal is to determine whether there exists a subset $X$ of $\mathcal{U}$ of cardinality at most $k$ such that
for every $f \in \mathcal{F} = \mathcal{F}_1 \cup \ldots \cup \mathcal{F}_{\alpha}$, $f \cap X$ is nonempty.
We show that {\sc $\OO(\log |\mathcal{U}||\mathcal{F}|)$-Partitioned Hitting Set} is
\WO-hard via a reduction from {\sc Partitioned Subgraph Isomorphism}
and we show that {\sc $\OO(\log n)$-SimFVS} is \WO-hard via a
reduction from {\sc $\OO(\log |\mathcal{U}||\mathcal{F}|)$-Partitioned Hitting Set}.
Along the way, we also show, using a somewhat simpler reduction from {\sc Hitting Set},
that {\sc $\OO(n)$-SimFVS} is \WT-hard.
%\todo{May be say a line about $\OO(n)$-Sim-FVS that it is W[2]-hard.}

\section{Preliminaries}\label{prelim}
We start with some basic definitions and introduce
terminology from graph theory and algorithms.
We also establish some of the notation that will be used throughout.

%We will use the $\OO^\star$ notation to describe the running times of our algorithms.
%Given $f: \mathbb{N} \rightarrow \mathbb{N}$, we define
%$\OO^\star(f(n))$ to be $\OO(f(n) \cdot p(n))$, where $p(\cdot)$ is some
%polynomial function. That is, the $\OO^\star$ notation suppresses polynomial factors in the running-time expression.

\vspace*{2mm}
\noindent
\textbf{Graphs.} For a graph $G$, by $V(G)$ and $E(G)$ we denote its vertex set and edge set, respectively.
We only consider finite graphs possibly having loops and multi-edges.
In the following, let $G$ be a graph and let $H$ be a subgraph of $G$. By $d_{H}(v)$, we denote the
degree of vertex $v$ in $H$. For any non-empty subset $W \subseteq V(G)$, the
subgraph of $G$ induced by $W$ is denoted by $G[W]$; its vertex set is $W$ and its
edge set consists of all those edges of $E$ with both endpoints in $W$.
For $W \subseteq V(G)$, by $G \setminus W$ we denote the graph obtained by
deleting the vertices in $W$ and all edges which are incident to at least one vertex in $W$.

A \emph{path} in a graph is a sequence of distinct vertices $v_0, v_1, \ldots, v_k$
such that $(v_i,v_{i+1})$ is an edge for all $0 \leq i < k$.
A \emph{cycle} in a graph is a sequence of distinct vertices $v_0, v_1, \ldots, v_k$
such that $(v_i,v_{(i+1)\mod k})$ is an edge for all $0 \leq i \leq k$. We note that
both a double edge and a loop are cycles. We also use the convention
that a loop at a vertex $v$ contributes $2$ to the degree of $v$.

An edge $\alpha$-colored graph is a graph $G=(V,\cup_{i=1}^{\alpha} E_{i})$.
We call $G_i$ the color $i$ (or $i$-color) graph of $G$, where $G_i=(V,E_i)$.
%For the sake of clarity, we will sometimes assume that each $G_i$ has its own copy of the vertex set. Deleting a vertex from $G_i$ does not delete a vertex from $G_j$, for $i \neq j$, unless $v$ is deleted from all $G_j$, for $1 \leq j \leq \alpha$. Furthermore, deleting a vertex $v$ from $V(G)$, deletes $v$ in each $G_i$. 
For notational convenience we sometimes denote an $\alpha$-colored graph as $G=(V,E_1,E_2,...,E_{\alpha})$.
For an $\alpha$-colored graph $G$, the \emph{total degree} of a vertex
$v$ is $\sum_{i=1}^{\alpha} d_{G_i}(v)$. By color $i$ edge (or $i$-color edge) we
refer to an edge in $E_i$, for $1 \leq i \leq \alpha$. A vertex $v\in V(G)$ is
said to have a color $i$ neighbor if there is an edge $(v,u)$ in $E_i$, furthermore $u$ is a color $i$ neighbor of $v$.
We say a path or a cycle in $G$ is {\em monochromatic} if all the edges on the path or cycle have the same color.
Given a vertex $v \in V(G)$, a {\em $v$-flower} of order $k$ is a
set of $k$ cycles in $G$ whose pairwise intersection is exactly $\{v\}$. If all cycles
in a $v$-flower are monochromatic then we have a {\em monochromatic $v$-flower}.
An $\alpha$-colored graph $G=(V,E_1,E_2, \cdots , E_\alpha)$ is an
\emph{$\alpha$-forest} if each $G_i$ is a forest, for $1 \leq i \leq \alpha$. We refer the reader to~\cite{diestel-book} for details on standard graph theoretic notation and terminology we use in the paper.

%A \emph{feedback vertex set} is a subset $S$ of vertices such that $G \setminus S$ is a forest.
%For an $\alpha$-colored graph $G$, an {\em \alphafvssol\ } (or {\em \alphafvssolshort\ } for short) is a subset $S$ of vertices
%such that $G_i \setminus S$ is a forest for each $1 \leq i \leq \alpha$.
%%We study the $\alpha${\emph-colored feedback vertex set} which is the following.
%
%\defparproblem{\alphafvsfull\  (\alphafvs)  }{$(G,k)$, where $G$ is an undirected $\alpha$-colored graph and $k$ is a positive integer}{$k$}
%{Is there a subset $S\subseteq V(G)$ of size at most $k$ such that for $1 \leq i \leq \alpha$, $G_i\setminus S$ is a forest?}

\vspace*{2mm}
\noindent
\textbf{Parameterized Complexity.} A parameterized problem $\Pi$ is a subset of $\Gamma^{*}\times\mathbb{N}$, where $\Gamma$ is a finite alphabet. An instance of a parameterized problem is a tuple $(x,k)$, where $x$ is a classical problem instance, and $k$ is called the parameter. A central notion in parameterized complexity is {\em fixed-parameter tractability (FPT)} which means, for a given instance $(x,k)$, decidability in time $f(k)\cdot p(|x|)$, where $f$ is an arbitrary function of $k$ and $p$ is a polynomial in the input size.

\vspace*{2mm}
\noindent
\textbf{Kernelization.} A kernelization algorithm for a parameterized problem   $\Pi\subseteq \Gamma^{*}\times \mathbb{N}$ is an algorithm that, given $(x,k)\in \Gamma^{*}\times \mathbb{N} $, outputs, in time polynomial in $|x|+k$, a pair $(x',k')\in \Gamma^{*}\times \mathbb{N}$ such that (a) $(x,k)\in \Pi$ if and only if  $(x',k')\in \Pi$ and (b) $|x'|,k'\leq g(k)$, where $g$ is some computable function. The output instance $x'$ is called the kernel, and the function $g$ is referred to as the size of the kernel. If $g(k)=k^{\OO(1)}$ (resp. $g(k)=\OO(k)$) then we say that $\Pi$ admits a polynomial (resp. linear) kernel.

\section{\FPT~algorithm for \alphafvsfull}\label{sec-algo}
We give an algorithm for the \alphafvs~problem using the method of iterative compression~\cite{Reed2004299,saurabh-book}.
We only describe the algorithm for the disjoint version of the problem.
The existence of an algorithm running in $c^k\cdot n^{\OO(1)}$ time for the disjoint variant implies that
\alphafvs~can be solved in time $(1+c)^k \cdot n^{\OO(1)}$~\cite{saurabh-book}. In the \disalphafvs~problem, we are given an $\alpha$-colored graph $G$ $=$ $(V$,$E_1$,$E_2$, $\dots$, $E_{\alpha})$,
an integer $k$, and an \alphafvssolshort~$W$ in $G$ of size $k+1$. The objective is to
find an \alphafvssolshort~$X \subseteq V(G)\setminus W$ of size at most $k$, or correctly
conclude the non-existence of such an \alphafvssolshort.

\subsection{Algorithm for \disalphafvs}
Let $(G=(V,E_1,E_2, \dots, E_{\alpha}),W,k)$ be an instance of \disalphafvs~and
let $F=G\setminus W$. We start with some simple reduction rules that clean up the graph.
Whenever some reduction rule applies, we apply the lowest-numbered applicable rule.

 \begin{algorithm}[!h]
 \SetAlgoLined
 %\DontPrintSemicolon
 \KwIn{$G=(V,E_1,E_2,\dots, E_{\alpha})$, $W$, $k$, and $\mathfrak{C}=\{\mathcal{C}_1, \dots, \mathcal{C}_{\alpha}\}$}

\KwOut{\textsc{YES} if $G$ has an \alphafvssolshort~$S\subseteq V(G)\setminus W$ of size at most $k$, \textsc{NO} otherwise.}
Apply \textsc{\alphafvs~R.1} to \textsc{\alphafvs~R.5} exhaustively\;

\uIf{$k < 0$ or for any $i \in \onetoalpha$, $\lvert \mathcal{C}_i \rvert > k$}{\KwRet{\textsc{NO}}}

%\uIf{}{\KwRet{\textsc{NO}}}

%\tcp{Branching Rules}
\While{for some $i \in \onetoalpha$, $G_i[F_i\cup W_i]$ is not a forest}{
find a \cordate~vertex $v_c$ of highest index in some tree of $F_i$\;
Let $u_c,w_c$ be the vertices in tree $T^i_{v_c}$ with a neighbor $u,w$ respectively in $W_i$\;
Also let $P=u_c,x_1,\dots, x_t,v_c$ and $P'=v_c,y_1,\dots, y_{t'},w_c$ be the path in $F_i$ from $u_c$ to $v_c$ and $v_c$ to $w_c$ respectively\;

$\mathcal{G}_1=(G\setminus\{v_{c}\},W,k-1, \mathfrak{C})$, Add $\mathcal{G}_1$ to $\mathfrak{G}$\;

\uIf{$V'=V(P)\setminus \{v_c\} \neq \emptyset$}{
$\mathcal{C}_i=\mathcal{C}_i \cup \{(u_c,x_1,\dots,x_t)\}$\;
%$\mathfrak{C}'=(\mathfrak{C}\setminus \{\mathcal{C}_i\}) \cup \{\mathcal{C}'_i\}$\;
$\mathcal{G}_2=(G\setminus V',W,k-1, \mathfrak{C})$, Add $\mathcal{G}_2$ to $\mathfrak{G}$\;
}

\uIf{$V'=V(P')\setminus \{v_c\} \neq \emptyset$}{
$\mathcal{C}_i=\mathcal{C}_i \cup \{(y_1,\dots,y_{t'},w_c)\}$\;
$\mathcal{G}_3=(G\setminus V',W,k-1, \mathfrak{C})$, Add $\mathcal{G}_3$ to $\mathfrak{G}$\;
}

\eIf{$u,w$ are in the same component of $W_i$}{
\KwRet{$\bigvee_{\mathcal{G}\in \mathfrak{G}}$\disalphafvs$(\mathcal{G})$}
}
{ \KwRet{$( \bigvee_{\mathcal{G}\in \mathfrak{G}}$\disalphafvs$(\mathcal{G}))$ $\vee$ \disalphafvs$(G\setminus  (V(P)\cup V(P')), W \cup V(P) \cup V(P'),k,\mathfrak{C})$ } }
}
\tcp{Solve the remaining instance using the hitting set problem}
For $i \in \onetoalpha$ let $V(\mathcal{C}_i)=\cup_{C \in \mathcal{C}_i}V(C)$, $\mathcal{U}=\cup_{i \in \onetoalpha}V(\mathcal{C}_i)$\;
$\mathcal{F}=\cup_{i \in \onetoalpha} \mathcal{C}_i$\;
Find a hitting set $S=$ \textsc{Hitting Set}($\mathcal{F},\mathcal{U}$)\;

\uIf{$\lvert S \rvert \leq k$}{\KwRet{\textsc{YES}}}

%\eIf{$\lvert S \rvert \leq k$}{\KwRet{\textsc{YES}}}
\KwRet{\textsc{NO}}

%\tcp{Reduction Rules}
%

%
%\uIf{$G$ has a vertex $v$ of total degree $0$}{\KwRet{\disalphafvs($G\setminus \{v\}, W, k,\mathcal{C}$)}}
%
%\uIf{for any $i \in \onetoalpha$, $G_i$ has a vertex $v$, with the only neighbor $u$ in $G_i$}{\KwRet{\disalphafvs($G=(V,E_1,E_2,\dots, E_i\setminus \{(v,u)\},\dots, E_{\alpha}), W, k,\mathcal{C}$)}}
%
%\uIf{for any $i \in \onetoalpha$, $G_i$ has a vertex $v$, with only two neighbors $u,w$ in $G_i$ and total degree of $v$ is two}
%{Let $E'_i=(E_i\setminus \{(v,u),(v,w)\}) \cup \{(u,w)\}$\;
%\KwRet{\disalphafvs($G=(V,E_1,E_2,\dots, E'_i,\dots, E_{\alpha}), W, k,\mathcal{C}$)}}
%
%\uIf{for any $i \in \onetoalpha$, $E_i$ has an edge $(u,v)$ with multiplicity more than $2$}
%{Let $E'_i$ be the edge set obtained after reducing multiplicity of $(u,v)$ in $E_i$ to $2$\;
%\KwRet{\disalphafvs($G=(V,E_1,E_2,\dots, E'_i,\dots, E_{\alpha}), W, k,\mathcal{C}$)}}
%
%\uIf{$G$ has a vertex $v$ with a self loop}{\KwRet{\disalphafvs($G\setminus \{v\}, W, k-1,\mathcal{C}$)}}
%For $i\in \onetoalpha$, if there is a vertex $v$ with $u$ as its only neighbor in $G_i$ then, delete $(v,u)$ from $E_i$\;
% \uIf{$k \geq 0$ and $G$ is empty}{\KwRet{\YES}}
% \uIf{$k \geq 0$ and $G$ is empty}{\KwRet{\YES}}
% \uIf{$k = 0$ and $G$ is not empty}{\KwRet{\NO}}
 \caption{\disalphafvs}
 \label{overall_algorithm}
 \end{algorithm}

\begin{itemize}
\item \textsc{Reduction \alphafvs.R1.} Delete isolated vertices as they do not participate in any cycle.
\item \textsc{Reduction \alphafvs.R2.} If there is a vertex $v$ which has only
one neighbor $u$ in $G_i$, for some $i\in \onetoalpha$, then delete the edge $(v,u)$ from $E_i$.
\item \textsc{Reduction \alphafvs.R3.} If there is a vertex $v\in V(G)$ with exactly two neighbors $u,w$ (the total degree of $v$ is $2$),
delete edges $(v,u)$ and $(v,w)$ from $E_i$ and add an edge $(u,w)$ to $E_i$, where $i$ is the color of edges $(v,u)$ and $(v,w)$.
Note that after reduction \textsc{\alphafvs.R2} has been applied, both
edges $(v,u)$ and $(v,w)$ must be of the same color.
\item \textsc{Reduction \alphafvs.R4.} If for some $i$, $i \in \onetoalpha$, there is an edge of multiplicity larger than $2$ in $E_i$, reduce its multiplicity to $2$.
\item \textsc{Reduction \alphafvs.R5.} If there is a vertex $v$ with a self loop, then add $v$ to the solution set $X$, delete $v$ (and all edges incident on $v$) from the graph and decrease $k$ by $1$.
\end{itemize}

%\vspace*{2mm}
%\noindent
%bla
%\vspace*{2mm}

The safeness of reduction rule \textsc{\alphafvs.R4} follows from the
fact that edges of multiplicity greater than two do not influence the set of feasible solutions.
Safeness of reduction rule \textsc{\alphafvs.R5} follows from the fact
that any vertex with a loop must be present in every solution set $X$.
Note that all of the above reduction rules can be applied in polynomial time.
Moreover, after exhaustively applying all rules, the resulting graph $G$ satisfies the following properties:
\vspace*{1mm}
\newline(P1) $G$ contains no loops,
\newline(P2) Every edge in $G_i$, for $i \in \onetoalpha$ is of multiplicity at most two.
\newline(P3) Every vertex in $G$ has either degree zero or degree at least two in each $G_i$, for $i \in \onetoalpha$.
\newline(P4) The total degree of every vertex in $G$ is at least $3$.

%\begin{lemma}
%Reduction rule \textsc{\alphafvs.R1} is safe.
%\label{delete-0-vertex}
%\end{lemma}
%
%\begin{proof}
%For $G$ an $\alpha$-colored graph, we show that
%no optimal \alphafvssolshort\ contains a vertex $v$ of total degree $0$.
%Let $S$ be an optimal \alphafvssolshort~in $G$ and $v\in V(G)$ be a vertex of total degree $0$.
%Suppose $v \in S$ and let $S'=S\setminus \{v\}$.
%Observe that a vertex of total degree $0$ cannot participate in any cycle
%in any $G_i$, for $i \in \{1,2,\dots, \alpha\}$. Therefore, $S'$ intersects all cycles in $G_i$, for $i \in \{1,2,\dots,\alpha\}$.
%This implies that $S'$ is an \alphafvssolshort~with $\lvert S' \rvert <\lvert S \rvert$, contradicting the optimality of $S$.
%\end{proof}

\begin{lemma}
Reduction rule \textsc{\alphafvs.R2} is safe.
\label{delete-1-edge}
\end{lemma}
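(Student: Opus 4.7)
The plan is to verify that the instance before and after applying the rule admits the same set of \alphafvssolshort{s}. Let $G'$ denote the graph obtained from $G$ by deleting the edge $(v,u)$ from $E_i$. Since $V(G)=V(G')$, it suffices to show that a set $S \subseteq V(G) \setminus W$ is an \alphafvssolshort\ of $G$ if and only if it is an \alphafvssolshort\ of $G'$.

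For every color class $j \neq i$, the graphs $G_j$ and $G_j'$ are identical, so the condition that $G_j\setminus S$ is a forest carries over trivially in both directions. For color $i$, I would observe that $G_i' \setminus S$ is a subgraph of $G_i \setminus S$ on the same vertex set. The forward direction is then immediate: any subgraph of a forest is a forest, so if $G_i\setminus S$ is acyclic then so is $G_i' \setminus S$.

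For the reverse direction, suppose $G_i'\setminus S$ is a forest. If either $v \in S$ or $u \in S$, then $G_i \setminus S = G_i'\setminus S$ and there is nothing to prove. Otherwise, $G_i\setminus S$ is obtained from $G_i'\setminus S$ by adding the single edge $(v,u)$. The key observation is that because $u$ is the only neighbor of $v$ in $G_i$, the vertex $v$ is isolated in $G_i'$, hence also in $G_i'\setminus S$. Adding the edge $(v,u)$ therefore only attaches $v$ as a degree-one pendant to the component of $u$, an operation that cannot create a cycle. Thus $G_i \setminus S$ is again a forest.

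There is no serious obstacle here: the lemma is a direct instance of the standard fact that a pendant edge lies on no cycle, and the multi-color setting adds nothing beyond the remark that the other color classes are untouched. The only minor point to mind is that the rule must be interpreted consistently with the other rules (in particular, ``only one neighbor'' refers to the structure of $G_i$ before any later modification), which is handled by applying rules in the prescribed lowest-numbered-first order.
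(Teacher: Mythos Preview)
Your proof is correct and follows essentially the same approach as the paper: both arguments hinge on the observation that a pendant edge in $G_i$ lies on no cycle, so deleting it leaves the set of $\alpha$-simfvs unchanged. The paper phrases this in terms of cycles (``$v$ is a degree-one vertex in $G_i$, so it cannot be part of any cycle''), while you phrase it in terms of forests and add a small case split on whether $u$ or $v$ belongs to $S$; these are cosmetic differences only.
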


\begin{proof}
Let $G$ be an $\alpha$-colored graph and $v$ be a vertex whose only neighbor in $G_i$ is $u$, for some $i \in \onetoalpha$.
Consider the $\alpha$-colored graph $G'$ with vertex set $V(G)$ and edge sets $E_i(G')=E_i(G)\setminus \{(v,u)\}$
and $E_j(G')=E_j(G)$, for $j \in \onetoalphaNOi$.
We show that $G$ has an \alphafvssolshort~of size at most
$k$ if and only if $G'$ has an \alphafvssolshort~of size at most $k$.

In the forward direction, consider an \alphafvssolshort~$S$ in $G$ of size at most $k$.
Since $G'_j=G_j$, $S$ intersects all the cycles in $G'_j$, $j \in \onetoalphaNOi$.
Note that in $G_i$, there is no cycle containing the edge $(u,v)$ as $v$ is a degree-one vertex in $G_i$.
Hence, all the cycles in $G_i$ are also cycles in $G'_i$.
$S$ intersects all cycles in $G_i$ and, in particular, $S$ intersects all cycles in $G'_i$.
Therefore, $S$ is an \alphafvssolshort~in $G'$ of size at most $k$.

For the reverse direction, consider an \alphafvssolshort~$S$ in $G'$ of size at most $k$.
If $S$ is not an \alphafvssolshort~of $G$ then there is a cycle $C$ in some $G_t$, for $t \in \onetoalpha$.
Note that $C$ cannot be a cycle in $G_j$ as $G_j=G'_j$, for $j \in \onetoalphaNOi$.
Therefore $C$ must be a cycle in $G_i$. The cycle $C$ must contain the edge $(v,u)$, as this is the
only edge in $G_i$ which is not an edge in $G'_i$. But $v$ is a degree-one vertex in $G_i$, so it cannot
be part of any cycle in $G_i$, contradicting the existence of cycle $C$.
Thus $S$ is an \alphafvssolshort~of $G$ of size at most $k$.
\qed
\end{proof}

\begin{lemma}
Reduction rule \textsc{\alphafvs.R3} is safe.
\label{deg-2-rule}
\end{lemma}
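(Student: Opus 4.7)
The plan is to prove the biconditional: $G$ admits an \alphafvssolshort{} of size at most $k$ if and only if the reduced graph $G'$ does. First I would record the key structural consequence of R2 having been applied exhaustively before R3: if $v$ has total degree exactly two, then both of its incident edges must belong to the same color class $E_i$. Indeed, if the two edges had distinct colors, say one in $E_i$ and one in $E_j$ with $j\neq i$, then $v$ would have exactly one neighbor in $G_j$, and R2 would still apply. This justifies the assumption inside R3 and shows that $v$ has no incident edges outside color $i$, so $v$ is isolated in every $G_j$ with $j\in\onetoalphaNOi$.

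For the forward direction, given an \alphafvssolshort{} $S$ in $G$ with $|S|\le k$, I would set $S' = S$ if $v\notin S$, and $S' = (S\setminus\{v\})\cup\{u\}$ if $v\in S$; in both cases $|S'|\le |S|\le k$. For any color $j\neq i$, the graphs $G_j$ and $G'_j$ coincide, and since $v$ has no incident color-$j$ edges, every cycle of $G'_j$ is hit by $S\setminus\{v\}\subseteq S'$. For color $i$, take any cycle $C'$ in $G'_i$. If $C'$ does not use the new edge $(u,w)$, then $C'$ is also a cycle of $G_i$ that avoids $v$, hence hit by $S\setminus\{v\}\subseteq S'$. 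If $C'$ does use $(u,w)$, then when $v\in S$ the vertex $u\in S'$ already meets $C'$, and when $v\notin S$ we ``expand'' $(u,w)$ back to the length-two path $u,v,w$ to obtain a cycle $C$ in $G_i$; $S$ hits $C$ at some vertex $x\neq v$, which therefore lies on $C'$ too.

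For the reverse direction, given an \alphafvssolshort{} $S'$ in $G'$ with $|S'|\le k$, I would set $S = S'$ and verify it hits every cycle of $G$. Again, cycles in $G_j$ for $j\neq i$ coincide with cycles in $G'_j$ and are handled by $S'$. For a cycle $C$ in $G_i$, if $v\notin V(C)$ then all edges of $C$ are preserved in $G'_i$, so $C$ is a cycle of $G'_i$ and is hit. If $v\in V(C)$, then because $v$'s only neighbors (in $G$ and in $G_i$) are $u$ and $w$, the cycle must traverse the subpath $u,v,w$; contracting this subpath via the new edge $(u,w)$ yields a cycle $C'$ in $G'_i$ whose vertex set is $V(C)\setminus\{v\}$. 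Then $S'$ meets $C'$ at some vertex $x$, and since $x\neq v$ we have $x\in V(C)$, so $S$ meets $C$ as well.

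The main technical annoyance will be the corner cases arising from multi-edges: the new edge $(u,w)$ may already be present in $E_i$, producing a double edge and hence a length-two cycle, and when $C$ has length three the contracted cycle $C'$ is exactly such a double edge. Because the paper explicitly permits loops and multi-edges and treats a double edge as a cycle, these cases fit the correspondence above uniformly; R4 will later normalize multiplicities without affecting the argument. The remaining bookkeeping is routine.
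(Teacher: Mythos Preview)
Your proposal is correct and follows essentially the same approach as the paper: in the forward direction you replace $v$ by $u$ in the solution when $v\in S$ and use the bijection between cycles through the new edge $(u,w)$ in $G'_i$ and cycles through the path $u,v,w$ in $G_i$; in the reverse direction you take the same set and contract the $u,v,w$ subpath to match cycles. Your write-up is slightly tidier in that you fix $S'$ upfront by a case split rather than beginning with a contradiction assumption, and you make explicit why both edges at $v$ share the same color (via exhaustive application of R2), which the paper only asserts.
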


\begin{proof}
Consider an $\alpha$-colored graph $G$. Let $v$ be a vertex in $V(G)$ such that
$v$ has total degree $2$ and let $u,w$ be the neighbors of $v$ in $G_i$, where $u \neq w$ and $i \in \onetoalpha$.
Consider the $\alpha$-colored graph $G'$ with vertex set $V(G)$ and edge sets
$E_i(G')=(E_i(G)\setminus \{(v,u), (v,w)\}) \cup \{(u,w)\}$ and $E_j(G')=E_j(G)$, for $j \in  \onetoalphaNOi$.
We show that $G$ has an \alphafvssolshort~of size at most $k$ if and only if $G'$ has an \alphafvssolshort~of size at most $k$.

In the forward direction, let $S$ be an \alphafvssolshort~in $G$ of size at most $k$.
Suppose $S$ is not an \alphafvssolshort~of $G'$. Then, there is a cycle $C$ in $G'_t$, for some $t \in \onetoalpha$.
Note that $C$ cannot be a cycle in $G'_j$ as $G'_j=G_j$, for $j \in \onetoalphaNOi$.
Therefore $C$ must be a cycle in $G'_i$. All the cycles $C'$ not containing the edge $(u,w)$ are also
cycles in $G_i$ and therefore $S$ must contain some vertex from $C'$. It follows that $C$ must contain
the edge $(u,w)$. Note that the edges $(E(C) \setminus \{(u,w)\}) \cup \{(v,u),(w,v)\}$ form a cycle in $G_i$.
Therefore $S$ must contain a vertex from $V(C)\cup \{v\}$. We consider the following cases:
\begin{itemize}
 \item Case 1: $v \notin S$. In this case $S$ must contains a vertex from $V(C)$. Hence, $S$ is an \alphafvssolshort~in $G'$.
 \item Case 2: $v \in S$. Let $S'=(S\setminus \{v\})\cup \{u\}$.
 Any cycle $C'$ containing $v$ in $G_i$ must contain $u$ and $w$ (since $d_{G_i}(v)=2$).
 But $S'$ intersects all such cycles $C'$, as $u \in S'$. Therefore $S'$ is an \alphafvssolshort~of $G'$ of size at most $k$.
\end{itemize}

In the reverse direction, consider an \alphafvssolshort~$S$ of $G'$. $S$ intersects all cycles in $G_j$, since $G_j=G'_j$, for $j \in\onetoalphaNOi$.
All cycles in $G_i$ not containing $v$ are also cycles in $G'_i$ and therefore $S$ intersects all such cycles.
A cycle $C$ in $G_i$ containing $v$ must contain $u$ and $w$ ($v$ is a degree-two vertex in $G_i$).
Note that $(E(C)\setminus \{(v,u)(v,w)\}) \cup \{(u,w)\}$ is a cycle in $G'_i$
and $S$, being an \alphafvssolshort~in $G'$, must contain a vertex from $V(C)\setminus \{v\}$.
Therefore $S \cap V(C)\neq \emptyset$, so $S$ intersects cycle $C$ in $G'_i$. Hence $S$ an \alphafvssolshort~in $G'$.
\qed
\end{proof}

%%%%%%%Algorithm description starts here%%%%%%%%%%%%%%%%%%%%%

\noindent \textbf{Algorithm:} We give an algorithm for the decision
version of the \disalphafvs\ problem, which only verifies whether a solution exists or not.
Such an algorithm can be easily modified to find an actual solution $X$.
We follow a branching strategy with a nontrivial measure function. Let $(G,W,k)$ be an instance of the problem, where $G$ is an $\alpha$-colored graph. If $G[W]$ is not an $\alpha$-forest then we can safely return that $(G,W,k)$ is a no-instance.
Hence, we assume that $G[W]$ is an $\alpha$-forest in what follows.
Whenever any of our reduction rules \alphafvs.R1 to \alphafvs.R5 apply, the algorithm exhaustively does so (in order). If at any point in our algorithm the parameter $k$ drops below zero, then the resulting instance is again a no-instance.

Recall that initially $F$ is an $\alpha$-forest, as $W$ is an \alphafvssolshort. We will consider each forest $F_i$, for $i \in \onetoalpha$, separately (where $F_i$ is the color $i$ graph of the $\alpha$-forest $F$).
For $i \in \onetoalpha$, we let $W_i=(W,E_i(G[W]))$ and $\eta_i$ be the number of components in $W_i$.
Some of the branching rules that we apply create special vertex-disjoint cycles.
We will maintain this set of special cycles in $\mathcal{C}_i$, for each $i$, and we
let $\mathfrak{C} = \{\mathcal{C}_i, \ldots, \mathcal{C}_{\alpha}\}$.
Initially, $\mathcal{C}_i=\emptyset$. Each cycle that we add to $\mathcal{C}_i$ will
be vertex disjoint from previously added cycles. Hence, if at any
point $\lvert \mathcal{C}_i \rvert >k$, for any $i$, then we can stop exploring the
corresponding branch. Moreover, whenever we ``guess'' that some vertex $v$ must belong to a solution,
we also traverse the family $\mathfrak{C}$ and remove any cycles containing $v$.
For the running time analysis of our algorithm we will consider the following measure:

$$\mu=\mu(G,W,k,\mathfrak{C})=\alpha k+(\sum_{i=1}^{\alpha}\eta_i)-(\sum_{i=1}^{\alpha}\lvert \mathcal{C}_i \rvert)$$

The input to our algorithm consists of a tuple $(G,W,k,\mathfrak{C})$. For clarity, we will denote a
reduced input by $(G,W,k, \mathfrak{C})$ (the one where reduction rules do not apply).
%, where $G$ is an $\alpha$-colored graph, $W$ is an \alphafvssol~in $G$ of size $k+1$, $k$ a positive integer and $\mathfrak{C}=\{\mathcal{C}_1,\mathcal{C}_2,\dots,\mathcal{C}_{\alpha}\}$, where initially each $\mathcal{C} \in \mathfrak{C}$ is an empty set.

We root each tree in $F_i$ at some arbitrary vertex. Assign an index $t$ to each vertex $v$ in the forest $F_i$, which is the distance of $v$ from the root of the tree it belongs to (the root is assigned index zero).
A vertex $v$ in $F_i$ is called \emph{\cordate} if one of the following holds:
\begin{itemize}
\item $v$ is a leaf (or degree-zero vertex) in $F_i$ with at least two color $i$ neighbors in $W_i$.
\item The subtree $T^i_v$ rooted at $v$ contains two vertices $u$ and $w$ which have at least one color $i$ neighbor in $W_i$ ($v$ can be equal to $u$ or $w$).
\end{itemize}

\begin{lemma} \label{cordate-path}
For $i \in \onetoalpha$, let $v_c$ be a \cordate~vertex of highest index in some tree of the forest $F_i$ and let ${\cal T}_{v_c}$ denote the subtree rooted at $v_c$. Furthermore, let $u_c$ be one of the vertices in ${\cal T}_{v_c}$ such that $u_c$ has a neighbor in $W_i$. Then, in the path $P=u_c,x_1,\ldots,x_t,v_c$ ($t$ could be equal to zero) between $u_c$ and $v_c$ the vertices $x_1,\ldots,x_t$ are degree-two vertices in $G_i$.
\end{lemma}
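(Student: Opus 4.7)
The plan is to combine the maximality of the index of $v_c$ among cordate vertices with the structural invariants (P1)--(P4) that the reduction rules guarantee. If $t = 0$ the claim is vacuous, so I will assume $t \geq 1$ and consider an arbitrary intermediate vertex $x_j$ on $P$. Because $u_c$ lies in the subtree ${\cal T}_{v_c}$ and $P$ is the unique path in the rooted tree from $u_c$ up to $v_c$, every $x_j$ is a proper descendant of $v_c$, so the index of $x_j$ is strictly larger than that of $v_c$. Since $v_c$ was selected to be a cordate vertex of highest index in its tree of $F_i$, no $x_j$ can be cordate.

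Next I would cash in this ``not cordate'' property of $x_j$. Observe that the subtree $T^i_{x_j}$ is entirely contained in ${\cal T}_{v_c}$ and, by construction of $P$, contains $u_c$, which by hypothesis has a color $i$ neighbor in $W_i$. The second clause in the definition of a cordate vertex then forces $u_c$ to be the \emph{only} vertex in $T^i_{x_j}$ with a color $i$ neighbor in $W_i$; otherwise we would exhibit two such vertices in $T^i_{x_j}$ and conclude that $x_j$ is cordate, a contradiction. In particular, $x_j$ itself has no color $i$ neighbor in $W$, so $d_{G_i}(x_j)$ equals the degree of $x_j$ in the forest $F_i$.

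It remains to rule out $x_j$ having any child in $F_i$ off the path $P$; this is the step I expect to be the main obstacle, as it is where the structural invariants of the reduction rules must be combined with the argument above. Suppose, for contradiction, that $x_j$ has such an off-path child $y$. Then the subtree $T^i_y$ lies inside $T^i_{x_j}$ but is disjoint from $\{u_c, x_1, \ldots, x_j\}$, so by the previous paragraph no vertex of $T^i_y$ has a color $i$ neighbor in $W$. Pick any leaf $\ell$ of $T^i_y$ viewed as a subtree of $F_i$: in $F_i$ the vertex $\ell$ has exactly one neighbor (its parent inside $T^i_y$), and in $W$ it has zero color $i$ neighbors, so $d_{G_i}(\ell) = 1$. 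This directly violates property (P3), which the exhaustive application of the reduction rules guarantees. Hence no such $y$ exists, the only color $i$ neighbors of $x_j$ in $F_i$ are its two consecutive vertices on $P$, and therefore $d_{G_i}(x_j) = 2$, as required.
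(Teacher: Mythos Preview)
Your proof is correct and follows essentially the same approach as the paper's. The paper argues the two cases directly---if some $x_j$ had a $W_i$-neighbor it would itself be cordate, and if some $x_j$ had degree at least three in $F_i$ then its subtree would contain two leaves, each forced by (P3) to have a $W_i$-neighbor, again making $x_j$ cordate---whereas you phrase the second case contrapositively (an off-path leaf would violate (P3)); the underlying idea is identical.
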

\begin{proof} Let $P=u_c,x_1,\ldots,x_t,v_c$ be the path from $v_c$ to $u_c$.
In $P$, if there is a vertex $x$ (other than $u_c$ and $v_c$) which has an edge of
color $i$ to a vertex in $W_i$, then $x$ is a \cordate~vertex of higher index, contradicting the choice of $v_c$.
Also, if there is a vertex $x$ in $P$ other than $v_c$ and $u_c$ of degree at least three in $F_i$, the subtree rooted at $x$ has at least two leaves, and all the leaves have a color-$i$ neighbor in $W_i$. Therefore, $x$ is a \cordate~vertex and has a higher index than $v_c$, contradicting the choice of $v_c$. It follows that $x_1,\ldots,x_t$ (if they exist) are degree-two vertices in $G_i$.
\qed
\end{proof}

We consider the following cases depending on whether there is a \cordate~vertex in $F_i$ or not.

%We consider the following cases depending on whether there is a \cordate vertex $v_c$ or not and $u_c$ and $w_c$ (if exists) having a color $i$ neighbor in the same or different component of $W_i$.

\begin{figure}
\centering
\includegraphics[scale=0.5]{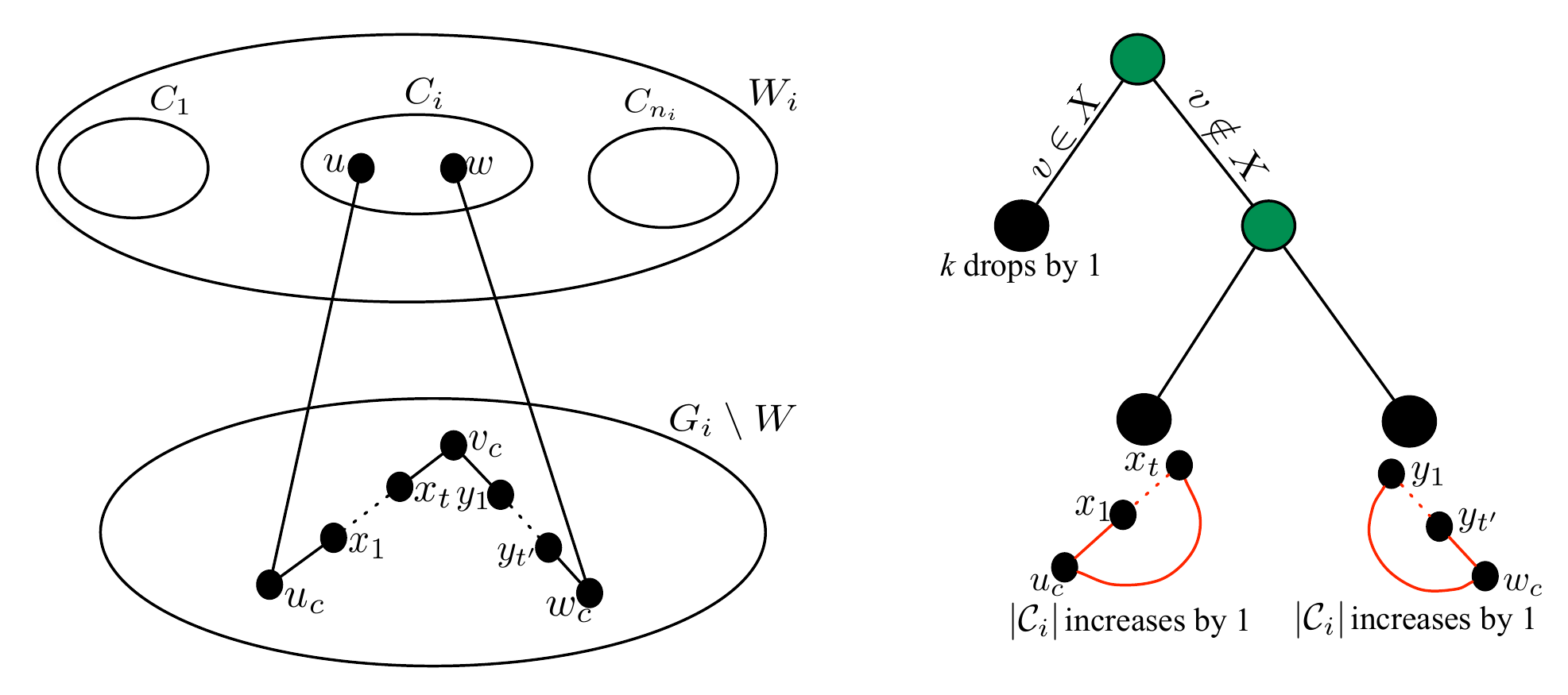}
\caption{Branching in Case 1.a}
\label{sim-fvs-case1a}
\end{figure}

\begin{itemize}
\item Case 1: There is a \cordate~vertex in $F_i$. Let $v_c$ be a \cordate~vertex
with the highest index in some tree in $F_i$ and let the two vertices
with neighbors in $W_i$ be $u_c$ and $w_c$ ($v_c$ can be equal to $u_c$ or $w_c$).
Let $P =u_c,x_1, x_2,$ $\cdots ,x_t,v_c$ and $P'$ $=$ $v_c$, $y_1$, $y_2$, $\cdots$ , $y_{t'}$, $w_c$
be the unique paths in $F_i$ from $u_c$ to $v_c$ and from $v_c$ to $w_c$, respectively.
Let $P_v=u_c,x_1,\cdots,x_t,$ $v_c,y_1,\cdots,y_{t'},w_c$ be the unique path in $F_i$
from $u_c$ to $w_c$. Consider the following sub-cases:

\vspace*{3mm}Case 1.a: $u_c$ and $w_c$ have a neighbor in the same component
of $W_i$. In this case one of the vertices from path $P_v$ must
be in the solution (Figure~\ref{sim-fvs-case1a}). We branch as follows:
    \begin{itemize}
	\item $v_c$ belongs to the solution. We delete $v_c$ from $G$ and decrease $k$ by $1$. In this branch $\mu$ decreases by $\alpha$.
    When $v_c$ does not belong to the solution, then at least one vertex
    from $u_c,x_1,x_2$, $\cdots ,x_t$ or $y_1,y_2,\cdots ,y_{t'},w_c$ must be in the solution.
    But note that these are vertices of degree at most two in $G_i$ by Lemma~\ref{cordate-path}.
    So with respect to color $i$, it does not matter which vertex is chosen in the solution. The only issue comes from some
    color $j$ cycle, where $j \neq i$, in which choosing a particular vertex
    from $u_c,x_1,\cdots ,x_t$ or $y_1,y_2,\cdots ,y_{t'},w_c$ would be more beneficial. We consider the following two cases.

    \item One of the vertices from $u_c,x_1,x_2,\cdots ,x_t$ is in the solution. In this
    case we add an edge $(u_c,x_t)$ (or $(u_c,u_c)$ when $u_c$ and $v_c$ are adjacent) to $G_i$ and delete the edge $(x_t,v_c)$ from $G_i$.
    This creates a cycle $C$ in $G_i\setminus W$, which is itself a component in $G_i \setminus W$.
    We remove the edges in $C$ from $G_i$ and add the cycle $C$ to $\mathcal{C}_i$. We will
    be handling these sets of cycles independently.
    In this case $\lvert \mathcal{C}_i \rvert$ increases by $1$, so the measure $\mu$ decreases by $1$.

    \item One of the vertices from $y_1,y_2,\cdots ,y_t,w_c$ is in the solution.
    In this case we add an edge $(y_1,w_c)$ to $G_i$ and delete the edge $(v_c,y_1)$
    from $G_i$. This creates a cycle $C$ in $G_i \setminus W$ as a component. We
    add $C$ to $\mathcal{C}_i$ and delete edges in $C$ from $G_i \setminus W$.
    In this branch $\lvert \mathcal{C}_i \rvert$ increases by $1$, so the measure $\mu$ decreases by $1$.\newline
    The resulting branching vector is \branchvector{\alpha,1,1}.
    \end{itemize}
\vspace*{3mm}Case 1.b: $u_c$ and $w_c$ do not have a neighbor
in the same component. We branch as follows (Figure~\ref{sim-fvs-case1b}):
    \begin{itemize}
    \item $v_c$ belongs to the solution. We delete $v_c$ from $G$ and
    decrease $k$ by $1$. In this branch $\mu$ decreases by $\alpha$.

    \item One of the vertices from $u_c,x_1,x_2,\cdots ,x_t$ is in the
    solution. In this case we add an edge $(u_c,x_t)$ to $G_i$ and delete
    the edge $(x_t,v_c)$ from $G_i$. This creates a cycle $C$ in $G_i\setminus W$ as a component.
    As in Case 1, we add $C$ to $\mathcal{C}_i$ and delete
    edges in $C$ from $G_i \setminus W$. $\lvert \mathcal{C}_i \rvert$ increases by $1$, so the measure $\mu$ decreases by $1$.

    \item One of the vertices from $y_1,y_2,\cdots ,y_t,w_c$ is in the solution.
    In this case we add an edge $(y_1,w_c)$ to $G_i$ and delete the edge
    $(v_c,y_1)$ from $G_i$. This creates a cycle $C$ in $G_i \setminus W$ as a component.
    We add $C$ to $\mathcal{C}_i$ and delete edges in $C$ from $G_i \setminus W$. In this
    branch $\lvert \mathcal{C}_i \rvert$ increases by $1$, so the measure $\mu$ decreases by $1$.

%    \item No vertex from paths $P$ or $P'$ is in the solution. In this case we
%    add the vertices in paths $P$ and $P'$ (apart from $v_c$, if $v_c$ is not a leaf) to $W$, the
%    resulting instance is $(G\setminus \{u_c,x_1,\dots,x_t,y_1,\dots,y_{t'},w_c\}, W \cup \{u_c,x_1,\dots,x_t,y_1,\dots,y_{t'},w_c\},k)$.
%    The number of components in $W_i$ decreases and we get a drop of $1$ in $\eta_i$, so $\mu$ decreases by $1$.\newline
%    The resulting branching vector is \branchvector{\alpha,1,1,1}.\newline
    \item No vertex from path $P_v$ is in the solution. In this case we
    add the vertices in $P_v$ to $W$, the resulting instance is $(G\setminus P_v, W \cup P_v,k)$.
    The number of components in $W_i$ decreases and we get a drop of $1$ in $\eta_i$, so $\mu$ decreases by $1$.
    Note that if $G[W \cup P_v]$ is not acyclic we can safely ignore this branch.\newline
    The resulting branching vector is \branchvector{\alpha,1,1,1}.\newline
    \end{itemize}

\item Case 2: There is no \cordate~vertex in $F_i$. Let $\mathcal{F}$ be a family
of sets containing a set $f_C = V(C)$ for each $C \in \cup_{i=1}^{\alpha} \mathcal{C}_i$
and let $\mathcal{U}=\cup_{i=1}^{\alpha} (\cup_{C \in \mathcal{C}_i}V(C))$.
Note that $\lvert \mathcal{F} \rvert \leq \alpha k$. We find a
subset $U \subseteq \mathcal{U}$ (if it exists) which hits all the
sets in $\mathcal{F}$, such that $\lvert U \rvert \leq k$.
%Return $U \cup X$ as the required solution.
\end{itemize}

\begin{figure}
\centering
\includegraphics[scale=0.5]{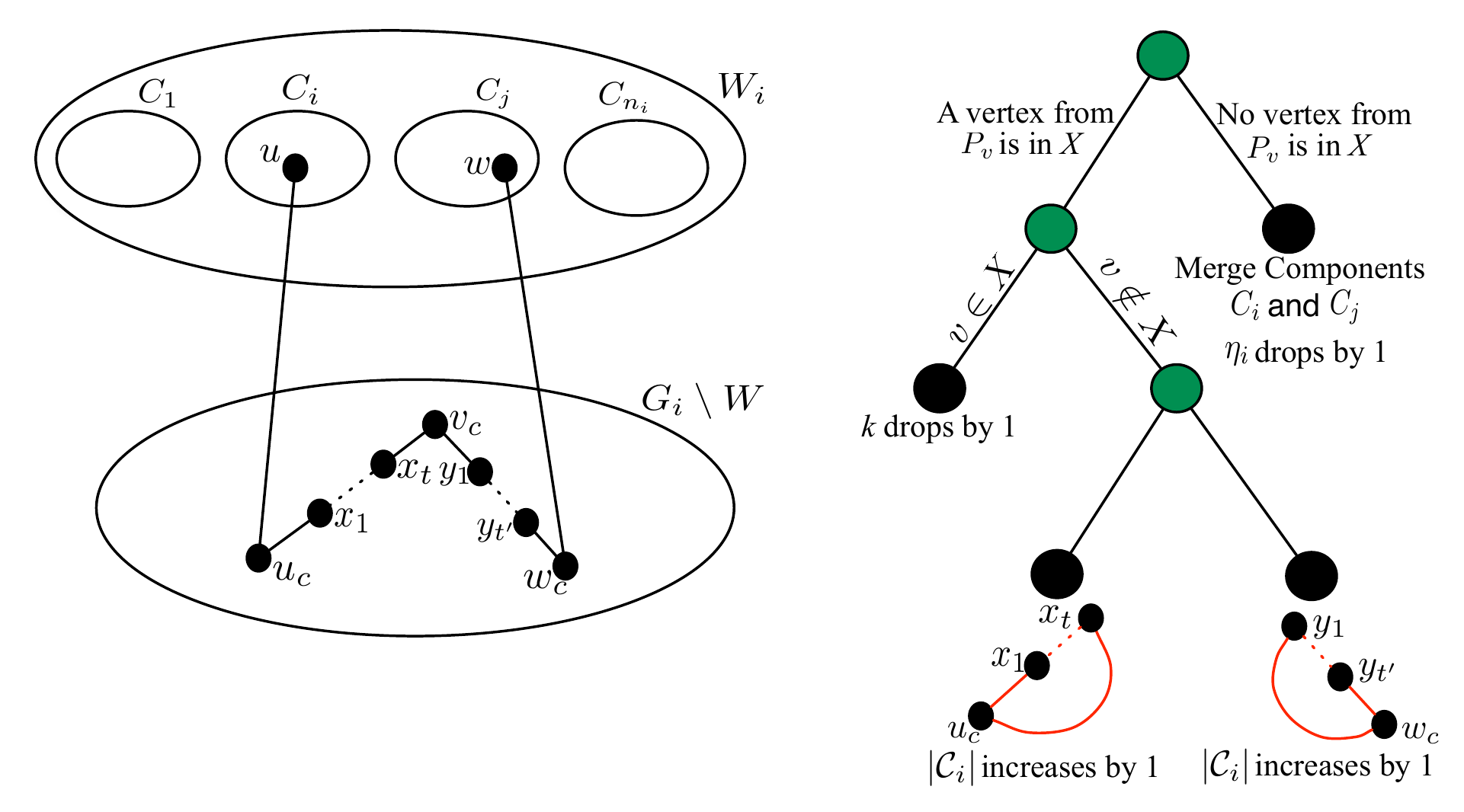}
\caption{Branching: Case 1.b}
\label{sim-fvs-case1b}
\end{figure}

Note that in Case 1, if the \cordate~vertex $v_c$ is a leaf, then $u_c=w_c=v_c$. Therefore,
from Case 1.a we are left with one branching rule. Similarly, we are left with the first and the last branching rules for Case 1.b.
If $v_c$ is not a leaf but $v_c$ is equal to $u_c$ or $w_c$, say $v_c=w_c$, then for both Case 1.a and Case 1.b we do
not have to consider the third branch.
Finally, when none of the reduction or branching rules apply, we solve the problem by invoking an algorithm for the
{\sc Hitting Set} problem as a subroutine.

\begin{lemma}
The presented algorithm for \disalphafvs\ is correct.
\end{lemma}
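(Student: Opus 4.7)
The plan is to prove correctness by induction on the measure $\mu$: a recursive call on $(G,W,k,\mathfrak{C})$ should return \textsc{YES} iff there exists $X\subseteq V(G)\setminus W$ with $|X|\le k$ hitting every cycle of every $G_i$ and every cycle stored in $\mathfrak{C}$. Throughout, I would maintain the invariants that $G[W]$ is an $\alpha$-forest and that the members of each $\mathcal{C}_i$ are pairwise vertex-disjoint monochromatic color-$i$ cycles whose edges have been removed from the working $G_i$. Correctness then breaks into three ingredients: safeness of reduction rules \alphafvs.R1--\alphafvs.R5, soundness and exhaustiveness of the branching in Cases~1.a and~1.b, and correctness of the final \textsc{Hitting Set} reduction in Case~2.

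Safeness of \alphafvs.R1, \alphafvs.R4, and \alphafvs.R5 is straightforward, while \alphafvs.R2 and \alphafvs.R3 are already handled by Lemmas~\ref{delete-1-edge} and~\ref{deg-2-rule}. For Case~1.a I would observe that since $u_c$ and $w_c$ have color-$i$ neighbors in the same component of $W_i$, the path $P_v$ in $F_i$ together with the $W_i$-path joining those neighbors forms a color-$i$ cycle; hence every valid $X\subseteq V(G)\setminus W$ must contain a vertex of $V(P_v)$, and the three sub-branches partition this possibility into $\{v_c\}$, $V(P)\setminus\{v_c\}$, and $V(P')\setminus\{v_c\}$. By Lemma~\ref{cordate-path} the internal vertices of $P$ and $P'$ have $G_i$-degree two, so from the viewpoint of color $i$ it only matters that \emph{some} vertex of the respective path enters the solution; this is exactly what recording the newly formed isolated cycle $C$ in $\mathcal{C}_i$ encodes, with the final \textsc{Hitting Set} call selecting a concrete representative that may simultaneously serve in cycles of other colors. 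Since the rewiring touches only $E_i$ and every color-$i$ cycle through the removed edges is wholly absorbed into $C$, neither any color-$j$ cycle ($j\ne i$) nor any color-$i$ cycle outside $\mathfrak{C}$ is lost or spuriously created.

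Case~1.b is analogous, with the extra fourth branch accounting for the case that $X$ contains no vertex of $V(P_v)$. This branch is sound because $u_c$ and $w_c$ lie in distinct components of $W_i$, so $P_v$ does not close a color-$i$ cycle with $W_i$; adding $V(P_v)$ to $W$ merges two components of $W_i$ (driving $\eta_i$, and hence $\mu$, down by one) without introducing any color-$i$ cycle inside $W$, which the explicit acyclicity test on $G[W\cup V(P_v)]$ guarantees. Together with the analysis of Case~1.a, these four branches exhaust every possibility for $X\cap V(P_v)$.

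For Case~2 I would argue that, once properties P3--P4 hold and no $F_i$ contains a cordate vertex, every tree of $F_i$ must consist of a single vertex isolated in $G_i$: a multi-vertex tree has at least two leaves, each of which by P3 must have a $W_i$-neighbor, making the root cordate by the second clause of the definition, while a single-vertex tree with a $W_i$-neighbor either has at least two and is itself cordate, or has exactly one and violates P3. Hence $F$ carries no color-$i$ edges at all, and every color-$i$ cycle in the current graph lies inside $G[W]\cup\mathcal{C}_i$; since $G[W]$ is an $\alpha$-forest, it lies entirely in $\mathcal{C}_i$. The residual problem is therefore exactly \textsc{Hitting Set} on $(\mathcal{U},\mathcal{F})$ with budget $k$, which the algorithm solves. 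I expect the main technical obstacle to be the bookkeeping in Case~1: one must verify meticulously that the isolated cycle $C$ extracted during branching faithfully records ``some vertex of the corresponding path must belong to $X$'' and that neither the rewiring nor the removal of $C$'s edges alters the cycle structure of $G$ in any unintended way.
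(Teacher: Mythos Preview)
Your proposal is correct and follows essentially the same inductive scheme as the paper: induction on $\mu$, safeness of the reduction rules, exhaustiveness of the Case~1 branching, and reduction to \textsc{Hitting Set} at the leaves. One point of genuine difference worth flagging: for Case~2 the paper does not argue structurally that the absence of a cordate vertex forces each $F_i$ to be edgeless in $G_i$; instead it treats $\mu\le 0$ as the inductive base case and derives, from $|\mathcal{C}_i|=k$ for every $i$, that any remaining color-$i$ cycle in $G_i[F_i\cup W_i]$ would be vertex-disjoint from the $k$ cycles already in $\mathcal{C}_i$ and hence witness a no-instance. Your direct argument via P3 and the definition of cordate is cleaner and ties more transparently to the algorithm's actual termination condition, while the paper's $\mu\le 0$ analysis doubles as a termination bound; either route suffices.
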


\begin{proof}
Consider an input $(G,W,k,\mathfrak{C})$ to the algorithm for \disalphafvs, where $G$
is an $\alpha$-colored graph, $W$ is an \alphafvssolshort\ of size $k+1$, and $k$ is a positive integer and $\mathfrak{C}=\{\mathcal{C}_1,\mathcal{C}_2,\dots,\mathcal{C}_1\}$.
Let $\mu =\mu(G,W,k,\mathfrak{C})$ be the measure as defined earlier.
We prove the correctness of the algorithm by induction on the measure $\mu$.
The base case occurs when one of the following holds:
\begin{itemize}
\item $k<0$,
\item for some $i \in \onetoalpha$, $\lvert \mathcal{C}_i \rvert > k$, or
\item $\mu \leq 0$.
\end{itemize}
If $k < 0$, then we can safely conclude that $G$ is a no-instance.
If for some $i \in \onetoalpha$ we have $\lvert \mathcal{C}_i \rvert >k$, then we need
to pick at least one vertex from each of the vertex-disjoint cycles in $\mathcal{C}_i$ and
there are at least $k+1$ of them. Our algorithm correctly concludes that the graph is also a no-instance in such cases.
If $\mu=\alpha k+(\sum_{i=1}^{\alpha}\eta_i)-(\sum_{i=1}^{\alpha}\lvert \mathcal{C}_i \rvert) \leq 0$
then $\alpha k \leq \sum_{i=1}^{\alpha}\lvert \mathcal{C}_i \rvert$.
But for each $i \in \onetoalpha$, we have $\lvert \mathcal{C}_i \rvert \leq k$.
Therefore $\alpha k \leq \sum_{i=1}^{\alpha}\lvert \mathcal{C}_i \rvert \leq \alpha k$,
$\sum_{i=1}^{\alpha}\lvert \mathcal{C}_i \rvert= \alpha k$,
and $|\mathcal{C}_i| = k$, for all $i \in \onetoalpha$.
This implies that for each $i \in \onetoalpha$, $G_i[F_i \cup W_i]$ must be acyclic.
Assume otherwise. Then, for some $i \in \onetoalpha$, $G_i[F_i \cup W_i]$ contains a cycle which is vertex disjoint from
the $k$ cycles in $\mathcal{C}_i$. Therefore, at least $k + 1$ vertices are needed to intersect these cycles and
we again have a no-instance. Recall that when a new vertex $v$ is added to the solution set we delete all
those cycles in $\cup_{i=1}^{\alpha} \mathcal{C}_i$ which contain $v$.

We are now left with cycles in $\cup_{i=1}^{\alpha} \mathcal{C}_i$.
Intersecting a cycle $C \in \cup_{i=1}^{\alpha} \mathcal{C}_i$ is equivalent to hitting the set $V(C)$.
Hence, we construct a family $\mathcal{F}$ consisting of a set $f_C = V(C)$ for each $C \in \cup_{i=1}^{\alpha} \mathcal{C}_i$ and
we let $\mathcal{U}=\cup_{i=1}^{\alpha} (\cup_{C \in \mathcal{C}_i}V(C))$.
Note that $\lvert \mathcal{F} \rvert \leq \alpha k$. If we can find a subset $U \subseteq \mathcal{U}$ which
hits all the sets in $\mathcal{F}$, such that $\lvert U \rvert \leq k$, then $U$ is the required solution.
Otherwise, we have a no-instance.
It is known that the {\sc Hitting Set} problem parameterized by the
size of the family $\mathcal{F}$ is fixed-parameter tractable and
can be solved in $\OO^\star(2^{|\mathcal{F}|})$ time~\cite{saurabh-book}. In particular, we can find an
optimum hitting set $U \subseteq \mathcal{U}$, hitting all the sets in $\mathcal{F}$.
Therefore, we have a subset of vertices that intersects all the cycles in $\mathcal{C}_i$, for $i \in \onetoalpha$.

Putting it all together, at a base case, our algorithm correctly decides whether $(G,W,k,\mathfrak{C})$ is a yes-instance or not.
For the induction hypothesis, assume that the algorithm correctly decides an instance for $\mu \leq t$.
Now consider the case $\mu=t+1$. If some reduction rule applies then we create an equivalent instance (since all reduction rules are safe).
Therefore, either we get an equivalent instance with the same measure or we get an equivalent instance with $\mu \leq t$
(the case when \alphafvs.R5 is applied).
In the latter case, by the induction hypothesis, our algorithm correctly decides the instance where $\mu \leq t$.
In the former case, we apply one of the branching rules. Each branching rule is exhaustive and
covers all possible cases. In addition, the measure decreases at each branch by at least one.
Therefore, by the induction hypothesis, the algorithm correctly decides whether the input is a yes-instance or not.
\qed
\end{proof}

\begin{lemma}
\disalphafvs~is solvable in time $\OO^\star(22^{\alpha k})$.
\end{lemma}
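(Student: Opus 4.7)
The plan is to combine a branching-tree analysis with the cost of the base case. The starting point is the measure $\mu(G,W,k,\mathfrak{C})=\alpha k+\sum_i\eta_i-\sum_i|\mathcal{C}_i|$ introduced in the algorithm. At the top of the recursion we have $\mathcal{C}_i=\emptyset$ and $\eta_i\le|W|=k+1$, so $\mu\le 2\alpha k+\alpha$; a routine check confirms that reductions \alphafvs.R1--\alphafvs.R5 never increase $\mu$, and each firing of a branching rule strictly decreases $\mu$ by the amount already recorded in Case~1.a and Case~1.b.

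From the algorithm description I read off two branching vectors: $(\alpha,1,1)$ in Case~1.a and $(\alpha,1,1,1)$ in Case~1.b; the latter dominates. Let $c_\alpha$ denote the positive root of $z^{\alpha}=3z^{\alpha-1}+1$. Then the number of leaves of the recursion tree is bounded by $\OO(c_\alpha^\mu)=\OO^\star(c_\alpha^{2\alpha k})$. At each leaf every $G_i[F_i\cup W_i]$ is a forest, so the residual task reduces to finding a size-$\le k$ hitting set for the family $\mathcal{F}=\bigcup_i\mathcal{C}_i$ of size $|\mathcal{F}|\le \alpha k$. Using the standard $\OO^\star(2^{|\mathcal{F}|})$ algorithm for \textsc{Hitting Set} parameterized by the number of sets, each leaf is resolved in $\OO^\star(2^{\alpha k})$ time. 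Multiplying gives a total bound of $\OO^\star((2c_\alpha^{2})^{\alpha k})$.

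The only numeric step---and the main obstacle---is to verify $2c_\alpha^2\le 22$ for every $\alpha\ge 2$. I would handle this by monotonicity: rewriting the characteristic equation as $z-3=z^{1-\alpha}$, one sees that $c_\alpha$ is strictly decreasing in $\alpha$ and tends to $3$ from above, so it suffices to treat $\alpha=2$. There $c_2$ satisfies $z^2=3z+1$, hence $c_2=\tfrac{1}{2}(3+\sqrt{13})$ and $c_2^2=3c_2+1$. Therefore $2c_2^2=6c_2+2=11+3\sqrt{13}<22$, the last inequality holding because $\sqrt{13}<11/3$. Substituting this back into the previous bound gives the claimed $\OO^\star(22^{\alpha k})$ running time.
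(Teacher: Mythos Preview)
Your proof is correct and follows essentially the same route as the paper: bound the initial measure by roughly $2\alpha k$, identify $(\alpha,1,1,1)$ as the dominating branching vector, reduce to $\alpha=2$ by monotonicity (the paper does this at the recurrence level via $T(\mu-\alpha)\le T(\mu-2)$), obtain the branching number $c_2=(3+\sqrt{13})/2\approx 3.303$, and multiply by the $2^{\alpha k}$ cost of the \textsc{Hitting Set} base case to get $(2c_2^{2})^{\alpha k}<22^{\alpha k}$. Your explicit algebraic verification that $2c_2^2=11+3\sqrt{13}<22$ is a nice touch that the paper leaves numerical.
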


\begin{proof}
All of the reduction rules \alphafvs.R1 to \alphafvs.R5 can be applied in time polynomial in the input size.
Also, at each branch we spend a polynomial amount of time.
For each of the recursive calls at a branch, the measure $\mu$ decreases at least by $1$.
When $\mu \leq 0$, then we are able to solve the remaining instance in time $\OO(2^{\alpha k})$ or
correctly conclude that the corresponding branch cannot lead to a solution.
At the start of the algorithm $\mu \leq 2\alpha k$.
Therefore, the height of the search tree is bounded by $2\alpha k$.
The worst-case branching vector for the algorithm
is \branchvector{\alpha,1,1,1}. The recurrence relation for the worst case branching vector is:
$T(\mu)\leq T(\mu - \alpha)+3T(\mu -1) \leq T(\mu - 2)+3T(\mu -1)$, since $\alpha \geq 2$.  % otherwise it is the feedback vertex set problem.
The running time corresponding to the above recurrence relation is $3.303^{2\alpha k}$.
At each branch we spend a polynomial amount of time but
we might require $\OO(2^{\alpha k})$ time. for solving the base case.
Therefore, the running time of the algorithm is $\OO^{\star}(2^{\alpha k}\cdot 3.303^{2 \alpha k})=\OO^{\star}(22^{\alpha k})$.
%We can optimize the running time if we take the measure to be the following:
%
%$$\mu=\alpha k+(\sum_{i=1}^{\alpha}(\eta_i/ \alpha))-(\sum_{i=1}^{\alpha}\lvert \mathcal{C}_i \rvert)$$
%
%The worst case branching vector corresponding to the above measure
%is $(\alpha, 1, 1, 1/ \alpha)$. The corresponding recurrence relation for the branching vector is:
%$T(\mu)\leq T(\mu - \alpha)-2T(\mu -1) + T(\mu -1/\alpha)$.
\qed
\end{proof}

\begin{theorem}
\alphafvsfull~is solvable in time $\OO^\star(23^{\alpha k})$.
\end{theorem}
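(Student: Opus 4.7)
The plan is to derive \alphafvs~directly from the $\OO^\star(22^{\alpha k})$ disjoint algorithm established in the preceding lemma, using the standard iterative compression reduction explicitly quoted at the outset of Section~\ref{sec-algo}: whenever \disalphafvs~admits a $c^k \cdot n^{\OO(1)}$ time algorithm, \alphafvs~admits a $(1+c)^k \cdot n^{\OO(1)}$ one. Applying this with $c = 22^\alpha$ (so that the disjoint variant runs in $(22^\alpha)^k = 22^{\alpha k}$ time) immediately yields a $(1 + 22^\alpha)^k \cdot n^{\OO(1)}$ time algorithm for \alphafvs.

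The only remaining task is the numerical bound $1 + 22^\alpha \leq 23^\alpha$ for every integer $\alpha \geq 1$, which is immediate from the binomial expansion
\[
23^\alpha = (1+22)^\alpha = \sum_{j=0}^\alpha \binom{\alpha}{j} 22^j \;\geq\; \binom{\alpha}{0} + \binom{\alpha}{\alpha}22^\alpha = 1 + 22^\alpha.
\]
Consequently $(1+22^\alpha)^k \leq 23^{\alpha k}$, and the total running time is $\OO^\star(23^{\alpha k})$.

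For completeness I would briefly sketch the iterative-compression reduction that produces the $(1+c)^k$ factor. Fix a vertex ordering $v_1,\ldots,v_n$ and maintain an \alphafvssolshort~$W_j$ of size at most $k+1$ for $G[\{v_1,\ldots,v_j\}]$; after inserting $v_{j+1}$ one obtains a solution of size at most $k+2$ which must be compressed. Compression enumerates every subset $Z$ of the current $(k+2)$-sized solution $W$ (these are the vertices of $W$ retained in the new solution), deletes $Z$, and invokes \disalphafvs~on the remaining graph with $W\setminus Z$ playing the role of the forbidden disjoint $\alpha$-forest solution and budget $k+1-|Z|$. Summing the per-subset cost via the binomial theorem,
\[
\sum_{j=0}^{k+2}\binom{k+2}{j}\,\OO^\star\!\left(22^{\alpha(k+1-j)}\right) = \OO^\star\!\left((1+22^\alpha)^{k+2}\right),
\]
recovers the bound, and the outer loop contributes only an extra polynomial factor.

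I do not anticipate a serious obstacle: the heavy lifting is already done by the previous lemma, and the only new content is the one-line binomial inequality that lets the $(1+22^\alpha)^k$ factor be repackaged as $23^{\alpha k}$ so that the exponent scales uniformly in $\alpha k$.
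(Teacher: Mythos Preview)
Your approach is correct and matches the paper's intended argument exactly: the paper states the theorem without proof, relying on the iterative-compression framework quoted at the start of Section~\ref{sec-algo} together with the $\OO^\star(22^{\alpha k})$ bound for \disalphafvs, and you have simply made the implicit step $(1+22^\alpha)^k \le 23^{\alpha k}$ explicit. Your iterative-compression sketch has minor off-by-one discrepancies in the size of $W$ and the budget passed to the disjoint call, but these do not affect the asymptotic bound and the core argument is sound.
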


%In the following subsection we will consider the
%case $\alpha=2$ and show that the base case can be solved in polynomial time.
\subsection{Faster algorithm for \twofvsfull}

We improve the running time of the \FPT\ algorithm for \alphafvs\ when $\alpha=2$.
Given two sets of disjoint cycles $\mathcal{C}_1$ and $\mathcal{C}_2$ and a
set $V=\cup_{C \in \mathcal{C}_1 \cup \mathcal{C}_2} V(C)$,
we want to find a subset $H \subseteq V$ such that $H$ contains at least
one vertex from $V(C)$, for each $C \in \mathcal{C}_1 \cup \mathcal{C}_2$.
We construct a bipartite graph $G_M$ as follows.
We set $V(G_M)= \{c^1_x | C_x \in \mathcal{C}_1\} \cup \{c^2_y | C_y \in \mathcal{C}_2\}$.
In other words, we create one vertex for each cycle in $\mathcal{C}_1 \cup \mathcal{C}_2$.
We add an edge between $c^1_x$ and $c^2_y$ if and only if
$V(C_x) \cap V(C_y) \neq \emptyset$.
Note that for $i\in \{1,2\}$ and $C,C' \in \mathcal{C}_i$, $V(C)\cap V(C')= \emptyset$.
In Lemma~\ref{2-hitting}, we show that finding a matching $M$ in $G_M$, such
that $\lvert M \rvert + \lvert V(G_M) \setminus V(M) \rvert \leq k$, corresponds to
finding a set $H$ of size at most $k$, such that $H$ contains at
least one vertex from each cycle $C \in \mathcal{C}_1 \cup \mathcal{C}_2$.

%We construct a graph $G_M$ such that finding a maximum matching in $G_M$ corresponds to finding a minimum hitting set in $(\mathcal{F},\mathcal{U})$, where $\mathcal{F}=\{V(C)\lvert C \in \mathcal{C}_1 \cup \mathcal{C}_2\}$ and $\mathcal{U}=\cup_{C \in \mathcal{C}_1 \cup \mathcal{C}_2} V(C)$. We construct $G_M$ as the following. $V(G_M)=\mathcal{C}_1 \cup \mathcal{C}_2$. We add an edge between $C_1 \in \mathcal{C}_1$ and $C_2 \in \mathcal{C}_2$ if and only if $V(C_1) \cap V(C_2) \neq \emptyset$. Note that for $i\in \{1,2\}$ and $C,C' \in \mathcal{C}_i$, $V(C)\cap V(C')= \emptyset$. show that

\begin{lemma} \label{2-hitting}
For $i \in \{1,2\}$, let $\mathcal{C}_i$ be a set of vertex-disjoint cycles, i.e.
for each $C,C' \in \mathcal{C}_i$, $C \neq C'$ implies $V(C)\cap V(C')=\emptyset$.
Let $\mathcal{F}=\{V(C)\lvert C \in \mathcal{C}_1 \cup \mathcal{C}_2\}$
and $\mathcal{U}=\cup_{C \in \mathcal{C}_1 \cup \mathcal{C}_2} V(C)$.
There exists a vertex subset $H \subseteq \cup_{C \in \mathcal{C}_1 \cup \mathcal{C}_2} V(C)$ of size $k$ such that $H\cap V(C) \neq \emptyset$,
for each $C \in \mathcal{C}_1 \cup \mathcal{C}_2$, if and only if $G_M$ has a matching $M$, such
that $\lvert M \rvert + \lvert V(G_M) \setminus V(M) \rvert \leq k$.
\end{lemma}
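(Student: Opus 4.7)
The plan is to prove both directions by exploiting the crucial fact that any vertex $v \in \mathcal{U}$ lies in at most one cycle of each colour family. This holds because the cycles within $\mathcal{C}_i$ are pairwise vertex-disjoint, so $v$ is in at most one element of $\mathcal{C}_i$ for each $i\in\{1,2\}$, and hence in at most two cycles overall. When $v$ lies in two cycles $C_x\in\mathcal{C}_1$ and $C_y\in\mathcal{C}_2$, the witness of non-empty intersection makes $c_x^1 c_y^2$ an edge of $G_M$. This observation underlies both directions.

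For the backward direction, which is the easier half, I would build $H$ directly. For each matching edge $(c_x^1, c_y^2) \in M$, pick one vertex in $V(C_x) \cap V(C_y)$ (non-empty by definition of an edge of $G_M$); that single vertex simultaneously hits $C_x$ and $C_y$. For each unmatched cycle-vertex in $V(G_M) \setminus V(M)$, pick any vertex of the corresponding cycle. The resulting set $H$ hits every cycle in $\mathcal{C}_1 \cup \mathcal{C}_2$ and satisfies $|H| \le |M|+|V(G_M)\setminus V(M)| \le k$, padding with arbitrary vertices of $\mathcal{U}$ to reach exactly size $k$ if desired.

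For the forward direction, given a hitting set $H$ of size at most $k$, I would use a \emph{choice function} argument. For each cycle $C\in \mathcal{C}_1\cup \mathcal{C}_2$, pick one vertex $g(C)\in H\cap V(C)$; this exists because $H$ hits $C$. Viewing $g$ as a map into $H$, its fibres $g^{-1}(v)$ contain at most one cycle from each colour class, by the structural observation, so $|g^{-1}(v)|\le 2$. Let $M$ be the set of edges $c_x^1 c_y^2$ of $G_M$ arising from size-two fibres, namely those $v\in H$ with $g^{-1}(v)=\{C_x,C_y\}$, $C_x\in\mathcal{C}_1$, $C_y\in\mathcal{C}_2$. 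Since $g$ is a function, no cycle appears in two distinct fibres, so these edges share no endpoint; hence $M$ is a matching in $G_M$. The cycle-vertices missed by $V(M)$ correspond exactly to the singleton fibres, so the number of vertices of $H$ with non-empty fibre equals $|M| + |V(G_M)\setminus V(M)|$, and this is at most $|H|\le k$.

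The main subtlety is the forward direction: it is tempting to derive the matching directly from the set of all ``double-hitters'' in $H$, i.e.\ vertices of $H$ that hit two cycles. But this set may fail to induce a matching, because two distinct vertices of $H$ can hit a common cycle, and a naive minimality reduction on $H$ does not remove this obstruction in general. The choice function $g$ sidesteps the issue by arbitrarily assigning each cycle to a single hitter, ensuring that each cycle lies in at most one fibre and thereby producing a genuine matching $M$ of the desired size.
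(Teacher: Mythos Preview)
Your proposal is correct, and the overall strategy---turn double-hitters into matching edges and single-hitters into unmatched vertices in the forward direction, and pick a shared vertex per matching edge plus an arbitrary vertex per unmatched cycle in the backward direction---is precisely the paper's approach.

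Where you genuinely improve on the paper is the forward direction. The paper takes a \emph{minimal} hitting set $H$ and, for each $h\in H$ lying in two cycles $C_x\in\mathcal{C}_1$, $C_y\in\mathcal{C}_2$, simply adds the edge $c^1_xc^2_y$ to $M$. As you correctly warn, this need not be a matching: minimality does not prevent two vertices $h,h'\in H$ from sharing a cycle (e.g.\ $\mathcal{C}_1=\{C_x\}$ with $a,b\in V(C_x)$, $\mathcal{C}_2=\{C_y,C_{y'}\}$ with $a\in V(C_y)$, $b\in V(C_{y'})$; then $H=\{a,b\}$ is minimal yet both are double-hitters through $C_x$). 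Your choice-function device---assign each cycle $C$ a single representative $g(C)\in H\cap V(C)$ and read off $M$ from the size-two fibres of $g$---cleanly guarantees that the resulting edge set is a matching, and the fibre count gives exactly $|M|+|V(G_M)\setminus V(M)|\le |H|$. So your version is not just an alternative but a repair of a subtle gap in the paper's argument, and your closing paragraph pinpoints the issue accurately.
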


\begin{proof}
For the forward direction, consider a minimal vertex subset $H \subseteq V(C_1) \cup V(C_2)$
of size at most $k$ such that for each $C \in \mathcal{C}_1 \cup \mathcal{C}_2$, $H\cap V(C) \neq \emptyset$.
Note that a vertex $h \in H$ can be present in at most one cycle
from $\mathcal{C}_i$, for $i \in \{1,2\}$, since $\mathcal{C}_i$ is a set
of vertex-disjoint cycles. Therefore, $h$ can be present in at most $2$
cycles from $\mathcal{C}_1 \cup \mathcal{C}_2$. If $h$ is present in $2$
cycles, say $C_x \in \mathcal{C}_1$ and $C_y \in \mathcal{C}_2$, then in
$G_M$ we must have an edge between $c^1_x$ and $c^2_y$ (since $h$ belongs to
both $C_x$ and $C_y$). We include the edge $(c^1_x,c^2_y)$ in the matching $M$.
If $h$ belongs to only one cycle, say $C^i_z \in \mathcal{C}_1\cup \mathcal{C}_2$, then we include
vertex $c^i_z$ in a set $I$. Note that $(V(G_M)\setminus V(M)) \subseteq I$. For each $h \in H$, we either add a
matching edge or add a vertex to $I$. Therefore
$\lvert M \rvert + \lvert V(G_M)\setminus V(M) \rvert \leq \lvert M \rvert + \lvert I \rvert \leq k$.

In the reverse direction, consider a matching
$M$ such that $\lvert M \rvert + \lvert V(G_M)\setminus V(M) \rvert \leq k$. We
construct a set $H$ of size at most $k$ containing a vertex from
each cycle in $\mathcal{C}_1 \cup \mathcal{C}_2$. For each edge $(c^1_x, c^2_y)$ in the
matching, where $C_x \in \mathcal{C}_1$ and $C_y \in \mathcal{C}_2$, there
is a vertex $h$ that belongs to both $V(C_x)$ and $V(C_y)$. Include $h$ in $H$.
For each $c^i_z \in V(G_M) \setminus V(M)$, add an arbitrary
vertex $v \in V(C_z)$ to $H$. Note that $\lvert H \rvert \leq k$, since for each
matching edge and each unmatched vertex we added one vertex to $H$. Moreover, for each
cycle $C \in \mathcal{C}_1 \cup \mathcal{C}_2$, its corresponding vertex in $G_M$ is either part of the matching
or is an unmatched vertex; in both cases there is a vertex in $H$
that belongs to $C$. Therefore, $H$ is a subset of size at most $k$ which
contains at least one vertex from each cycle in $\mathcal{C}_1 \cup \mathcal{C}_2$.
\qed
\end{proof}

Note that a matching $M$ in $G_M$ minimizing
$\lvert M \rvert + \lvert V(G_M) \setminus V(M) \rvert$ is one of maximum size.
Therefore, at the base case for \twofvs~we compute a maximum matching of the
corresponding graph $G_M$, which is a polynomial-time solvable problem, and return an optimal solution for
intersecting all cycles in $\mathcal{C}_1 \cup \mathcal{C}_2$.
Moreover, if we set $\mu = 2k+(\eta_1/\alpha+\eta_2/\alpha)-(\lvert \mathcal{C}_1 \rvert + \lvert \mathcal{C}_2 \rvert)$, then the
worst case branching vector is \branchvector{2,1,1,1/2}.
Corresponding to this worst case branching vector, the running time of the algorithm is $\OO^{\star}(81^{k})$.

\begin{theorem}
\twofvsfull~is solvable in time $\OO^{\star}(81^{k})$.
\end{theorem}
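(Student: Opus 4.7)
The plan is to adapt the algorithm for \disalphafvs\ from the previous subsection, exploiting two simplifications that become available when $\alpha = 2$. First, the base case of Case~2 no longer requires solving \textsc{Hitting Set} in $\OO^\star(2^{\lvert \mathcal{F} \rvert})$ time: by Lemma~\ref{2-hitting}, since $\mathcal{C}_1$ and $\mathcal{C}_2$ each consist of pairwise vertex-disjoint cycles, an optimal hitting set of size at most $k$ corresponds to a maximum matching in the bipartite graph $G_M$, which is computable in polynomial time. This removes the $2^{\alpha k}$ factor that dominated the previous analysis. Second, I would refine the potential function by rescaling the component counts by $1/\alpha = 1/2$, setting $\mu = 2k + (\eta_1 + \eta_2)/2 - (\lvert \mathcal{C}_1 \rvert + \lvert \mathcal{C}_2 \rvert)$.

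Under the new measure, the four branches of Case~1.b produce drops of $2$, $1$, $1$, and $1/2$: branching on $v_c \in S$ decreases $k$ and hence $\mu$ by $2$; each of the two branches that isolates a new cycle in some $\mathcal{C}_i$ increments $\lvert \mathcal{C}_i \rvert$, dropping $\mu$ by $1$; and merging $P_v$ into $W$ decreases some $\eta_i$ by at least one, dropping $\mu$ by $1/2$ due to the rescaling. The three-branch Case~1.a omits the last branch and is strictly better, and the reduction rules either preserve or strictly decrease $\mu$. Hence the worst-case branching vector is $(2, 1, 1, 1/2)$.

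The corresponding recurrence $T(\mu) \le T(\mu - 2) + 2T(\mu - 1) + T(\mu - 1/2)$ has characteristic equation $c^2 = c^{3/2} + 2c + 1$, whose positive root $c^\star$ I would bound by evaluating at $c_0 = 80^{1/3}$ and observing that $c_0^2 = 80^{2/3} \approx 18.566$ already exceeds $c_0^{3/2} + 2c_0 + 1 = \sqrt{80} + 2 \cdot 80^{1/3} + 1 \approx 18.562$; thus $c^\star < c_0$ and $(c^\star)^3 < 80$. Since $|W| = k+1$ implies $\eta_i \le k+1$ and $\lvert \mathcal{C}_i \rvert \ge 0$, the initial measure satisfies $\mu_0 \le 2k + (k+1) = 3k + 1$, so the search tree has at most $(c^\star)^{3k+1} = \OO(80^k)$ leaves. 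Each node requires polynomial work, including the matching-based base case via Lemma~\ref{2-hitting}, so \distwofvs\ is solvable in $\OO^\star(80^k)$ time. Finally, the standard iterative-compression wrapper, which transforms a $c^k$-time disjoint algorithm into a $(1+c)^k$-time algorithm for the original problem, yields $\OO^\star(81^k)$ for \twofvsfull.

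The main obstacle I expect is the careful case-by-case verification that the worst-case branching vector is uniformly at least $(2, 1, 1, 1/2)$. In particular, I would need to check that the degenerate forms of Case~1 (when $v_c$ is a leaf or coincides with $u_c$ or $w_c$) simply omit branches rather than introducing worse ones, that the reduction rules either leave $\mu$ unchanged or strictly decrease it, and that the base case is reached exactly when Case~2 applies so that Lemma~\ref{2-hitting} is invoked correctly. These verifications mirror the ones in the proof for general $\alpha$, with only the rescaling of the $\eta_i$ terms requiring re-inspection.
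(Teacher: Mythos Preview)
Your proposal is correct and follows essentially the same approach as the paper: rescale the component terms in the measure to $\mu = 2k + (\eta_1+\eta_2)/2 - (\lvert\mathcal{C}_1\rvert+\lvert\mathcal{C}_2\rvert)$, observe that the worst-case branching vector becomes $(2,1,1,1/2)$, replace the exponential \textsc{Hitting Set} base case by the polynomial-time matching computation of Lemma~\ref{2-hitting}, and then apply the iterative-compression wrapper. You in fact supply more detail than the paper does (the explicit root bound $c^\star < 80^{1/3}$ via evaluation, the bound $\mu_0 \le 3k+1$, and the $80^k \to 81^k$ step), but the argument is the same.
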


\section{Polynomial kernel for \alphafvsfull}\label{sec-kernel}
In this section we give a kernel with $\OO(\alpha k^{3(\alpha + 1)})$ vertices
for \alphafvs. Let $(G,k)$ be an instance of \alphafvs, where
$G$ is an $\alpha$-colored graph and $k$ is a positive integer. We assume that
reduction rules \textsc{\alphafvs.R1} to \textsc{\alphafvs.R5} have been exhaustively applied.
The kernelization algorithm then proceeds in two stages.
In stage one, we bound the maximum degree of $G$. In the second stage, we present new
reduction rules to deal with degree-two vertices and conclude a bound on the total number
of vertices.

To bound the total degree of each vertex $v \in V(G)$, we bound
the degree of $v$ in $G_i$, for $i \in \{1,2,\dots,\alpha\}$.
To do so, we need the Expansion Lemma~\cite{saurabh-book} as well as the
$2$-approximation algorithm for the classical {\sc{Feedback Vertex Set}} problem~\cite{2-approx-fvs-bafna}.

A {\em $q$-star}, $q \geq 1$, is a graph with $q + 1$ vertices, one vertex of degree $q$
and all other vertices of degree $1$. Let $G$ be a bipartite graph
with vertex bipartition $(A,B)$. A set of edges $M \subseteq E(G)$ is called a {\em $q$-expansion}
of $A$ into $B$ if (i) every vertex of $A$ is incident with exactly $q$ edges of M and (ii)
$M$ saturates exactly $q|A|$ vertices in $B$.

\begin{lemma}[Expansion Lemma~\cite{saurabh-book}]\label{qexpansion}
Let $q$ be a positive integer and $G$
be a bipartite graph with vertex bipartition $(A,B)$ such that
$|B| \geq q|A|$ and there are no isolated vertices in $B$.
Then, there exist nonempty vertex sets $X \subseteq A$ and $Y \subseteq B$ such that:
\begin{itemize}
\item (1) $X$ has a $q$-expansion into $Y$ and
\item (2) no vertex in $Y$ has a neighbour outside $X$, i.e. $N(Y) \subseteq X$.
\end{itemize}
Furthermore, the sets $X$ and $Y$ can be found in time polynomial in the size of $G$.
\end{lemma}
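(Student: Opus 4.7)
My plan is to reduce the statement to a maximum matching computation in an auxiliary bipartite graph and then invoke Hall's theorem. First I would construct a bipartite graph $H$ from $G$ by replacing each vertex $a \in A$ with $q$ independent copies $a^{(1)}, \ldots, a^{(q)}$, each inheriting all of the edges incident to $a$ in $G$. Note that finding a $q$-expansion of some $X \subseteq A$ into some $Y \subseteq B$ is equivalent to finding a matching in $H$ that saturates the $q|X|$ copies of vertices in $X$; the endpoints of such a matching in $B$ then form $Y$.

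I would then run a single maximum-matching algorithm on $H$ in polynomial time. If the resulting matching $M$ saturates every one of the $q|A|$ copies, I set $X = A$ and $Y$ equal to the $q|A|$ matched endpoints in $B$: condition $(1)$ is immediate from $M$, and condition $(2)$ holds automatically because $H$ is bipartite and $X = A$ already captures the entire $A$-side. Otherwise, K\"onig's theorem on the copy side of $H$ produces a nonempty set $S$ with $|N_H(S)| < |S|$; closing $S$ under ``take all $q$ copies of each underlying vertex'' does not change $N_H(S)$ but only grows $|S|$, so I may assume $S$ consists of the full copy sets of some $A_0 \subseteq A$, giving $|N_G(A_0)| < q|A_0|$. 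I would then delete $A_0 \cup N_G(A_0)$ and recurse on $G' = G[(A \setminus A_0) \cup (B \setminus N_G(A_0))]$. The two hypotheses propagate to $G'$: from $|N_G(A_0)| < q|A_0|$ one gets $|B \setminus N_G(A_0)| > |B| - q|A_0| \geq q(|A| - |A_0|)$, and every surviving vertex $b \in B \setminus N_G(A_0)$ keeps all of its $G$-neighbors, none of which lay in $A_0$, so $b$ remains non-isolated in $G'$.

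The main obstacle is ensuring that the $Y$ returned by the recursion still satisfies $N_G(Y) \subseteq X$ in the \emph{original} graph, not merely in $G'$. This is true because $Y \subseteq B \setminus N_G(A_0)$ by construction, so no vertex of $Y$ has any neighbor in $A_0$; combined with the recursive guarantee $N_{G'}(Y) \subseteq X$, one obtains $N_G(Y) \subseteq X$ in $G$ as well. Since each recursive step strictly decreases $|A|$ and performs only a single maximum-matching computation plus a K\"onig-theorem extraction, the recursion depth is at most $|A|$ and the overall procedure runs in polynomial time. Nonemptiness of $X$ and $Y$ at the base case is guaranteed by the combination of $|B| \geq q|A|$ with the absence of isolated vertices in $B$, which together rule out the degenerate scenario in which every subset of $A$ is a Hall violator.
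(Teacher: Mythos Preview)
The paper does not prove this lemma; it is quoted from the cited textbook and used as a black box. Your argument is essentially the standard proof found there: blow up each $a\in A$ into $q$ copies, look for a matching saturating the copy side, and if none exists, peel off a Hall violator and recurse on the remainder. So there is nothing to compare against in the paper itself, and your approach coincides with the reference.

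Two small remarks. First, what you invoke in the failure case is Hall's condition (the existence of a deficient set $S$ with $|N_H(S)|<|S|$ when the copy side is unsaturated), not K\"onig's theorem. Second, your closing sentence about nonemptiness is a bit vague; the clean way to say it is that the violator $A_0$ can never be all of $A$, because the no-isolated-vertices hypothesis forces $N_G(A)=B$ and hence $|N_G(A)|=|B|\ge q|A|$. Thus each recursive step leaves $A\setminus A_0$ nonempty, and the base case is reached with $X=A'\neq\emptyset$ and $Y$ the $q|A'|$ matched vertices. With those cosmetic fixes the argument is complete.
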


\subsection{Bounding the degree of vertices in $G_i$}
We now describe the reduction rules that allow us to bound the maximum
degree of a vertex $v\in V(G)$.

\begin{lemma}[Lemma 6.8~\cite{evencycle}]
Let $G$ be an undirected multi-graph and $x$ be a vertex of $G$ without a self loop.
Then in polynomial time we can either decide that $(G,k)$ is a no-instance
of \textsc{Feedback Vertex Set} or check whether there is an $x$-flower of
order $k+1$, or find a set of vertices $Z\subseteq V(G)\setminus \{x\}$ of
size at most $3k$ intersecting every cycle in $G$, i.e. $Z$ is a feedback vertex set of $G$.
\label{set-H-exp-lemma-gen}
\end{lemma}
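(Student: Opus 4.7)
The plan is to combine the $2$-approximation for {\sc Feedback Vertex Set} with a Menger-style max-flow argument that detects $x$-flowers.

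First I would run the $2$-approximation algorithm of Bafna et al.~\cite{2-approx-fvs-bafna} on $G$ to obtain a feedback vertex set $S$. If $|S| > 2k$, then the optimum exceeds $k$, so I correctly report that $(G,k)$ is a no-instance. Otherwise $|S| \le 2k$, and if $x \notin S$ I simply return $Z := S$, a FVS of $G$ avoiding $x$ of size $\le 2k \le 3k$.

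The interesting case is $x \in S$. I set $T := S \setminus \{x\}$, so $|T| \le 2k-1$ and $G \setminus (T \cup \{x\})$ is a forest; consequently every cycle of $G \setminus T$ must pass through $x$. Next I would compute, in polynomial time, the maximum size $\nu$ of an $x$-flower in $G \setminus T$ via the following standard vertex-splitting max-flow construction: introduce a source $s$ and a sink $t$; split each $v \in V(G \setminus T) \setminus \{x\}$ into $v^-$ and $v^+$ joined by a unit-capacity arc $v^- \to v^+$; replace each edge $\{u,v\}$ of $G \setminus (T \cup \{x\})$ by two infinite-capacity arcs $u^+ \to v^-$ and $v^+ \to u^-$; and for every neighbor $u$ of $x$ in $G \setminus T$ with edge multiplicity $m$, add arcs $s \to u^-$ and $u^+ \to t$ of capacity $m$. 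By Menger's theorem the maximum $s$-$t$ flow equals $\nu$, and an integral min cut translates into a vertex set $Z' \subseteq V \setminus (T \cup \{x\})$ of size $\nu$ that meets every $x$-cycle of $G \setminus T$.

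If $\nu \ge k+1$, the flow exhibits an $x$-flower of order $k+1$ in $G$, which I report. Otherwise $\nu \le k$, and I output $Z := T \cup Z'$: we have $|Z| \le (2k-1) + k \le 3k$, $x \notin Z$, and $Z$ hits every cycle of $G$, since cycles meeting $T$ are handled by $T$, while cycles avoiding $T$ must lie in $G \setminus T$ and hence pass through $x$, so they are destroyed by $Z'$.

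The main obstacle I anticipate is the cut-to-vertex-set translation. \emph{A priori} a minimum $s$-$t$ cut could place capacity on the $s$-incident or $t$-incident arcs rather than on the vertex-split arcs, yet I need an honest vertex set in $V \setminus (T \cup \{x\})$. Handling this requires the usual uncrossing argument showing that any minimum cut can be pushed inward onto the unit-capacity $v^- \to v^+$ arcs without increasing its size, exploiting the infinite-capacity arcs between split copies; the argument hinges on the fact that every $x$-cycle of $G \setminus T$ traverses at least one internal vertex of $G \setminus (T \cup \{x\})$, so the vertex-split arcs alone already suffice to separate $s$ from $t$.
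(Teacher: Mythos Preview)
The paper does not prove this statement at all; it is imported verbatim as Lemma~6.8 of~\cite{evencycle} and used as a black box, so there is no in-paper argument to compare against.

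Your overall plan is the standard one and is sound: run the $2$-approximation, strip off $x$ to get $T$ with $G\setminus(T\cup\{x\})$ a forest, and then establish a min--max between $x$-flowers and $x$-cycle transversals in $G\setminus T$. The genuine gap is your flow gadget. The assertion ``by Menger's theorem the maximum $s$--$t$ flow equals $\nu$'' is false. Take $G\setminus T$ to be the triangle on $\{x,u,v\}$ with all edges simple. The unique $x$-cycle is $x\text{--}u\text{--}v\text{--}x$, so the maximum $x$-flower has order~$1$; yet in your network the two trivial paths $s\to u^{-}\to u^{+}\to t$ and $s\to v^{-}\to v^{+}\to t$ are arc-disjoint, so the max flow is~$2$. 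With $k=1$ your algorithm therefore announces an $x$-flower of order~$2$ that does not exist and never reaches the branch that would output the legitimate set~$Z$. The phenomenon is general: whenever $x$ has a simple edge to some $u$, the trivial path through $u$ contributes a unit of flow corresponding to no cycle, so your flow counts (roughly) $|N(x)|$ rather than the flower order.

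The difficulty you anticipated, pushing a min cut off the $s$/$t$ arcs onto the split arcs, is real but secondary; the primary issue is that packing cycles through a fixed vertex is an $A$-path packing problem (with $A$ the multiset of neighbours of $x$, parallel edges giving repeated elements), not a two-terminal flow. The usual remedies are either to invoke Gallai's $A$-path theorem, or, exploiting that $G\setminus(T\cup\{x\})$ is a forest, to observe that in a forest the maximum number of vertex-disjoint $A$-paths equals the minimum $A$-path hitting set and compute both by a bottom-up scan. Either route yields the dichotomy ``$x$-flower of order $k+1$'' versus ``$Z'\subseteq V\setminus(T\cup\{x\})$ of size at most $k$ meeting every $x$-cycle of $G\setminus T$'', after which $Z:=T\cup Z'$ has size at most $(2k-1)+k<3k$ exactly as you intended.
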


The next proposition easily follows from Lemma~\ref{set-H-exp-lemma-gen}.

\begin{proposition}
Let $G$ be an undirected $\alpha$-colored multi-graph
and $x$ be a vertex without a self loop in $G_i$, for $i \in \{1,2,\dots, \alpha\}$.
Then in polynomial time we can either decide that $(G,k)$ is a no-instance
of \alphafvsfull~or check whether there is an $x$-flower of order $k+1$ in $G_i$, or find a set
of vertices $Z\subseteq V(G)\setminus \{x\}$ of size at most $3k$ intersecting every cycle in $G_i$.
\label{set-H-exp-lemma}
\end{proposition}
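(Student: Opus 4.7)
The plan is to apply Lemma~\ref{set-H-exp-lemma-gen} directly to the uncolored multi-graph $G_i$ with the distinguished vertex $x$. Since $x$ has no self loop in $G_i$, the hypothesis of that lemma is met and we obtain, in polynomial time, one of three outcomes: (a) $(G_i,k)$ is a no-instance of the classical \textsc{Feedback Vertex Set} problem; (b) an $x$-flower of order $k+1$ in $G_i$; or (c) a feedback vertex set $Z \subseteq V(G_i) \setminus \{x\}$ of $G_i$ with $|Z| \leq 3k$.

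Outcomes (b) and (c) translate verbatim into the last two alternatives of the proposition, since an $x$-flower in $G_i$ and a set $Z \subseteq V(G) \setminus \{x\}$ hitting every cycle of $G_i$ are precisely what the proposition asks for. The only step requiring justification is the translation of outcome (a) into a no-instance of \alphafvsfull~on the original $\alpha$-colored graph $G$.

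For that step, I would argue contrapositively: suppose $G$ admits an \alphafvssolshort~$S$ with $|S| \leq k$. By definition, $G_j \setminus S$ is a forest for every $j \in \onetoalpha$, and in particular $G_i \setminus S$ is a forest. Hence $S$ intersects every cycle of $G_i$ and is therefore a feedback vertex set of $G_i$ of size at most $k$, contradicting outcome (a). So whenever the algorithm returns outcome (a), we can safely conclude that $(G,k)$ is a no-instance of \alphafvsfull.

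Since Lemma~\ref{set-H-exp-lemma-gen} runs in polynomial time and the case analysis above adds only a constant amount of reasoning, the overall procedure is polynomial-time, completing the proof. There is no genuine obstacle here; the only subtle point is the one-way reduction from \alphafvs\ to single-color \textsc{Feedback Vertex Set} via projection onto $G_i$, which follows immediately from the definition of an \alphafvssolshort.
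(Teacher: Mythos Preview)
Your proposal is correct and matches the paper's approach: the paper simply states that the proposition ``easily follows from Lemma~\ref{set-H-exp-lemma-gen}'', and your argument---applying that lemma to $G_i$ and noting that any \alphafvssolshort\ of $G$ is in particular a feedback vertex set of $G_i$---is exactly the intended justification.
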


After applying reduction rules \textsc{\alphafvs.R1} to \textsc{\alphafvs.R5} exhaustively, we know that the
degree of a vertex in each $G_i$ is either $0$ or at least $2$ and no vertex has a self loop.
Now consider a vertex $v$ whose degree in $G_i$ is more than $3k(k+4)$.
By Proposition~\ref{set-H-exp-lemma}, we know that one of three cases must apply:
\begin{itemize}
\item (1) $(G,k)$ is a no-instance of \alphafvs,
\item (2) we can find (in polynomial time) a $v$-flower of order $k+1$ in $G_i$, or
\item (3) we can find (in polynomial time) a set $H_v \subseteq V(G_i)$ of size at
most $3k$ such that $v \notin H_v$ and $G_i \setminus H_v$ is a forest.
\end{itemize}
The following reduction rule allows us to deal with case (2).
The safeness of the rule follows from the fact that if $v$ in not included in the solution
then we need to have at least $k+1$ vertices in the solution.

\vspace*{2mm}
\noindent \textsc{Reduction \alphafvs.R6.} For $i \in \{1,2,\dots, \alpha\}$, if $G_i$ has a vertex $v$ such that there is a $v$-flower of order at least $k+1$  in $G_i$, then include $v$ in the solution $X$ and decrease $k$ by $1$. The resulting instance is $(G\setminus \{v\},k-1)$.
\vspace*{2mm}

When in case (3), we bound the degree of $v$ as follows.
Consider the graph $G'_i=G_i \setminus (H_v \cup \{v\} \cup V^i_0)$, where $V^i_0$ is
the set of degree $0$ vertices in $G_i$. Let $\mathcal{D}$ be the set of components in
the graph $G'_i$ which have a vertex adjacent to $v$ . Note that each $D\in \mathcal{D}$
is a tree and $v$ cannot have two neighbors in $D$, since $H_v$ is a feedback vertex set in $G_i$.
We will now argue that each component $D\in \mathcal{D}$ has a
vertex $u$ such that $u$ is adjacent to a vertex in $H_v$. Suppose for a contradiction that there is a
component $D\in \mathcal{D}$ such that $D$ has no vertex which is adjacent to a vertex in $H_v$.
$D \cup \{v\}$ is a tree with at least $2$ vertices, so $D$ has a vertex $w$, such that $w$ is a degree-one vertex in $G_i$,
contradicting the fact that each vertex in $G_i$ is either of degree zero or of degree at least two.

After exhaustive application of \textsc{\alphafvs.R4}, every pair of vertices in $G_i$ can
have at most two edges between them. In particular, there can be at most two edges
between $h \in H_v$ and $v$. If the degree of $v$ in $G_i$ is more than $3k(k+4)$, then the
number of components $\lvert \mathcal{D} \rvert$, in $G'_i$ is more than $3k(k+2)$, since $\lvert H_v \rvert \leq 3k$.

Consider the bipartite graph $\mathcal{B}$, with bipartition $(H_v,Q)$,
where $Q$ has a vertex $q_D$ corresponding to each component $D\in \mathcal{D}$.
We add an edge between $h \in H_v$ and $q_D \in Q$ to
$E(\mathcal{B})$ if and only if $D$ has a vertex $d$ which is adjacent to $h$ in $G_i$.

\vspace*{2mm}
\noindent
\textsc{Reduction \alphafvs.R7.} Let $v$ be a vertex of degree
at least $3k(k+4)$ in $G_i$, for $i \in \{1,2,\dots, \alpha \}$, and let $H_v$ be a
feedback vertex set in $G_i$ not containing $v$ and of size at most $3k$.
\begin{itemize}
\item Let $Q' \subseteq Q$ and $H \subseteq H_v$ be the sets of vertices
obtained after applying Lemma~\ref{qexpansion} with $q=k+2$, $A=H_v$, and $B=Q$, such
that $H$ has a $(k+2)$-expansion into $Q'$ in $\mathcal{B}$;
\item Delete all the edges $(d,v)$ in $G_i$, where $d \in V(D)$ and $q_D \in Q'$;
\item Add double edges between $v$ and $h$ in $G_i$, for all $h \in H$ (unless such edges already exist).
\end{itemize}
\vspace*{2mm}

By Lemma~\ref{qexpansion} and Proposition~\ref{set-H-exp-lemma}, \textsc{\alphafvs.R7} can
be applied in time polynomial in the input size.
%In Lemma~\ref{safe-r7} we show that reduction \textsc{\alphafvs.R7} is safe and the resulting instance is an equivalent instance.

\begin{lemma}
Reduction rule \textsc{\alphafvs.R7} is safe.
\label{safe-r7}
\end{lemma}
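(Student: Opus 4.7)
The plan is to verify both implications of the safeness statement. Since \textsc{\alphafvs.R7} modifies only the color-$i$ graph, I only need to compare cycles in $G_i$ with cycles in the modified $G'_i$. The decisive structural input is the $(k+2)$-expansion of Lemma~\ref{qexpansion}: for every $h \in H$ there are $k+2$ pairwise vertex-disjoint components $D_1,\ldots,D_{k+2}$ in $\mathcal{D}$, each of which contains both a neighbor of $h$ and the unique neighbor of $v$ (uniqueness holds because $H_v$ is a feedback vertex set of $G_i$ and $D_j\cup\{v\}$ is a tree). Concatenating these gives $k+2$ internally vertex-disjoint $v$-$h$ paths in $G_i$. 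I will also repeatedly use property (2) of Lemma~\ref{qexpansion}, namely $N_{\mathcal{B}}(Q')\subseteq H$: any vertex of a component $D$ with $q_D\in Q'$ that is adjacent in $G_i$ to $H_v$ is actually adjacent only to vertices of $H$.

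For the forward direction, let $S$ be an \alphafvssolshort{} of $G$ with $|S|\le k$. If $v\in S$, every new double edge is killed at $v$, and every remaining cycle of $G'_i$ is a cycle of $G_i$ (since edge deletions create no cycles), so $S$ hits it. If $v\notin S$, I will first argue $H\subseteq S$ by contradiction: suppose some $h\in H$ lies outside $S$; then $v,h\notin S$, the $k+2$ internally vertex-disjoint $v$-$h$ paths in $G_i$ have pairwise disjoint interiors, and any single vertex of $S$ lies on at most one of them, so at most $k$ paths are destroyed in $G_i\setminus S$; the surviving $\ge 2$ paths together form a cycle in $G_i\setminus S$, contradicting that $S$ is an \alphafvssolshort. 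Consequently $H\subseteq S$ hits every new double-edge cycle of $G'_i$, and all remaining cycles of $G'_i$ lie in $G_i$ and are hit by $S$.

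For the reverse direction, let $S'$ be an \alphafvssolshort{} of $G'$ with $|S'|\le k$. If $v\in S'$, all cycles of $G_i$ through $v$ are hit, while cycles of $G_i$ avoiding $v$ are unchanged by the reduction and are also cycles of $G'_i$, hence hit by $S'$. If $v\notin S'$, then each pair of parallel edges $(v,h)$, $h\in H$, forms a $2$-cycle in $G'_i$ that must be hit, forcing $H\subseteq S'$. Now take any cycle $C$ of $G_i$ that uses a deleted edge $(v,d)$ with $d\in V(D)$ and $q_D\in Q'$; since $D$ is a tree and $v$ has a unique neighbor in $D$, the cycle $C$ is not confined to $D\cup\{v\}$ and must exit $D$ via some vertex $h'\in H_v$ adjacent in $G_i$ to a vertex of $D$. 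Property (2) of Lemma~\ref{qexpansion} gives $h'\in N_{\mathcal{B}}(q_D)\subseteq H\subseteq S'$, so $S'$ hits $C$. Cycles of $G_i$ that use no deleted edge are themselves cycles of $G'_i$, and so are already hit by $S'$.

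The principal obstacle is the reverse direction: after edges $(v,d)$ vanish, one must show the cycles that formerly used them remain covered without assuming anything about $H_v\setminus H$. The decisive combination is the tree structure of each $D$ together with the Expansion Lemma's covering property, which funnels any escape from $D$ through $H$, and the double-edge gadget at $v$, which pins $H$ inside $S'$ whenever $v$ itself escapes the solution.
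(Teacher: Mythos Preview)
Your proof is correct and follows essentially the same approach as the paper. The paper first isolates the claim ``any $k$-sized \alphafvssolshort\ of $G$ or of $G'$ contains $v$ or all of $H$'' and then uses it in both directions, whereas you inline that claim into each direction; but the underlying arguments (the $(k+2)$-expansion yielding many internally disjoint $v$--$h$ paths in $G$, and the double edges forcing $H\subseteq S'$ in $G'$) are identical. If anything, your reverse direction is more explicit than the paper's: you spell out why a cycle through a deleted edge must leave $D$ into $H_v$ and then invoke property~(2) of Lemma~\ref{qexpansion} to land in $H$, while the paper compresses this into the single phrase ``the only deleted edges from $G'$ belong to cycles passing through $H$''.
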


\begin{proof}
Let $G$ be an $\alpha$-colored graph where reductions \textsc{\alphafvs.R1} to \textsc{\alphafvs.R6} do not apply.
Let $v$ be a vertex of degree more than $3k(k+4)$ in $G_i$, for $i \in \{1,2,\dots, \alpha \}$.
Let $H \subseteq H_v$, $Q' \subseteq Q$ be the sets defined above and let
$G'$ be the instance obtained after a single application of reduction rule \textsc{\alphafvs.R7}.
We show that $G$ has an \alphafvssolshort~of size at most $k$ if and only if $G'$ has an \alphafvssolshort~of size at most $k$.
We need the following claim.

\begin{claim}
Any $k$-sized \alphafvssolshort\ $S$ of $G$ or $G'$ either contains $v$ or contains all the vertices in $H$.
\end{claim}

\begin{proof}
Since there exists a cycle (double edge) between $v$ and every vertex $h \in H$ in $G'_i$,
it easily follows that either $v$ or all vertices in $H$ must be in any solution for $G'$.

Consider the case of $G$. We assume $v \notin S$ and
there is a vertex $h \in H$ such that $h \notin S$. Note that
$H$ has a $(k+2)$-expansion into $Q'$ in $\mathcal{B}$, therefore $h$ is the center of
a $(k+2)$-star in $\mathcal{B}[H \cup Q']$.
Let $Q_h$ be the set of neighbors of $h$ in $\mathcal{B}[H \cup Q']$ ($\lvert Q_h \rvert \geq k+2$).
For each $q_D, q_{D'} \in Q_h$, their corresponding components $D,D'\in \mathcal{D}$
form a cycle with $v$ and $h$.   %, pairwise intersecting at only $v$ and $h$.
If both $h$ and $v$ are not in $S$, then we need to pick at least $k+1$
vertices to intersect the cycles formed by $D$, $D'$, $h$, and $v$, for each $q_D, q_{D'} \in Q'$.
Therefore, $H \subseteq S$, as needed.
\qed
\end{proof}

In the forward direction, consider an \alphafvssolshort\ $S$ of size at most $k$ in $G$.
For $j \in \{1,2,\dots,\alpha \} \setminus \{i\}$, $G'_j = G_j$ and therefore $S$ intersects all the cycles in $G'_j$.
By the previous claim, we can assume that either $v \in S$ or $H \subseteq S$.
In both cases, $S$ intersects all the new cycles created in $G'_i$ by adding double edges between $v$ and $h \in H$.
Moreover, apart from the double edges between $v$ and $h \in H$, all the cycles in
$G'_i$ are also cycles in $G_i$, therefore $S$ intersects all those
cycles in $G'_i$. It follows that $S$ is an \alphafvssolshort\ in $G'$.

In the reverse direction, consider an \alphafvssolshort~$S$ in $G'$ of size at most $k$.
Note that for $j \in \{1,2,\dots,\alpha \} \setminus \{i\}$, $G'_j=G_j$. Therefore $S$ intersects all the cycles in $G_j$.
By the previous claim, at least one of the following must hold: (1) $v \in S$ or (2) $H \subseteq S$.

Suppose that (1) $v \in S$. Since $G'_i\setminus \{v\}=G_i \setminus \{v\}$, $S \setminus \{v\}$
intersects all the cycles in $G'_i \setminus \{v\}$ and $G_i\setminus \{v\}$. Therefore
$S$ intersects all the cycles in $G_i$ and $S$ is an \alphafvssolshort~in $G$.
In case (2), i.e. when $v \not\in S$ but $H \subseteq S$, any cycle in $G$
which does not intersect with $S$ is also a cycle in $G'$ (since such a cycle
does not intersect with $H$ and the only deleted
edges from $G'$ belong to cycles passing through $H$). In other words,
$S \setminus H$ intersects all cycles in both $G'_i \setminus H$ and $G_i\setminus H$
and, consequently, $S$ is an \alphafvssolshort\ in $G$.
\qed
\end{proof}

After exhaustively applying all reductions \textsc{\alphafvs.R1} to \textsc{\alphafvs.R7}, the
degree of a vertex $v\in V(G_i)$ is at most $3k(k+4)-1$ in $G_i$, for $i \in \{1,2,\dots, \alpha\}$.

\subsection{Bounding the number of vertices in $G$}
Having bounded the maximum total degree of a vertex in $G$, we now focus on bounding the number
of vertices in the entire graph. To do so, we first compute an
approximate solution for the \alphafvs\ instance using
the polynomial-time $2$-approximation algorithm of Bafna et al.~\cite{2-approx-fvs-bafna} for
the {\sc{Feedback Vertex Set}} problem in undirected graphs.
In particular, we compute a $2$-approximate solution $S_i$ in $G_i$, for $i \in \{1,2, \dots, \alpha\}$.
We let $S=\cup_{i=1}^{\alpha} S_i$. Note that $S$ is an \alphafvssolshort~in $G$ and has size
at most $2\alpha \lvert S_{OPT}\rvert$, where $\lvert S_{OPT} \rvert$ is an optimal \alphafvssolshort~in $G$.
Let $F_i=G_i \setminus S_i$. Let $T^i_{\leq 1}$, $T^i_{2}$, and $T^i_{\geq 3}$, be the sets of
vertices in $F_i$ having degree at most one in $F_i$, degree exactly two in $F_i$,
and degree greater than two in $F_i$, respectively.

Later, we shall prove that bounding the maximum degree in $G$ is sufficient for bounding
the sizes of $T^i_{\leq 1}$ and $T^i_{\leq 1}$, for all $i \in \{1,2, \dots, \alpha\}$.
We now focus on bounding the size of $T^i_{2}$ which, for each $i \in \{1,2, \dots, \alpha\}$, corresponds
to a set of degree-two paths. In other words, for a fixed $i$, the graph induced by the vertices in $T^i_{2}$
is a set of vertex-disjoint paths. We say a set of distinct vertices $\{v_1, \ldots, v_\ell\}$
in $T^i_{2}$ forms a {\em maximal degree-two path} if $(v_j,v_{j+1})$ is an edge, for all $1 \leq j \leq \ell$,
and all vertices $\{v_1, \ldots, v_\ell\}$ have degree exactly two in $G_i$.

We enumerate all the maximal degree-two paths in $G_i \setminus S_i$, for $i \in \{1,2,\dots, \alpha \}$.
Let this set of paths in $G_i\setminus S_i$ be $\mathcal{P}_i=\{P^i_1,P^i_2, \dots, P^i_{n_i}\}$, where
$n_i$ is the number of maximal degree-two paths in $G_i \setminus S_i$. We introduce a special symbol $\phi$ and add
$\phi$ to each set $\mathcal{P}_i$, for $i \in \{1,2,\dots, \alpha\}$.
The special symbol will be used later to indicate that no path is chosen from the set $\mathcal{P}_i$.

Let $\mathfrak{S}= \mathcal{P}_1 \times \mathcal{P}_2 \times \dots \times \mathcal{P}_{\alpha}$ be the
set of all tuples of maximal degree-two paths of different colors.
For $\tau \in \mathfrak{S}$, $j \in \{1,2,\dots, \alpha \}$, $j(\tau)$ denotes the element from
the set $\mathcal{P}_j$ in the tuple $\tau$, i.e. for $\tau =(Q_1,\phi,\dots, Q_j,\dots, Q_{\alpha})$, $j(\tau)=Q_j$ (for example $2(\tau)=\phi$).

For a maximal degree-two path $P^i_j \in \mathcal{P}_i$ and $\tau \in \mathfrak{S}$, we define 
\emph{Intercept}$(P^i_j,\tau)$ to be the set of vertices in path $P^i_j$ which are present 
in all the paths in the tuple (of course a $\phi$ entry does not contribute to this set).
Formally, \emph{Intercept}$(P^i_j,\tau) = \emptyset$ if $P^i_j \not\in \tau$ otherwise
\emph{Intercept}$(P^i_j,\tau)=\{v \in V(P^i_j) \lvert$ for all $1 \leq t \leq \alpha$, if $t(\tau) \neq \phi$ then $v \in V(t(\tau))\}$.

We define the notion of \emph{unravelling} a path $P^i_j \in \mathcal{P}_i$ from all
other paths of different colors in $\tau \in \mathfrak{S}$ at a vertex
$u\in$ \emph{Intercept}$(P^i_j,\tau)$ by creating a separate copy of $u$ for each path.
Formally, for a path $P^i_j \in \mathcal{P}_i$, $\tau\in \mathfrak{S}$, and a vertex
$u \in$\emph{Intercept}$(P^i_j,\tau)$, the \emph{Unravel}$(P^i_j,\tau,u)$ operation does the following.
For each $t \in \{1,2,\dots, \alpha \}$ let $x_t$ and $y_t$ be the unique neighbors
of $u$ on path $t(\tau)$. Create a vertex $u_{t(\tau)}$ for each path $t(\tau)$, for $1 \leq t \leq \alpha$, delete
the edges $(x_t,u)$ and $(u,y_t)$ from $G_t$ and add the edges $(x_t,u_{t(\tau)})$ and $(u_{t(\tau)},y_t)$ in $G_t$.
Figure~\ref{fig-unravel} illustrates the unravel operation for two paths of different colors.

\begin{figure}
\centering
\includegraphics[scale=0.4]{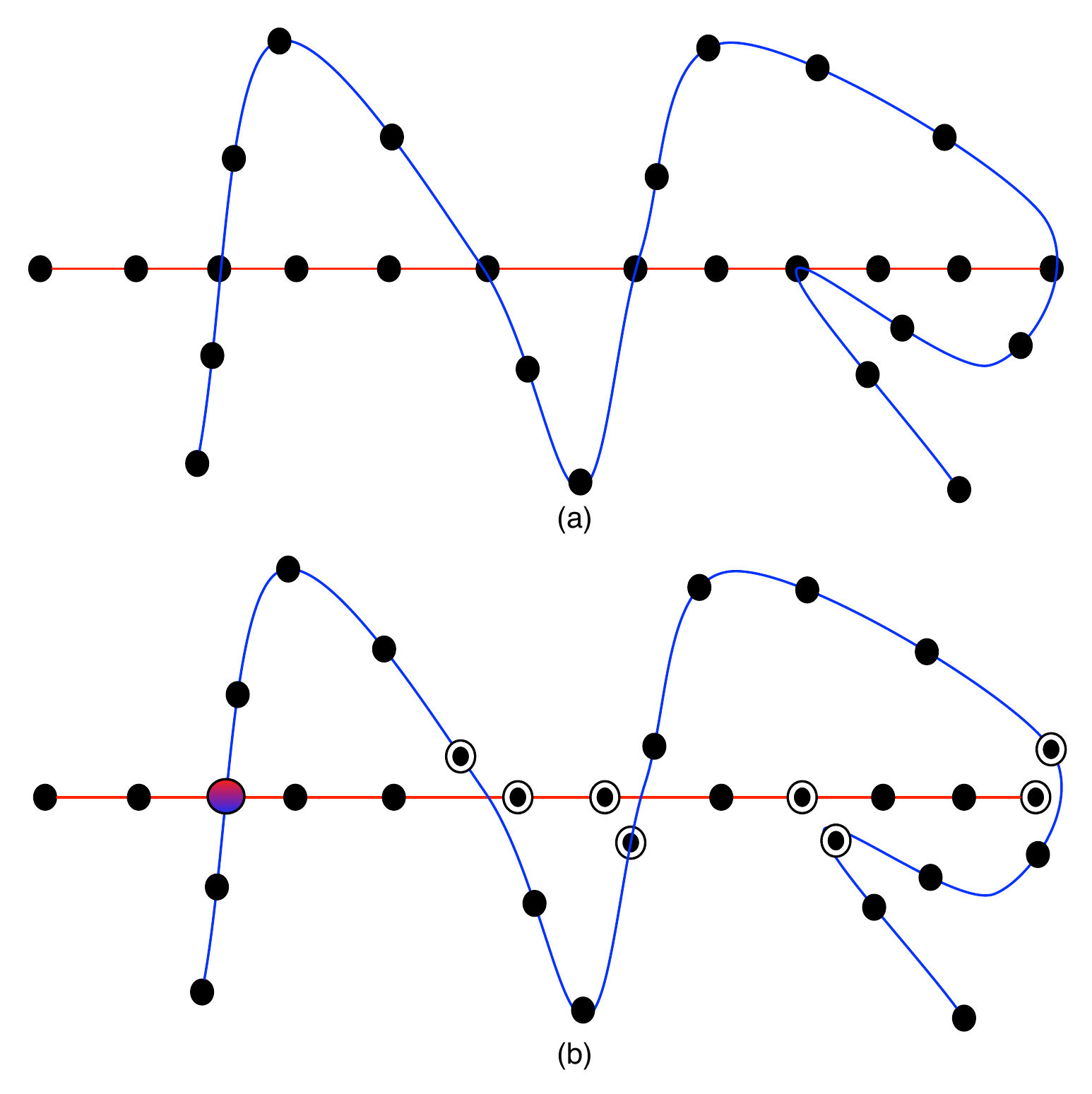}
\caption{Unravelling two paths with five common vertices (a) to obtain two paths with one common vertex (b).}
\label{fig-unravel}
\end{figure}

%We are now ready to give the reduction rule which is used to bound the number
%of vertices in maximal degree-two path in $G_i \setminus S_i$ and hence bounding the number of vertices in $T^i_{2}$.

\vspace*{2mm}
\noindent
\textsc{Reduction \alphafvs.R8.} For a path
$P^i_j\in \mathcal{P}_i$, $\tau \in \mathfrak{S}$, if $\lvert$\emph{Intercept}$(P^i_j,\tau)\rvert > 1$,
then for a vertex $u\in$ \emph{Intercept}$(P^i_j,\tau)$, \emph{Unravel}$(P^i_j,\tau,u)$.
\vspace*{2mm}

\begin{lemma}
Reduction rule \alphafvs.R8 is safe.
\label{unravel-safe}
\end{lemma}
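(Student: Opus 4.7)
The plan is to verify that $(G,k)$ is a yes-instance of \alphafvs\ if and only if $(G',k)$ is, where $G'$ denotes the graph obtained by one application of \textsc{\alphafvs.R8}. The argument proceeds via a natural bijection between cycles of the two graphs. Let $T = \{t \in \onetoalpha : t(\tau) \neq \phi\}$ be the colors touched by the unravel, and fix some $u' \in$ \emph{Intercept}$(P^i_j,\tau) \setminus \{u\}$, which exists because the intercept has size at least two.

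The first step is to establish a cycle correspondence. For each $t \in T$, the new vertex $u_{t(\tau)}$ in $G'_t$ inherits exactly the two color-$t$ edges incident to $u$ on the path $t(\tau)$, so cycles of $G_t$ through $u$ correspond bijectively to cycles of $G'_t$ through $u_{t(\tau)}$ via the substitution $u \mapsto u_{t(\tau)}$; cycles avoiding $u$ are unchanged. For $t \notin T$ the color-$t$ graphs coincide, so cycles are identical.

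For the reverse direction, given an \alphafvssolshort\ $S'$ of $G'$ with $|S'| \leq k$, I would define $S$ by replacing every copy $u_{t(\tau)}$ appearing in $S'$ with the single vertex $u$. Then $|S| \leq |S'| \leq k$, and the cycle correspondence ensures that $S$ hits every cycle of each $G_t$.

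For the forward direction, let $S$ be an \alphafvssolshort\ of $G$ with $|S| \leq k$. If $u \notin S$, I take $S' = S$; every witness in $S$ for a $G$-cycle transfers, under the bijection, to a witness in $S'$ for the corresponding $G'$-cycle. If $u \in S$, I would take $S' = (S \setminus \{u\}) \cup \{u'\}$. The crux of the argument is that for every $t \in T$, both $u$ and $u'$ have degree exactly two in $G_t$ and lie on the maximal degree-two path $t(\tau)$; hence any cycle of $G_t$ through $u$ must use both of $u$'s color-$t$ edges, propagate along the path, and therefore also pass through $u'$. This ensures $u'$ inherits the hitting role of $u$ on every color-$t$ cycle with $t \in T$ (after the substitution $u \mapsto u_{t(\tau)}$). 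I expect the main obstacle to be carefully handling the bookkeeping when $u$ participates in cycles of $G_t$ for $t \notin T$, so that the swap $u \to u'$ does not uncover any such cycle; this step will require leveraging the structural invariants enforced by the previously applied reductions \textsc{\alphafvs.R1}--\textsc{\alphafvs.R7}, together with the observation that if some $G_t$-cycle through $u$ is not covered by $u'$ or by $S \setminus \{u\}$, then $u$ could be replaced in a minimal solution by a single distinct vertex along that cycle without increasing size.
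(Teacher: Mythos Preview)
Your overall plan is essentially the paper's argument: the same cycle bijection between $G_t$ and $G'_t$, the same reverse-direction move of collapsing every copy $u_{t(\tau)}$ back to $u$, and the same forward-direction swap $u\mapsto u'$ when $u\in S$ (the paper writes $x$ for your $u'$).

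The obstacle you flag is genuine. For colors $t\notin T$ the graph $G'_t$ equals $G_t$ and $u$ still sits there with all its edges; a $G_t$-cycle through $u$ need not pass through $u'$, so the set $(S\setminus\{u\})\cup\{u'\}$ can leave such a cycle uncovered. Your suggested patch---replace $u$ by some vertex on that uncovered cycle---does not close the gap, because this new vertex need not lie on the degree-two paths $t(\tau)$ for $t\in T$, and hence may in turn fail to hit the $G'_t$-cycles through $u_{t(\tau)}$ that $u$ (via $u'$) was supposed to cover. Invoking minimality of $S$ does not help either, since the roles of the different colors are entangled at $u$. The paper's proof does not resolve this point: it writes ``Let $P_t$ be the path in $\mathcal{P}_t$ containing $u$, for $t\in\{1,\ldots,\alpha\}$'' and then argues as though $u$ lay on a maximal degree-two path in \emph{every} color, which is precisely what can fail when some coordinate of $\tau$ equals $\phi$. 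So your proposal is as complete as the paper's own argument, and you have correctly isolated a step that both leave unjustified; the swap $u\mapsto u'$ is fully sound only when $T=\{1,\ldots,\alpha\}$.
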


\begin{proof}
Let $G$ be an $\alpha$-colored graph and $S_i$ be a $2$-approximate
feedback vertex set in $G_i$, for $i \in \{1,2,\dots, \alpha \}$. Let $\mathcal{P}_i$ be
the set of maximal degree-two paths in
$G_i \setminus S_i$ and $\mathfrak{S} = \mathcal{P}_1 \times \mathcal{P}_2 \times \dots \times \mathcal{P}_{\alpha}$.
For a path $P^i_j\in \mathcal{P}_i$, $\tau \in \mathfrak{S}$, $\lvert$Intercept$(P^i_j,\tau)\rvert > 1$, and $u\in $
Intercept$(P^i_j,\tau)$, let $G'$ be the $\alpha$-colored graph obtained after applying Unravel$(P^i_j,\tau,u)$ in $G$.
We show that $G$ has an \alphafvssolshort~of size at most $k$, if and only if $G'$ has an \alphafvssolshort~of size at most $k$.

In the forward direction, consider an \alphafvssolshort~$S$ in $G$ of size at
most $k$. Let $x$ be a vertex in \emph{Intercept}$(P^i_j,\tau)\setminus \{u\}$.  
We define $S' = S$ if $u \not\in S$ and $S' = (S \setminus \{u\})\cup \{x\}$ otherwise.   
A cycle $C$ in the graph $G'_t$ not containing $u_{t(\tau)}$, where $u_{t(\tau)}$ is 
the copy of $u$ created for path $t(\tau)$, $\tau \in \mathfrak{S}$, 
and $t \in \{1,2,\dots, \alpha\}$, is also a cycle in $G_t$. 
Therefore $S'$ intersects $C$. Let $P_t$ be the path in $\mathcal{P}_t$ containing 
$u$, for $t \in \{1,2,\dots,\alpha\}$. Note that in $\mathcal{P}_i$, there is exactly one 
maximal degree-two path containing $u$ and all the cycles in $G_t$ containing $u$ must contain $P_t$. 
All the cycles in $G'_t$ containing $u_{t(\tau)}$ must contain $x$, since $u_{t(\tau)}$ is the 
private copy of $u$ for the degree-two path $t(\tau)$ containing $x$.
We consider the following cases depending on whether $u$ belongs to $S$ or not.
\begin{itemize}
\item $u \in S$: A cycle $C$ in $G'_t$, $t \in \{1,2,\dots, \alpha\}$, containing $u_{t(\tau)}$ also contains $x$.
Therefore $S'$ intersects $C$. 
\item $u \notin S$: Corresponding to a cycle $C$ in $G'_t$, $t \in \{1,2,\dots, \alpha\}$,
containing $u_{t(\tau)}$, there is a cycle $C'$ on vertices $(V(C)\cup\{u\})\setminus \{u_{t(\tau)}\}$ in $G_t$.
But $S$ is an \alphafvssolshort~in $G$ and therefore both $S$ and $S'$ must contain a
vertex $y \in V(C')\setminus \{u\}$. 
\end{itemize}

In the reverse direction, let $S$ be an \alphafvssolshort~in $G'$. 
We define $S' = S$ if $\{u_{l(\tau)} \lvert u_{l(\tau)} \in S, 1 \leq l \leq \alpha\} \cap S \neq \emptyset$ 
and $S' = (S \setminus \{u_{l(\tau)} \lvert u_{l(\tau)} \in S, 1 \leq l \leq \alpha\}) \cup \{u\}$ otherwise. 
All the cycles in $G_t$ not containing $u$ are the cycles in $G'_t$ not containing $u_{t(\tau)}$. 
Therefore $S'$ intersects all those cycles. We consider the following cases depending
on whether there is some $t' \in \{1,2,\dots, \alpha\}$ for which $u_{t'(\tau)}$ belongs to $S$ or not.
\begin{itemize}
\item For all $t' \in \{1,2,\dots, \alpha\}$, $u_{t'(\tau)} \notin S$. Let $C$ be a cycle 
in $G_t$ containing $u$, for $t \in \{1,2,\dots,\alpha\}$. Note that $G'_t$ has a 
cycle $C'$ corresponding to $C$, with $V(C')=(V(C)\setminus \{u\})\cup \{u_{t(\tau)}\}$. 
$S$ intersects $C'$, therefore both $S$ and $S'$ have a vertex $y \in V(C')\setminus \{u_{t(\tau)}\}$. 
Since $y \in V(C)$, $S'$ intersects the cycle $C$ in $G_t$. 
\item For some $t' \in \{1,2,\dots, \alpha\}$, $u_{t'(\tau)} \in S$. Note that $S'$ intersects all 
the cycles in $G_t$ containing $u$, for $t \in \{1,2,\dots, \alpha\}$. Moreover, the only purpose 
of $u_{t'(\tau)}$ being in $S$ is to intersect a cycle $C'$ in $G'_t$ containing $u_{t'(\tau)}$. However, the 
corresponding cycle in $G_t$ can be intersected by a single vertex, namely $u$. Therefore, $S'$ is an \alphafvssolshort~in $G$. 
\end{itemize}

This completes the proof.
\qed
\end{proof}

\begin{theorem}
\alphafvs\ admits a kernel on $\OO(\alpha k^{3(\alpha+1)})$ vertices.
\end{theorem}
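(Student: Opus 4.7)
The plan is to bound the total number of vertices in the reduced graph by separately controlling (i) the approximate solution $S = \bigcup_{i=1}^{\alpha} S_i$, (ii) for each colour $i$ the atypical-degree sets $T^i_{\le 1}$ and $T^i_{\ge 3}$ of the forest $F_i = G_i \setminus S_i$, and (iii) the path-vertex sets $T^i_2$, where \alphafvs.R8 does the real work.

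First, applying the $2$-approximation of~\cite{2-approx-fvs-bafna} to each $G_i$ gives $|S_i| \le 2k$, since any \alphafvssolshort{} of size at most $k$ is in particular a feedback vertex set of $G_i$; hence $|S| \le 2\alpha k$. After \alphafvs.R1--\alphafvs.R7, the maximum degree in each $G_i$ is at most $3k(k+4)-1 = O(k^2)$. Every vertex in $T^i_{\le 1}$ is a leaf or isolated vertex of $F_i$, so by property~(P3) it must have a neighbour in $S_i$; therefore $|T^i_{\le 1}| \le |S_i| \cdot O(k^2) = O(k^3)$, and the standard forest inequality gives $|T^i_{\ge 3}| \le |T^i_{\le 1}| = O(k^3)$. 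Contracting each maximal degree-two path of $F_i$ to a single edge produces a forest whose edges are in bijection with $\mathcal{P}_i$, so also $|\mathcal{P}_i| = O(k^3)$.

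The heart of the proof is a profile-counting argument for the paths themselves. After exhaustive application of \alphafvs.R8, the invariant $|\mathrm{Intercept}(P^i_j, \tau)| \le 1$ holds for every $P^i_j \in \mathcal{P}_i$ and every $\tau \in \mathfrak{S}$ with $i(\tau) = P^i_j$. For $v \in V(P^i_j)$, define its \emph{profile} to be the unique tuple $\tau_v$ with $i(\tau_v) = P^i_j$ and, for each $t \neq i$, $t(\tau_v)$ equal to the path of $\mathcal{P}_t$ containing $v$ (or $\phi$ if no such path exists). Two distinct vertices of $P^i_j$ sharing the same profile $\tau$ would both lie in $\mathrm{Intercept}(P^i_j, \tau)$, contradicting the R8 invariant, so the vertices of $P^i_j$ inject into the set of profiles, of which there are at most $\prod_{l=1}^{\alpha}(|\mathcal{P}_l|+1) = O(k^{3\alpha})$. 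Summing over paths and colours yields $\sum_i |T^i_2| \le \sum_i |\mathcal{P}_i| \cdot O(k^{3\alpha}) = O(\alpha k^{3(\alpha+1)})$, which dominates the bounds on $|S|$ and on $\sum_i(|T^i_{\le 1}|+|T^i_{\ge 3}|)$ and establishes the claimed kernel size.

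The main subtlety I anticipate is justifying the profile argument in the graph produced by \alphafvs.R8, since each unravelling replaces a single vertex $u$ by up to $\alpha$ fresh copies $u_{t(\tau)}$. One needs to verify that every application of \alphafvs.R8 preserves both the bound $|\mathcal{P}_i| = O(k^3)$ (each new copy still sits as a degree-two interior vertex on the modified path of its colour, so the maximal degree-two path structure is intact) and the fact that each vertex in the final graph has a well-defined profile with respect to the updated path collections, so that the injection from $V(P^i_j)$ into the set of profiles genuinely applies to the reduced graph.
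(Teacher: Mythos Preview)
Your proposal follows essentially the same route as the paper's proof: bound $|S_i|$, $|T^i_{\le 1}|$, $|T^i_{\ge 3}|$, and $|\mathcal{P}_i|$ by $O(k^3)$ using the degree cap from \textsc{\alphafvs.R7}, then use the \textsc{\alphafvs.R8} invariant to bound the length of each degree-two path by $|\mathfrak{S}| = O(k^{3\alpha})$ and sum. Your explicit ``profile'' injection (mapping each $v \in V(P^i_j)$ to the tuple $\tau_v$ recording which path of $\mathcal{P}_t$ contains $v$, with $\phi$ otherwise) is exactly the mechanism the paper leaves implicit when it writes ``for each $\tau \in \mathfrak{S}$ there is at most one vertex in $\mathrm{Intercept}(P^i_j,\tau)$, therefore at most $O(k^{3\alpha})$ vertices in $P^i_j$''; the paper additionally invokes property~(P4) (total degree $\ge 3$) at this step, whereas you do not need it for the injection itself. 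The subtlety you flag in your last paragraph---that \textsc{\alphafvs.R8} modifies the graph and one must check that the path collections $\mathcal{P}_i$ and the profile map remain well-defined in the final reduced instance---is real and is not addressed in the paper's proof either, so you are at parity with the source there.
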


\begin{proof}
Consider an $\alpha$-colored graph $G$ on which reduction rules \alphafvs.R1 to \alphafvs.R8 have been exhaustively applied.
For $i \in \{1,2,\dots, \alpha \}$, the degree of a vertex $v \in G_i$ is either $0$ or at least $2$ in $G_i$.
Hence, in what follows, we do not count the vertices of degree $0$ in $G_i$ while counting the vertices in $G_i$;
since the total degree of a vertex $v \in V(G)$ is at least three, there
is some $j\in \{1,2,\dots,\alpha \}$ such that the degree of $v \in V(G_j)$ is at least $2$.

Let $S_i$ be a $2$-approximate feedback vertex set in $G_i$, for $i \in \{1,2,\dots, \alpha\}$.
Note that $S=\cup_{i=1}^{\alpha}S_i$ is a $2\alpha$-approximate \alphafvssolshort\ in $G$.
Let $F_i=G_i \setminus S_i$. Let $T^i_{\leq 1}$, $T^i_{2}$, and $T^i_{\geq 3}$, be the sets of
vertices in $F_i$ having degree at most one in $F_i$, degree exactly two in $F_i$,
and degree greater than two in $F_i$, respectively.

The degree of each vertex $v\in V(G_i)$ is bounded by \degbound\ in $G_i$, for $i \in \{1,2,\dots, \alpha\}$.
In particular, the degree of each $s \in S$ is bounded by \degbound\ in $G_i$.
Moreover, each vertex $v \in T^i_{\leq 1}$ has degree at least $2$ in $G_i$ and
must therefore be adjacent to some vertex in $S$. It follows that $\lvert T^i_{\leq 1} \rvert \in$ \degboundT.

In a tree, the number $t$ of vertices of degree at least three is bounded by $l-2$, where $l$ is the number of leaves.
Hence, $\lvert T^i_{\geq 3} \rvert \in$ \degboundT. Also, in a tree, the number of maximal degree-two paths is bounded by $t+l$.
Consequently, the number of degree-two paths in $G_i \setminus S_i$ is in \degboundT.
Moreover, no two maximal degree-two paths in a tree intersect.

Note that there are at most \degboundT~maximal degree-two
paths in $\mathcal{P}_i$, for $i \in\{1,2,\dots, \alpha \}$, and
therefore $\lvert \mathfrak{S} \rvert =\OO(k^{3\alpha})$. After exhaustive application
of \textsc{\alphafvs.R8}, for each path $ P^i_j \in  \mathcal{P}_i$, $i \in\{1,2,\dots, \alpha \}$,
and $\tau \in \mathfrak{S}$, there is at most one
vertex in \emph{Intercept}$(P^i_j,\tau)$. Also note that after exhaustive application
of reductions \textsc{\alphafvs.R1} to \textsc{\alphafvs.R7}, the total
degree of a vertex in $G$ is at least $3$. Therefore, there can be at most
$\OO(k^{3\alpha})$ vertices in a degree-two path $P^i_j \in \mathcal{P}_i$.
Furthermore, there are at most $\OO(k^3)$ degree-two maximal paths in
$G_i$, for $i \in \{1,2,\dots, \alpha\}$. It follows that
$\lvert T^i_2 \rvert \in \OO(k^{3(\alpha+1)})$ and
$\lvert V(G_i)\rvert \leq  \lvert T^i_{\leq 1} \rvert+\lvert T^i_2 \rvert+\lvert T^i_{\geq 3} \rvert + \lvert S_i \rvert=\OO(k^3)+\OO(k^{3(\alpha+1)})+\OO(k^3)+2k \in \OO(k^{3(\alpha+1)})$.
Therefore, the number of vertices in $G$ is in $\OO(\alpha k^{3(\alpha+1)})$.
\qed
\end{proof}

\section{Hardness results}\label{sec-hardness}
In this section we show that \logfvs, where $n$ is the
number of vertices in the input graph, is \WO-hard. We give a reduction from a special version of
the {\sc Hitting Set (HS)} problem, which we denote by {\sc $\alpha$-Partitioned Hitting Set ($\alpha$-PHS)}.
We believe this version of {\sc Hitting Set} to be of independent interest with possible applications for showing
hardness results of similar flavor. We prove \WO-hardness of
{\sc $\alpha$-Partitioned Hitting Set} by a reduction
from a restricted version of the {\sc Partitioned Subgraph Isomorphism (PSI)} problem.

Before we delve into the details, we start with a simpler reduction from {\sc Hitting Set}
showing that \nfvs\ is \WT-hard.
The reduction closely follows that of Lokshtanov~\cite{wheelhard}
for dealing with the {\sc Wheel-Free Deletion} problem. Intuitively, starting with
an instance $(\mathcal{U}, \mathcal{F}, k)$ of {\sc HS}, we first
construct a graph $G$ on $2|\mathcal{U}||\mathcal{F}|$ vertices consisting of $|\mathcal{F}|$ vertex-disjoint cycles.
Then, we use $|\mathcal{F}|$ colors to uniquely map
each set to a separate cycle; carefully connecting these cycles together
guarantees equivalence of both instances.

%\parnamedefn{{\sc Hitting Set}}
%{A tuple $(\mathcal{U}, \mathcal{F}, k)$, where $\mathcal{F}$ is a collection of subsets
%of the finite universe $\mathcal{U}$ and $k$ is a positive integer.}
%{$k$}
%{Is there a subset $X$ of $\mathcal{U}$ of cardinality at most $k$ such that
%for every $f \in \mathcal{F}$, $f \cap X$ is nonempty?}

\begin{theorem}\label{th-nfvs-hard}
\nfvs\ parameterized by solution size is \WT-hard.
\end{theorem}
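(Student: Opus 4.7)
The plan is to give a polynomial parameterized reduction from {\sc Hitting Set}, which is \WTC\ when parameterized by solution size, to \nfvs, preserving the parameter $k$; since a \WTC\ problem is in particular \WTH, this suffices. Starting from a hitting set instance $(\mathcal{U},\mathcal{F},k)$ with $\mathcal{U}=\{u_1,\dots,u_m\}$ and $\mathcal{F}=\{F_1,\dots,F_\ell\}$, I will build an $\ell$-edge-coloured graph $G$ on $O(m\ell)$ vertices, so that the resulting instance $(G,k)$ satisfies $\alpha=\ell=O(|V(G)|)$, which is precisely the regime required by \nfvs.

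For the construction, I introduce one \emph{element vertex} $x_i$ for every $u_i\in\mathcal{U}$, and one \emph{auxiliary vertex} $y_{i,j}$ for every incidence $u_i\in F_j$. For each set $F_j=\{u_{i_1},\dots,u_{i_t}\}$, I let color class $G_j$ consist solely of the single cycle $x_{i_1}, y_{i_1,j}, x_{i_2}, y_{i_2,j}, \dots, x_{i_t}, y_{i_t,j}, x_{i_1}$, using a double edge in the degenerate case $|F_j|=1$ (which is permitted since the paper allows multi-edges). Thus each color class $G_j$ is a single cycle; the element vertices are shared across color classes, while each auxiliary vertex belongs to exactly one color class.

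The correctness argument has two directions. Forward is immediate: any hitting set $X\subseteq\mathcal{U}$ of size at most $k$ yields the \fvssolshort\ $\{x_i : u_i\in X\}$, since $X\cap F_j\neq\emptyset$ places some $x_i$ on the cycle $G_j$. For the reverse direction, given any \fvssolshort\ $S$ of size at most $k$, I apply an exchange argument: whenever $y_{i,j}\in S$, replace it by $x_i$ (or simply discard $y_{i,j}$ if $x_i\in S$ already). The replacement never increases $|S|$ and is sound because $y_{i,j}$ lies only on cycle $G_j$, whereas $x_i$ lies on $G_j$ as well as on every other $G_{j'}$ with $u_i\in F_{j'}$. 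After exhaustive exchange we may assume $S\subseteq\{x_1,\dots,x_m\}$, and then $\{u_i : x_i\in S\}$ is the required hitting set of size at most $k$.

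The step I expect to require the most care is the exchange argument, specifically verifying that after swapping $y_{i,j}$ for $x_i$ the resulting set still intersects every monochromatic cycle. This is straightforward in my construction, because each $G_j$ is by design a \emph{single} cycle, so intersecting it is equivalent to containing one of its vertices, and $x_i$ is one of its vertices whenever $y_{i,j}$ is. The rest is a routine sanity check that the reduction runs in polynomial time while preserving $k$, and that $\alpha=\ell\le|V(G)|$ ensures the output is a valid instance of \nfvs.
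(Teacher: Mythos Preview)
Your proposal is correct and follows essentially the same approach as the paper: reduce from {\sc Hitting Set} by turning each set $F_j$ into a single monochromatic cycle of its own colour that passes through the corresponding element vertices, with private auxiliary vertices in between, and use the obvious exchange argument to normalise a solution onto element vertices. The only cosmetic difference is that the paper pads every cycle to length $2|\mathcal{U}|$ (and describes the identification of element vertices via contracting auxiliary colour-$0$ edges) whereas you build cycles of length $2|F_j|$ directly; this does not affect the argument.
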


\begin{proof}
Given an instance $(\mathcal{U}, \mathcal{F}, k)$ of {\sc Hitting Set}, we let $\mathcal{U} = \{u_1, \ldots, u_{|\mathcal{U}|}\}$
and $\mathcal{F} = \{f_1, \ldots, f_{|\mathcal{F}|}\}$.
We assume, without loss of generality, that each element in $\mathcal{U}$ belongs to at least one set in $\mathcal{F}$.
For each $f_i \in \mathcal{F}$, $1 \leq i \leq |\mathcal{F}|$,
we create a vertex-disjoint cycle $C_i$ on $2|\mathcal{U}|$ vertices and assign all its edges color $i$.
We let $V(C_i) = \{c^i_1, c^i_2, \ldots, c^i_{2|\mathcal{U}|}\}$ and we define
$\beta(i, u_j) = c^i_{2j - 1}$, for $1 \leq i \leq |\mathcal{F}|$ and $1 \leq j \leq |\mathcal{U}|$.
In other words, every odd-numbered vertex of $C_i$ is mapped to an element in $\mathcal{U}$.
Now for every element $u_j \in \mathcal{U}$, $1 \leq j \leq |\mathcal{U}|$,
we create a vertex $v_j$, we let $\gamma(u_j) = \{c^i_{2j - 1} | 1 \leq i \leq |\mathcal{F}| \wedge u_j \in f_i\}$, and we
add an edge (of some special color, say zero) between $v_j$ and every vertex in $\gamma(u_j)$ (see Figure~\ref{fig-hardness1}). 

\begin{figure}
\centering
\includegraphics[scale=0.5]{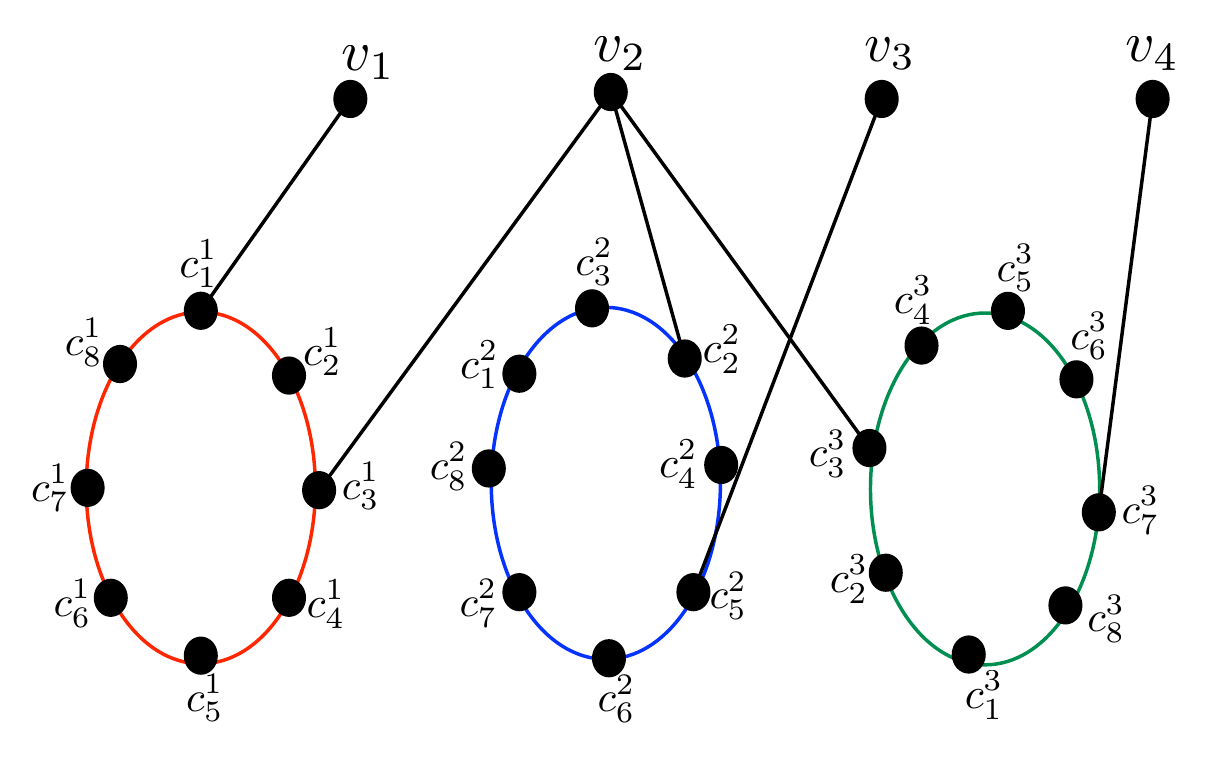}
\caption{The graph $G$ before contracting all edges colored zero for $\mathcal{U} = \{u_1, u_2, u_3, u_4\}$
and $\mathcal{F} = \{\{u_1,u_2\}, \{u_2,u_3\}, \{u_2,u_4\}\}$.}
\label{fig-hardness1}
\end{figure}

To finalize the reduction, we contract all the edges colored zero to obtain an instance $(G, k)$
of \nfvs. Note that $|V(G)| = |E(G)| = 2|\mathcal{U}||\mathcal{F}|$
and the total number of used colors is $|\mathcal{F}|$. Moreover, after contracting all $0$-colored edges,
$|\gamma(u_j)| = 1$ for all $u_j \in \mathcal{U}$.

\begin{claim}
If $\mathcal{F}$ admits a hitting set of size at most $k$ then
$G$ admits an $|\mathcal{F}|$-\fvssolshort\ of size at most $k$.
\end{claim}

\begin{proof}
Let $X = \{u_{i_1}, \ldots, u_{i_k}\}$ be such a hitting set.
We construct a vertex set $Y = \{$$\gamma(u_{i_1})$, $\ldots$, $\gamma(u_{i_k})\}$.
If $Y$ is not an $|\mathcal{F}|$-\fvssolshort\ of $G$ then $G[V(G) \setminus Y]$
must contain some cycle where all edges are assigned the same color.
By construction, every set in $\mathcal{F}$ corresponds
to a uniquely colored cycle in $G$. Hence, the contraction operations applied to
obtain $G$ cannot create new monochromatic cycles, i.e. every cycle in $G$ which does not correspond
to a set from $\mathcal{F}$ must include edges of at least two different colors.
Therefore, if $G[V(G) \setminus Y]$ contains some monochromatic cycle then $X$ cannot be
a hitting set of $\mathcal{F}$.
\qed
\end{proof}

\begin{claim}
If $G$ admits an $|\mathcal{F}|$-\fvssolshort\ of size at most $k$
then $\mathcal{F}$ admits a hitting set of size at most $k$.
\end{claim}

\begin{proof}
Let $X = \{v_{i_1}, \ldots, v_{i_k}\}$ be such an $|\mathcal{F}|$-\fvssolshort.
First, note that if some vertex in $X$ does not correspond to an element in $\mathcal{U}$,
then we can safely replace that vertex with one that does (since any such vertex belongs
to exactly one monochromatic cycle).
We construct a set $Y = \{u_{i_1}, \ldots, u_{i_k}\}$.
If there exists a set $f_i \in \mathcal{F}$ such that $Y \cap f_i = \emptyset$
then, by construction, there exists an $i$-colored cycle $C_i$ in $G$
such that $X \cap V(C_i) = \emptyset$, a contradiction.
\qed
\end{proof}

Combining the previous two claims with the fact that our reduction runs in
time polynomial in $|\mathcal{U}|$, $|\mathcal{F}|$, and $k$, completes the proof of the lemma.
\qed
\end{proof}

Notice that if we assume that $|\mathcal{U}|$ and $|\mathcal{F}|$ are linearly dependent, then
Theorem~\ref{th-nfvs-hard} in fact shows that {\sc $\OO(\sqrt{n})$-SimFVS} is \WT-hard.
However, The proof of Theorem~\ref{th-nfvs-hard} crucially relies on the fact that each cycle is
``uniquely identified'' by a separate color. In order to get around this limitation and prove
\WO-hardness of \logfvs\ we need, in some sense, to group separate sets of a
{\sc Hitting Set} instance into $\OO(\log (|\mathcal{U}||\mathcal{F}|))$
families such that sets inside each family are pairwise disjoint.
By doing so, we can modify the proof of Theorem~\ref{th-nfvs-hard} to identify all sets inside a family
using the same color, for a total of $\OO(\log n)$ colors (instead of $\OO(n)$ or $\OO(\sqrt{n})$).
We achieve exactly this in what follows.
We refer the reader to the work of Impagliazzo et al.~\cite{Impagliazzo2001367,Impagliazzo2001512} for details on
the Exponential Time Hypothesis (\textsf{ETH}).

\defparproblem{{\sc $\alpha$-Partitioned Hitting Set}}
{A tuple $(\mathcal{U}, \mathcal{F} = \mathcal{F}_1 \cup \ldots \cup \mathcal{F}_{\alpha}, k)$, where $\mathcal{F}_{i}$, $1 \leq i \leq \alpha$,
is a collection of subsets of the finite universe $\mathcal{U}$ and $k$ is a positive integer. Moreover,
all the sets within a family $\mathcal{F}_{i}$, $1 \leq i \leq \alpha$, are pairwise disjoint.}
{$k$}
{Is there a subset $X$ of $\mathcal{U}$ of cardinality at most $k$ such that
for every $f \in \mathcal{F} = \mathcal{F}_1 \cup \ldots \cup \mathcal{F}_{\alpha}$, $f \cap X$ is nonempty?}

\defparproblem{{\sc Partitioned Subgraph Isomorphism}}
{A graph $H$, a graph $G$ with $V(G) = \{g_1, \ldots, g_\ell\}$, and a coloring function $col: V(H) \rightarrow [\ell]$.}
{$k = |E(G)|$}
{Is there an injection $inj: V(G) \rightarrow V(H)$ such that for every $i \in [\ell]$,
$col(inj(g_i)) = i$ and for every $(g_i,g_j) \in E(G)$, $(inj(g_i),inj(g_j)) \in E(H)$?}

\begin{theorem}[\cite{Grohe2009218,Marx07}]\label{th-psi-hard}
{\sc Partitioned Subgraph Isomorphism} parameterized by $|E(G)|$ is \WO-hard, even
when the maximum degree of the smaller graph $G$ is three. Moreover, the problem cannot be solved in time
$f(k)n^{o({k \over \log k})}$, where $f$ is an arbitrary function, $n = |V(H)|$, and $k = |E(G)|$, unless \textsf{ETH} fails.
\end{theorem}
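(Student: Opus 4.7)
The plan is to prove the two parts of the theorem separately, since the $W[1]$-hardness can be derived by a relatively direct parameter-preserving reduction, while the sharp $f(k)n^{o(k/\log k)}$ ETH lower bound requires a much more delicate construction.

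For the $W[1]$-hardness part, I would reduce from \textsc{Multicolored Clique}, which is well known to be $W[1]$-hard parameterized by clique size $k'$. Given an instance where $V(G')$ is partitioned into color classes $V_1,\dots,V_{k'}$, I would set the host graph $H := G'$ with its given coloring and take the pattern $G$ to be the complete graph on $k'$ vertices, with $col$ assigning color $i$ to the $i$-th pattern vertex. A color-respecting injection $inj:V(G)\to V(H)$ that preserves edges then picks exactly one vertex from each $V_i$ such that all pairs are adjacent, i.e. a multicolored $k'$-clique. The parameter $k=|E(G)|=\binom{k'}{2}$ depends only on $k'$, so $W[1]$-hardness is preserved. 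To enforce maximum degree three in $G$, I would apply a standard degree-reduction gadget: every pattern vertex $v$ of color $c$ and degree $d>3$ is replaced by a path (or cycle) on $d$ fresh vertices with $d$ new colors, its $d$ incident edges being redistributed along the path; in the host graph $H$, every occurrence of color $c$ is correspondingly expanded into a matching gadget that forces all copies of $v$ to be mapped consistently. This multiplies $|E(G)|$ by a constant and yields a pattern of maximum degree three, preserving $W[1]$-hardness.

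For the ETH lower bound, I would follow Marx's approach and reduce from $3$-SAT on $n$ variables and $O(n)$ clauses, whose ETH lower bound is $2^{\Omega(n)}$. The key idea is to use a $3$-regular expander $R$ on $\Theta(n/\log n)$ vertices as the pattern graph $G$. The fact that such an expander has treewidth $\Omega(|V(R)|)$ is critical: a subgraph isomorphism query with pattern $R$ has enough ``bandwidth'' to simulate the full exponential search space of $3$-SAT, provided the host graph $H$ is built as an appropriate product-style encoding of the variable assignments along the edges of $R$, with consistency constraints coming from the clauses. If PSI could be solved in time $f(k)n^{o(k/\log k)}$, then plugging in $k=|E(R)|=\Theta(n/\log n)$ and $n=|V(H)|=2^{O(n/|V(R)|)}\cdot\mathrm{poly}(n)$, the composed running time would be $2^{o(n)}$ on the original $3$-SAT instance, violating ETH.

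The main obstacle is the ETH lower bound: concretely, designing the host-graph gadget that realizes a correct PSI encoding of $3$-SAT around an expander pattern, and tuning the parameters so that the treewidth of $G$ is (up to a $\log$ factor) the same as $|E(G)|$, which is what ultimately produces the $k/\log k$ exponent rather than a weaker $\sqrt{k}$ or $k/\log^2 k$ bound. Once this reduction is in place, combining it with the degree-reduction gadget from the first part yields the $W[1]$-hardness with maximum degree three together with the claimed tight ETH lower bound, so I would treat the expander-based gadget as the technical heart of the proof and everything else as comparatively routine.
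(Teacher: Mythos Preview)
This theorem is not proved in the paper: it is stated with citations to Grohe and Marx and then used as a black box in the subsequent reductions to {\sc $\alpha$-Partitioned Hitting Set} and \logfvs. There is therefore no ``paper's own proof'' to compare your proposal against.

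That said, your sketch is broadly aligned with how the cited results actually go. The $W[1]$-hardness via \textsc{Multicolored Clique} with $G=K_{k'}$ is the standard argument, and the ETH lower bound is indeed Marx's expander-based construction. One caution: your degree-reduction gadget for the pattern is described only at the level of ``replace high-degree vertices by paths and expand the host accordingly,'' and in the ETH part you need this reduction to preserve the $k/\log k$ exponent, not just the parameter up to a polynomial. Since the expander pattern is already $3$-regular, you do not actually need the gadget there; but if you want a single statement covering both parts with maximum degree three, you should be explicit that the pattern in the ETH reduction is already of bounded degree, rather than relying on a post-hoc gadget whose effect on the exponent you have not analyzed.
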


%We need a stronger version of Theorem~\ref{th-psi-hard}, which we prove in
%Theorem~\ref{th-boundedpsi-hard}.
We make a few simplifying assumptions:
For an instance of {\sc Partitioned Subgraph Isomorphism}, we let $H_i$ denote the
subgraph of $H$ induced on vertices colored $i$. We assume that $|H_i| = 2^t$,
for $1 \leq i \leq \ell$ and $t$ some positive integer; adding isolated vertices to each
set is enough to guarantee this size constraint. Moreover, we assume $G$ is connected
and whenever there is no edge $(g_i,g_j) \in E(G)$, then there are no
edges between $V(H_i)$ and $V(H_j)$ in $H$ (see Figure~\ref{fig-psi-instance} for an example of an instance). 
Since the {\sc PSI} problem asks for
a ``colorful'' subgraph of $H$ isomorphic to $G$ such that one vertex from $H_i$
is mapped to the vertex $g_i$, $1 \leq i \leq \ell$, it is also safe to assume
that $H_i$, $1 \leq i \leq \ell$, is edgeless.

\begin{figure}
\centering
\includegraphics[scale=0.85]{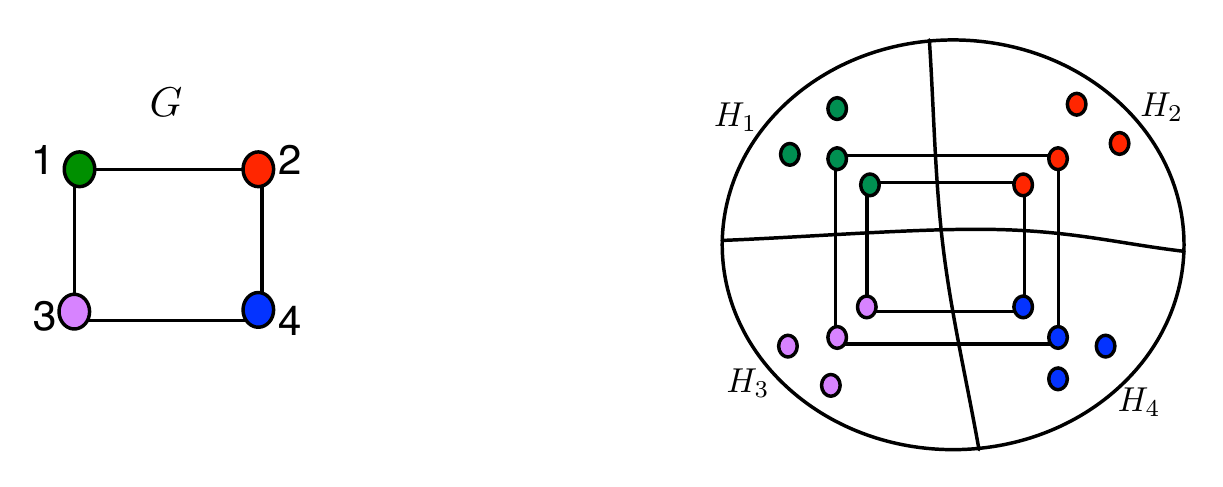}
\caption{An instance of the {\sc PSI} problem.}
\label{fig-psi-instance}
\end{figure}

\begin{theorem}\label{th-lognhs-hard}
{\sc $\OO(\log (|\mathcal{U}||\mathcal{F}|))$-Partitioned Hitting Set} parameterized by solution size is \WO-hard.
Moreover, the problem cannot be solved in time $f(k)n^{o({k \over \log k})}$, where $f$ is
an arbitrary function, $n = |\mathcal{U}|$, and $k$ is the required solution size, unless \textsf{ETH} fails.
\end{theorem}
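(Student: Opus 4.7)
The plan is to reduce from {\sc Partitioned Subgraph Isomorphism} of Theorem~\ref{th-psi-hard}, taking advantage of its \WO-hardness and ETH lower bound on instances $(H,G,col)$ with $\Delta(G) \le 3$. Let $\ell = |V(G)|$, $k = |E(G)|$, and identify each $v \in H_i$ with its binary label $\mathrm{bit}(v) \in \{0,1\}^t$ (valid because of the assumption $|H_i| = 2^t$). The goal is to build an $\alpha$-PHS instance with budget $k' = \ell + k = \OO(k)$ and $\alpha = \OO(t) = \OO(\log|V(H)|) = \OO(\log(|\mathcal{U}||\mathcal{F}|))$; since $\ell \le 2k$ when $\Delta(G)\le 3$, both the parameter bound and the ETH lower bound will transfer.

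The universe has two types of elements: a \emph{vertex element} $e^i_v$ for every $i \in [\ell]$ and $v \in H_i$, and an \emph{edge element} $e^e_{uv}$ for every $e = (g_i,g_j) \in E(G)$ and every $(u,v) \in E(H)$ with $u \in H_i,\, v \in H_j$. I would first set up two ``selection'' families: $\mathcal{F}_V$ contains, for each color $i$, the set $\{e^i_v : v \in H_i\}$, and $\mathcal{F}_E$ contains, for each $G$-edge $e$, the set of all edge elements for $e$. Both families have pairwise disjoint sets (indexed by distinct colors and distinct $G$-edges, respectively), and together they exhibit $\ell + k = k'$ pairwise disjoint sets, so any solution of size at most $k'$ must pick \emph{exactly} one vertex element per color and \emph{exactly} one edge element per $G$-edge. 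This budget tightness is the linchpin of what follows.

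The remaining families enforce, for every $e = (g_i, g_j) \in E(G)$, that the $H_i$-endpoint of the chosen edge element equals the chosen vertex label at color $i$ (and symmetrically at $j$). Writing $\mathrm{end}_i(e^e_{uv})$ for the endpoint of $(u,v)$ lying in $H_i$, for each bit $b \in [t]$, color $i$, incident edge $e \ni g_i$, and $\beta \in \{0,1\}$ I define
\[
A^{b,i,\beta,e} \;=\; \{e^i_{u'} : u' \in H_i,\, \mathrm{bit}_b(u') = \bar\beta\} \;\cup\; \{e^e_{uv} : \mathrm{bit}_b(\mathrm{end}_i(e^e_{uv})) = \beta\}.
\]
Under the tight budget $A^{b,i,\beta,e}$ is hit iff the chosen label for $i$ has $b$-th bit $\bar\beta$ or the chosen edge element for $e$ has $\mathrm{bit}_b$ of its $H_i$-endpoint equal to $\beta$; imposing this for both values of $\beta$ forces the two bits to agree, and imposing it for every $b \in [t]$ forces exact endpoint equality. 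The main obstacle is that two sets $A^{b,i_1,\beta_1,e}$ and $A^{b,i_2,\beta_2,e}$ attached to the \emph{same} edge $e=(g_{i_1},g_{i_2})$ can share an edge element, so they cannot both live in one PHS family.

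I would resolve this by orienting each $G$-edge arbitrarily and Vizing-edge-coloring $E(G)$ into four matchings $M_1,\dots,M_4$ (possible since $\Delta(G) \le 3$), then, for each $m \in [4]$, $b \in [t]$, and side $s \in \{\mathrm{src},\mathrm{tgt}\}$, bundling into a single family $\mathcal{F}_{m,b,s}$ the sets $A^{b,i,\beta,e}$ with $e \in M_m$ and $i$ the $s$-endpoint of $e$. Within a matching each color meets the $s$-side of at most one edge, so two distinct sets in $\mathcal{F}_{m,b,s}$ either come from different $(i,e)$ pairs (disjoint by color/edge indices) or share $(i,e)$ and differ only in $\beta$ (disjoint by bit value); disjointness holds in either case, producing a grand total of $2 + 8t = \OO(\log|V(H)|)$ families. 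For correctness, a PSI injection $inj$ yields the PHS solution $\{e^i_{inj(g_i)}\}_i \cup \{e^e_{inj(g_i),inj(g_j)}\}_{e=(g_i,g_j)}$; conversely, any PHS solution of size at most $k'$ is forced by the budget into the one-per-color, one-per-edge pattern, and the consistency families then force the chosen edge endpoints to agree with the chosen labels, recovering a valid PSI injection.
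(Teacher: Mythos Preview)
Your proposal is correct and follows essentially the same reduction as the paper: both build the universe from vertex and edge elements of the PSI instance, use selection sets $Q_x,Q_{x,y}$ (your $\mathcal{F}_V,\mathcal{F}_E$) to force a tight budget of $\ell+k$, impose bitwise consistency via sets indexed by ordered edge--endpoint pairs and bit position (your $A^{b,i,\beta,e}$ are precisely the paper's $U_{x,y,p}$ and $D_{x,y,p}$), and partition these using a Vizing $4$-edge-coloring of $G$ together with a source/target distinction. The only difference is that you place the two $\beta$-values in the same family (since $A^{b,i,0,e}$ and $A^{b,i,1,e}$ are bit-complementary and hence disjoint), yielding $8t+2$ families rather than the paper's $16t+1$; this is a harmless constant-factor sharpening.
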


\begin{proof}
Given an instance $(H, G, col, \ell = |V(G)|, k = |E(G)|)$ of {\sc PSI}, where $G$ has maximum degree three,
we reduce it into an instance $(\mathcal{U}, \mathcal{F} = \mathcal{F}_1 \cup \ldots \cup \mathcal{F}_{\alpha}, k' = k + \ell)$
of {\sc $\alpha$-PHS}, where $\alpha = 16 \log 2^t + 1 = 16t + 1$, $\mathcal{F}_{i}$, $1 \leq i \leq \alpha$,
is a collection of subsets of the finite universe $\mathcal{U}$, and all the
sets within a family $\mathcal{F}_{i}$ are pairwise disjoint.

We start by constructing the universe $\mathcal{U}$.
For each vertex $h^i_j \in V(H_i)$, $1 \leq i \leq \ell$ and $0 \leq j \leq 2^t - 1$, we create
an element $v^i_j$. For each edge $(h^{i_1}_{j_1}, h^{i_2}_{j_2}) \in E(H)$, we create an
element $e^{i_1,i_2}_{j_1,j_2}$ where $j_1$ is the index of the vertex in $H_{i_1}$,
$j_2$ is the index of the vertex in $H_{i_2}$, $1 \leq i_1,i_2 \leq \ell$, and $0 \leq j_1,j_2 \leq 2^t - 1$.
Note that $|\mathcal{U}| = |V(H)| + |E(H)| = \ell 2^t + |E(H)| < 4^t2\ell^2$.

We now create ``selector gadgets'' between elements corresponding to vertices and elements corresponding to edges.
For every {\em ordered} pair $(x, y)$, $1 \leq x,y \leq \ell$, such that
there exists an edge between $H_x$ and $H_y$ in $H$ (or equivalently there exists an edge $(g_x, g_y)$ in $G$),
we create $2t$ sets. We denote half of those sets by $U_{x,y,p}$ and the order half by
$D_{x,y,p}$, where $1 \leq p \leq t$. Let $\mathcal{U}_x$ denote the set of all
elements corresponding to vertices in $H_x$ and let $\mathcal{U}_{x,y}$ ($x$ and $y$ unordered in $\mathcal{U}_{x,y}$) denote the set
of all elements corresponding to edges between vertices in $H_x$ and vertices $H_y$.
We let $bit(i)[p]$, $0 \leq i \leq 2^t - 1$ and $1 \leq p \leq t$,
be the $p^{th}$ bit in the bit representation of $i$ (where position $p = 1$ holds the most significant bit).
For each $v^x_i \in \mathcal{U}_x$ and for all $p$ from $1$ to $t$,
if $bit(i)[p] = 0$ we add $v^x_i$ to set $D_{x,y,p}$
and we add $v^x_i$ to set $U_{x,y,p}$ otherwise.
For each $e^{x,y}_{i,j} \in \mathcal{U}_{x,y}$ and for all $p$ from $1$ to $t$,
if $bit(i)[p] = 0$ we add $e^{x,y}_{i,j}$ to set $U_{x,y,p}$
and we add $e^{x,y}_{i,j}$ to set $D_{x,y,p}$ otherwise. Recall that for $e^{x,y}_{i,j}$, $i$
corresponds to the index of element $v^x_i \in \mathcal{U}_x$.

\begin{figure}
\centering
\includegraphics[scale=0.6]{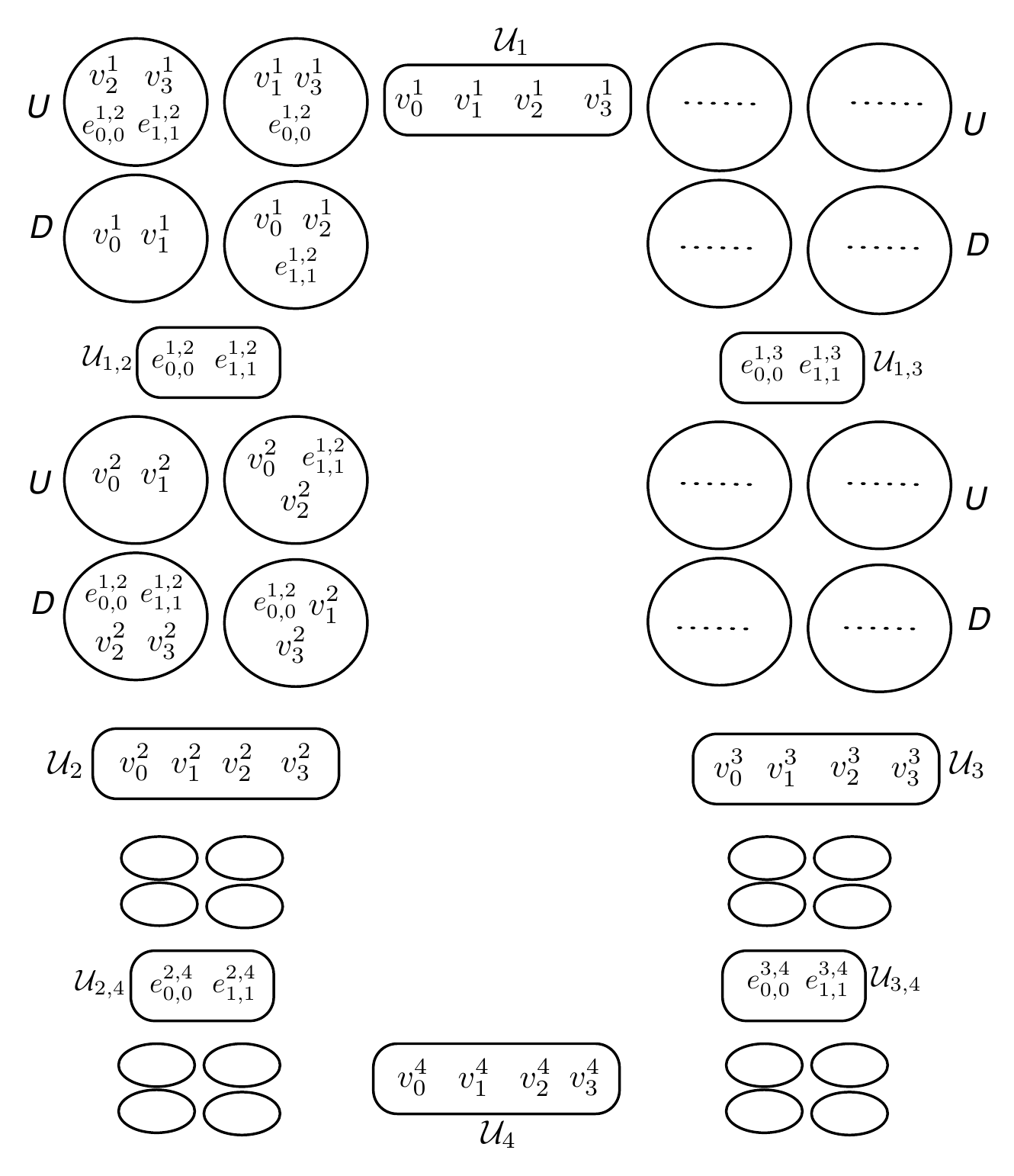}
\caption{Parts of the reduction for the {\sc PSI} instance from Figure~\ref{fig-psi-instance}. 
Rectangles represents subsets of the universe and circles represent sets in the family. }
\label{fig-hardness2}
\end{figure}

Finally, for each $x$, $1 \leq x \leq \ell$, we add a set $Q_x = \mathcal{U}_x$, and
for each (unordered) pair $x$,$y$ such that $(g_x, g_y) \in E(G)$ we add a set $Q_{x,y} = \mathcal{U}_{x,y}$.
Put differently, a set $Q_x$ contains all elements corresponding to vertices in $H_x$ and
a set $Q_{x,y}$ contains all elements corresponding to edges between $H_x$ and $H_y$.
The role of these $\ell + k$ sets is simply to force a solution to pick at least one
element from every $\mathcal{U}_x$ and one element from every $\mathcal{U}_{y,z}$, $1 \leq x,y,z \leq \ell$.
Note that we have a total of $4t|E(G)| + |E(G)| + \ell < 4t\ell^2 + \ell^2 + \ell$ sets and
therefore $16t + 1 \in \OO(\log (|\mathcal{U}||\mathcal{F}|))$.
We set $k' = |V(G)| + |E(G)| = \ell + k$. This completes the construction. 
An example of the construction for the instance given in Figure~\ref{fig-psi-instance} is provided in Figure~\ref{fig-hardness2}.

\begin{claim}
In the resulting instance $(\mathcal{U}, \mathcal{F} = \mathcal{F}_1 \cup \ldots \cup \mathcal{F}_{\alpha}, k' = k + \ell)$,
$\alpha = 16 \log 2^t + 1= 16t + 1$.
\end{claim}

\begin{proof}
First, we note that all sets $Q_x$ and $Q_{y,z}$, $1 \leq x,y,z \leq \ell$, are pairwise disjoint. Hence,
we can group all these sets into a single partition.
We now prove that $16t$ is enough to partition the remaining sets.

Since $G$ has maximum degree three, we know by Vizing's theorem~\cite{MR0180505} that
$G$ admits a proper $4$-edge-coloring, i.e. no two edges incident on the same vertex receive
the same color. Let us fix such a $4$-edge-coloring and denote it by $\beta : E(G) \rightarrow \{1, 2, 3, 4\}$.
Recall that for every ordered pair $(x, y)$, $1 \leq x,y \leq \ell$,
we define two groups of sets $U_{x,y,p}$ and $D_{x,y,p}$, $1 \leq p \leq t$.
Given any set $X_{x,y,p}$, $X \in \{U,D\}$, we define the partition to which $X_{x,y,p}$ belongs as
$part(X, x, y, p) = (\beta(g_x,g_y), p, \{U,D\}, \{x<y, x>y\})$. In other words,
we have a total of $16t$ partitions depending on the color of the edge $(g_x,g_y)$ in $G$,
the position $p$, whether $X = U$ or $X = D$, and whether $x < y$ or $x > y$
(recall that we assume $x \neq y$).

Since $\beta$ is a proper $4$-coloring of the edges of $G$, we know that if
two sets belong to the same partition they must be of the form
$X_{x_1,y_1,p}$ and $X_{x_2,y_2,p}$, where $X \in \{U,D\}$,
$x_1 \neq x_2$, $y_1 \neq y_2$, $\beta(g_{x_1},g_{y_1}) = \beta(g_{x_2},g_{y_2})$,
$x_1 < y_1$ ($x_1 > y_1$), and $x_2 < y_2$ ($x_2 > y_2$).
It follows from our construction that $X_{x_1,y_1,p} \cap X_{x_2,y_2,p} = \emptyset$;
$X_{x_1,y_2,p}$ only contains elements from $\mathcal{U}_{x_1} \cup \mathcal{U}_{x_1,y_1}$,
$X_{x_2,y_2,p}$ only contains elements from $\mathcal{U}_{x_2} \cup \mathcal{U}_{x_2,y_2}$,
and $(\mathcal{U}_{x_1} \cup \mathcal{U}_{x_1,y_1})$ $\cap$ $(\mathcal{U}_{x_2} \cup \mathcal{U}_{x_2,y_2})$ is empty.
\qed
\end{proof}

\begin{claim}
The resulting instance $(\mathcal{U}, \mathcal{F} = \mathcal{F}_1 \cup \ldots \cup \mathcal{F}_{\alpha}, k' = k + \ell)$
admits no hitting set of size $k' - 1$.
\end{claim}

\begin{proof}
If there exists a hitting set $S$ of size $k' - 1$, then either (1) there exists $\mathcal{U}_x$, $1 \leq x \leq \ell$,
such that $S \cap \mathcal{U}_x = \emptyset$ or (2) there exists $\mathcal{U}_{y,z}$, $1 \leq y,z \leq \ell$,
such that $S \cap \mathcal{U}_{y,z} = \emptyset$. In case (1), we are left with a set $Q_x$ which
is not hit by $S$. Similarly, for case (2), there exists a set $Q_{y,z}$ which is not hit by $S$.
In both cases we get a contradiction as we assumed $S$ to be a hitting set, as needed.
\qed
\end{proof}

\begin{claim}
Any hitting set of size $k'$ of the resulting instance
$(\mathcal{U}, \mathcal{F} = \mathcal{F}_1 \cup \ldots \cup \mathcal{F}_{\alpha}, k' = k + \ell)$ must pick exactly one element
from each set $\mathcal{U}_x$, $1 \leq x \leq \ell$, and exactly one element
from each set $\mathcal{U}_{y,z}$, $1 \leq y,z \leq \ell$. Moreover,
for every ordered pair $(x,y)$, $1 \leq x,y \leq \ell$, a hitting set of size $k'$ must pick
$v^x_i \in \mathcal{U}_x$ and $e^{x,y}_{i,j} \in \mathcal{U}_{x,y}$, $0 \leq i,j \leq 2^t - 1$. In other words,
the vertex $h^x_i \in V(H)$ is incident to the edge $(h^x_i, h^y_j) \in E(H)$.
\end{claim}

\begin{proof}
The first part of the claim follows from the previous claim combined with the fact
that $k' = k + \ell$. For the second part,
assume that there exists a hitting set $S$ of size $k'$ such that
for some ordered pair, $(x,y)$, $S$ includes
$v^x_{i_1} \in \mathcal{U}_x$ and $e^{x,y}_{i_2,j} \in \mathcal{U}_{x,y}$, where $i_1 \neq i_2$.
Since $i_1 \neq i_2$, then $bit(i_1)[p] \neq bit(i_2)[p]$ for at least one position $p$.
For that position, we know that $v^x_{i_1}$ and $e^{x,y}_{i_2,j}$ must both belong
to only one of $U_{x,y,p}$ or $D_{x,y,p}$.
Hence, either $U_{x,y,p}$ or $D_{x,y,p}$ is not hit by $v^x_{i_1}$ and $e^{x,y}_{i_2,j}$ when $i_1 \neq i_2$.
\qed
\end{proof}

\begin{claim}
If $(H, G, col, \ell = |V(G)|, k = |E(G)|)$, where $G$ has maximum degree three,
is a yes-instance of {\sc PSI} then $(\mathcal{U}, \mathcal{F} = \mathcal{F}_1 \cup \ldots \cup \mathcal{F}_{\alpha}, k' = k + \ell)$
is a yes-instance of {\sc $\alpha$-PHS}.
\end{claim}

\begin{proof}
Let $S$, a subgraph of $H$, denote the solution graph and let
$V(S) = \{h^1_{i_1}, \ldots, h^\ell_{i_\ell}\}$. We claim that
$S' = \{v^1_{i_1}, \ldots, v^\ell_{i_\ell}\}$ $\cup$
$\{e^{x, y}_{j_1,j_2} | (g_{x},g_{y}) \in E(G) \wedge j_1,j_2 \in \{i_1, \ldots, i_\ell\}\}$ is a hitting set of $\mathcal{F}$.
That is, the hitting set picks $\ell$ elements corresponding
to the $\ell$ vertices in $S$ (or $G$) and $k$ elements corresponding
to the $k$ edges in $G$.

Clearly, all sets $Q_x$ and $Q_{y,z}$, $1 \leq x,y,z \leq \ell$,
are hit since we pick one element from each.
We now show that all sets $U_{x,y,p}$ and $D_{x,y,p}$, $1 \leq x,y \leq \ell$ and $1 \leq p \leq t$, are also hit.
Assume, without loss of generality, that for fixed $x$, $y$, and $p$, some set $U_{x,y,p}$ is not hit.
Let $v^x_{i_1} \in \mathcal{U}_x$ be the element we picked from $\mathcal{U}_x$ and let
$e^{x,y}_{i_2,j}$ be the element we picked from $\mathcal{U}_{x,y}$.
If $U_{x,y,p}$ is not hit, it must be the case that $i_1 \neq i_2$
which, by the previous claim, is not possible.
\qed
\end{proof}

\begin{claim}
If $(\mathcal{U}, \mathcal{F} = \mathcal{F}_1 \cup \ldots \cup \mathcal{F}_{\alpha}, k' = k + \ell)$
is a yes-instance of {\sc $\alpha$-PHS} then $(H, G, col, \ell = |V(G)|, k = |E(G)|)$ is a yes-instance of {\sc PSI}.
\end{claim}

\begin{proof}
Let $S = \{v^1_{i_1}, \ldots, v^\ell_{i_\ell}\}$ $\cup$
$\{e^{x, y}_{j_1,j_2} | (g_{x},g_{y}) \in E(G) \wedge j_1,j_2 \in \{i_1, \ldots, i_\ell\}\}$ be a hitting set of $\mathcal{F}$.
Note that we can safely assume that the hitting set picks such
elements since it has to hit all sets $Q_x$ and $Q_{y,z}$, $1 \leq x,y,z \leq \ell$.
We claim that the subgraph $S'$ of $H$ with vertex set $V(S') = \{h^1_{i_1}, \ldots, h^\ell_{i_\ell}\}$ is a solution
to the {\sc PSI} instance.

By construction, there is an injection $inj: V(G) \rightarrow V(S')$ such that for every $i \in [\ell]$,
$col(inj(g_i)) = i$. In fact, $S'$ contains exactly one vertex for each color $i \in [\ell]$.
Assume that there exists an edge $(g_i,g_j) \in E(G)$ such that $(inj(g_i),inj(g_j)) \not\in E(S')$.
This implies that there exists two vertices $h^x_{i}, h^y_{j} \in V(S')$ such that
$(h^x_{i},h^y_{j}) \not\in E(S')$. But we know that there exists at least one edge, say $(h^x_{i'}, h^y_{j'})$, between
vertices in $H_x$ and vertices in $H_y$ (from our assumptions). Since $i' \neq i$, $j' \neq j$,
$v^x_{i}, v^y_{j} \in S$, and $e^{x, y}_{i,j} \not\in S$, it follows that $S$ cannot be a hitting
set of $\mathcal{F}$ as at least one set in $U_{x,y,p} \cup D_{x,y,p}$ and one set in $U_{y,x,p} \cup D_{y,x,p}$
is not hit by $S$, a contradiction.
\qed
\end{proof}

This completes the proof of the theorem.
\qed
\end{proof}

We are now ready to state the main result of this section. The proof
of Theorem~\ref{th-lognfvs-hard} follows the same steps as the
proof of Theorem~\ref{th-nfvs-hard} with one exception, i.e we reduce
from {\sc $\OO(\log (|\mathcal{U}||\mathcal{F}|))$-Partitioned Hitting Set} and use
$\OO(\log (|\mathcal{U}||\mathcal{F}|))$ colors instead of $|\mathcal{F}|$.

\begin{theorem}\label{th-lognfvs-hard}
\logfvs\ parameterized by solution size is \WO-hard.
\end{theorem}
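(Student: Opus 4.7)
The plan is to mirror the proof of Theorem~\ref{th-nfvs-hard} almost verbatim, but start from an instance of $\OO(\log (|\mathcal{U}||\mathcal{F}|))$-Partitioned Hitting Set (whose \WO-hardness is provided by Theorem~\ref{th-lognhs-hard}) and exploit the partition structure to drop the color count from $|\mathcal{F}|$ down to $\alpha \in \OO(\log (|\mathcal{U}||\mathcal{F}|))$. The crucial observation is that within each family $\mathcal{F}_i$ the sets are pairwise disjoint, so we can safely reuse a single color for all sets in that family without creating any unintended monochromatic cycles after contraction.

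\textbf{Construction.} Given an instance $(\mathcal{U}, \mathcal{F} = \mathcal{F}_1 \cup \ldots \cup \mathcal{F}_{\alpha}, k)$ of $\alpha$-PHS, for every set $f \in \mathcal{F}_i$, $1 \leq i \leq \alpha$, I would create a vertex-disjoint cycle $C_f$ on $2|f|$ vertices, assign color $i$ to all edges of $C_f$, and define a bijection $\beta_f$ sending the odd-indexed vertices of $C_f$ to the elements of $f$. For each element $u \in \mathcal{U}$, I create a single vertex $v_u$ and, using an auxiliary color $0$, connect $v_u$ by an edge to every cycle-vertex that $\beta_f$ maps to $u$ (over all $f \ni u$). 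Finally, contract all the color-$0$ edges. Because the sets within any single family $\mathcal{F}_i$ are pairwise disjoint, for every element $u$ there is at most one cycle of color $i$ that contains a copy of $u$. Consequently the contraction neither identifies two vertices lying on a common monochromatic cycle, nor creates a new monochromatic cycle spanning two cycles of the same color. Thus the only monochromatic cycles surviving in the resulting graph $G$ are exactly the cycles $\{C_f : f \in \mathcal{F}\}$.

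\textbf{Correctness.} The two directions follow the same lines as in the proof of Theorem~\ref{th-nfvs-hard}. For the forward direction, a hitting set $X$ of $\mathcal{F}$ of size at most $k$ yields the $\alpha$-\fvssolshort\ $\{v_u : u \in X\}$ in $G$: every monochromatic cycle of $G$ is some $C_f$ with $f \in \mathcal{F}_i$, and since $X$ hits $f$, some $v_u$ with $u \in X \cap f$ lies on $C_f$. For the reverse direction, if $S$ is an $\alpha$-\fvssolshort\ in $G$ of size at most $k$, any cycle-vertex in $S$ lies on exactly one monochromatic cycle and may be replaced (without increasing the size) by the unique element-vertex $v_u$ incident to it; the resulting set then corresponds to a hitting set of $\mathcal{F}$.

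\textbf{Parameter bookkeeping and the main obstacle.} The number of colors after contraction is $\alpha \in \OO(\log (|\mathcal{U}||\mathcal{F}|))$, and the total number of vertices of $G$ is $\OO(|\mathcal{U}| + \sum_{f \in \mathcal{F}} |f|)$, which is polynomial in $|\mathcal{U}|$ and $|\mathcal{F}|$. Hence $\alpha \in \OO(\log n)$ where $n = |V(G)|$, and the reduction runs in polynomial time and preserves the parameter $k$, so \WO-hardness of \logfvs\ follows from Theorem~\ref{th-lognhs-hard}. The main (really only) obstacle is the verification that the contraction does not merge two odd-indexed vertices of the same monochromatic cycle, nor stitch two same-color cycles into a single monochromatic cycle; both are ruled out precisely by the pairwise-disjointness property of $\alpha$-PHS, which is why the reduction has to start from that problem rather than from plain Hitting Set.
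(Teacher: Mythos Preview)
Your proposal is correct and follows essentially the same approach as the paper: reduce from $\alpha$-PHS, build one monochromatic cycle per set colored by its family index, attach element vertices with a dummy color, contract, and use the pairwise-disjointness within each $\mathcal{F}_i$ to argue that no new monochromatic cycles are introduced. The only cosmetic difference is that the paper makes every cycle of uniform length $2|\mathcal{U}|$ (with only the positions corresponding to elements of $f$ getting contracted), whereas you use cycles of length $2|f|$; both work, and your variant even gives a slightly tighter vertex count. One tiny imprecision: in the reverse direction an even-indexed cycle vertex is adjacent to two element vertices, not a ``unique'' one, but replacing it by either suffices.
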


\begin{proof}
Given an instance $(\mathcal{U}, \mathcal{F} = \mathcal{F}_1 \cup \ldots \cup \mathcal{F}_{\alpha}, k)$
of {\sc $\alpha$-PHS}, we let $\mathcal{U} = \{u_1, \ldots, u_{|\mathcal{U}|}\}$
and $\mathcal{F}_i = \{f^i_1, \ldots, f^i_{|\mathcal{F}_i|}\}$, $1 \leq i \leq \alpha$.
We assume, without loss of generality, that each element in
$\mathcal{U}$ belongs to at least one set in $\mathcal{F}$.

For each $f^i_j \in \mathcal{F}_i$, $1 \leq i \leq \alpha$ and $1 \leq j \leq |\mathcal{F}_i|$,
we create a vertex-disjoint cycle $C^i_j$ on $2|\mathcal{U}|$ vertices and assign all its edges color $i$.
We let $V(C^i_j) = \{c^{i,j}_1, \ldots, c^{i,j}_{2|\mathcal{U}|}\}$ and we define
$\beta(i, j, u_p) = c^{i,j}_{2p - 1}$, $1 \leq i \leq \alpha$, $1 \leq j \leq |\mathcal{F}_i|$, and $1 \leq p \leq |\mathcal{U}|$.
In other words, every odd-numbered vertex of $C^i_j$ is mapped to an element in $\mathcal{U}$.
Now for every element $u_p \in \mathcal{U}$, $1 \leq p \leq |\mathcal{U}|$,
we create a vertex $v_p$, we let
$\gamma(u_p) = \{c^{i,j}_{2p - 1} | 1 \leq i \leq \alpha \wedge 1 \leq j \leq |\mathcal{F}_i| \wedge u_p \in f^i_j\}$, and we
add an edge (of some special color, say $0$) between $v_p$ and every vertex in $\gamma(u_p)$.
To finalize the reduction, we contract all the edges colored $0$ to obtain an instance $(G, k)$
of \logfvs. Note that $|V(G)| = |E(G)| = 2|\mathcal{U}||\mathcal{F}|$
and the total number of used colors is $\alpha$. Moreover, after contracting all special edges,
$|\gamma(u_p)| = 1$ for all $u_p \in \mathcal{U}$.

\begin{claim}
If $\mathcal{F}$ admits a hitting set of size at most $k$ then
$G$ admits an $\alpha$-\fvssolshort\ of size at most $k$.
\end{claim}

\begin{proof}
Let $X = \{u_{p_1}, \ldots, u_{p_k}\}$ be such a hitting set.
We construct a vertex set $Y = \{$$\gamma(u_{p_1})$, $\ldots$, $\gamma(u_{p_k})\}$.
If $Y$ is not an $\alpha$-\fvssolshort\ of $G$ then $G[V(G) \setminus Y]$
must contain some monochromatic cycle.
By construction, only sets from the same family $\mathcal{F}_i$, $1 \leq i \leq \alpha$, correspond
to cycles assigned the same color in $G$. But since we started with an instance of
{\sc $\alpha$-PHS}, no two such sets intersect. Hence, the contraction operations applied to
obtain $G$ cannot create new monochromatic cycles.
Therefore, if $G[V(G) \setminus Y]$ contains some monochromatic cycle then $X$ cannot be
a hitting set of $\mathcal{F}$.
\qed
\end{proof}

\begin{claim}
If $G$ admits an $\alpha$-\fvssolshort\ of size at most $k$
then $\mathcal{F}$ admits a hitting set of size at most $k$.
\end{claim}

\begin{proof}
Let $X = \{v_{p_1}, \ldots, v_{p_k}\}$ be such an $\alpha$-\fvssolshort.
First, note that if some vertex in $X$ does not correspond to an element in $\mathcal{U}$,
then we can safely replace that vertex with one that does (since any such vertex belongs
to exactly one monochromatic cycle).
We construct a set $Y = \{u_{p_1}, \ldots, u_{p_k}\}$.
If there exists a set $f^i_j \in \mathcal{F}_i$ such that $Y \cap f^i_j = \emptyset$
then, by construction, there exists an $i$-colored cycle $C_i$ in $G$
such that $X \cap V(C_i) = \emptyset$, a contradiction.
\qed
\end{proof}

Combining the previous two claims with the fact that our reduction runs in
time polynomial in $|\mathcal{U}|$, $|\mathcal{F}|$, and $k$, completes the proof of the theorem.
\qed
\end{proof}

\section{Conclusion}\label{conclusion}
We have showed that \alphafvs\ parameterized by solution size $k$ is fixed-parameter tractable and
can be solved by an algorithm running in $\OO^\star(23^{\alpha k})$ time, for any constant $\alpha$.
For the special case of $\alpha=2$, we gave a faster $\OO^\star(81^{k})$ time algorithm which follows
from the observation that the base case of the general algorithm can be solved in polynomial
time when $\alpha=2$. Moreover, for constant $\alpha$, we presented
a kernel for the problem with $\OO(\alpha k^{3(\alpha + 1)})$ vertices.

It is interesting to note that our algorithm implies that \alphafvs\ can be solved
in $(2^{\OO(\alpha )})^k n^{\OO(1)}$ time. However,
we have also seen that \alphafvs\ becomes \WO-hard when $\alpha \in \OO(\log n)$.
This implies that (under plausible complexity assumptions)
an algorithm running in $(2^{o(\alpha )})^k n^{\OO(1)}$ time cannot exist.
In other words, the running time cannot be subexponential in either $k$ or $\alpha$.

As mentioned by Cai and Ye~\cite{caiye2014}, we believe that studying
generalizations of other classical problems to edge-colored graphs is well motivated
and might lead to interesting new insights about combinatorial and structural properties
of such problems. Some of the potential candidates are
{\sc Vertex Planarization}, {\sc Odd Cycle Transversal}, {\sc Interval Vertex Deletion},
{\sc Chordal Vertex Deletion}, \pfd, and, more generally, \alphahdel.

\bibliographystyle{abbrv}
\bibliography{reference}
\end{document}